\numberwithin{equation}{section}
\definecolor{brandBlue}{HTML}{0072B2}
\definecolor{brandOrange}{HTML}{E69F00}
\definecolor{brandGreen}{HTML}{009E73}
\definecolor{brandPurple}{HTML}{6F42C1}
\newtheorem{theorem}{Theorem}[section]
\newtheorem*{theorem*}{Theorem}
\newtheorem{corollary}[theorem]{Corollary}
\newtheorem{proposition}[theorem]{Proposition}
\newtheorem{lemma}[theorem]{Lemma}
\theoremstyle{definition}
\newtheorem{definition}[theorem]{Definition}
\theoremstyle{definition}
\newtheorem{example}{Example}
\theoremstyle{definition}
\newtheorem{remark}{Remark}
\theoremstyle{definition}
\newtheorem{assumption}{Assumption}
\DeclareMathOperator*{\argmin}{arg\,min}
\newcommand{\md}{\mathrm{d}}
\newcommand{\mR}{\mathbb{R}}
\newcommand{\mN}{\mathbb{N}}
\newcommand{\mE}{\mathbb{E}}
\newcommand{\mF}{\mathbb{F}}
\newcommand{\mP}{\mathbb{P}}
\newcommand{\mQ}{\mathbb{Q}}
\newcommand{\ind}{\mathbbm{1}}
\renewcommand{\epsilon}{\varepsilon}
\newcommand{\N}{\mathcal{N}}
\newcommand{\F}{\mathcal{F}}
\newcommand{\E}{\mathcal{E}}
\newcommand{\D}{\mathcal{D}}
\newcommand{\Ti}{\mathcal{T}}
\newcommand{\cZ}{\mathcal{Z}}
\newcommand{\cA}{\mathcal{A}}
\newcommand{\cW}{\mathcal{W}}
\newcommand{\cL}{\mathcal{L}}
\newcommand{\cP}{\mathcal{P}}
\newcommand{\bz}{\mathbf{z}}
\newcommand{\mt}{\top}
\newcommand{\TV}{\mathrm{TV}}
\newcommand{\Var}{\mathrm{Var}}
\newcommand{\Cov}{\mathrm{Cov}}
\newcommand{\KL}{\mathrm{KL}}
\newcommand{\targ}{\mathrm{targ}}
\newcommand{\Npre}{\mathrm{N_{pre}}}
\newcommand{\Ncor}{\mathrm{N_{cor}}}
\newcommand{\bep}{\bm{\epsilon}}
\newcommand{\tep}{\tau_0,\epsilon_{\rm score},T}
\title{ Dynamic data generation and dynamic portfolio selection: an application of a score-based diffusion model\thanks{Erhan Bayraktar is supported in part by National Science Foundation grant DMS-2507940 and by the Susan M. Smith Professorship. Fengyi Yuan is supported by the Chinese University of Hong Kong (Shenzhen) start-up fund
under UDF01004253.}}
\author{
	 Ahmad Aghapour\thanks{Department of Mathematics, University of Michigan, Ann Arbor, MI, USA. 
		\url{aghapour@umich.edu}}
	\and Erhan Bayraktar\thanks{Department of Mathematics, University of Michigan, Ann Arbor, MI, USA. 
		\url{erhan@umich.edu}}
	\and Fengyi Yuan\thanks{School of Science and Engineering, the Chinese University of Hong Kong (Shenzhen), Shenzhen, Guangdong, China.  
		\url{yuanfengyi@cuhk.edu.cn}}
}
\date{}
\begin{document}
\maketitle

\begin{abstract}
We study dynamic data generation and its application to model-free dynamic portfolio selection. Existing score-based diffusion models are typically designed to learn a static data distribution, whereas dynamic decision problems require generated trajectories that preserve the sequential information structure of the underlying process and support conditional sampling. To address this gap, we develop an adaptive score-based diffusion framework for dynamic data. Given samples from an unknown data-generating model $\mP$, the framework learns a generative model $\mQ$ through conditional score matching and generates trajectories sequentially by updating the conditioning information over time. We establish quantitative error bounds between $\mP$ and $\mQ$ under the adapted Wasserstein metric $\cA\cW_2$, which is tailored to nonanticipative dynamic problems, and show that the same adaptive sampling scheme provides conditional path generators. We then apply this dynamic data generation framework to dynamic mean-variance portfolio selection with limited historical price data. We prove stability of the dynamic mean-variance problem with respect to $\cA\cW_2$, thereby translating the generative approximation error into performance control for portfolio policies. Building on these results, we implement a policy-gradient algorithm in the learned generative environment, where adaptively sampled paths serve as training scenarios. A synthetic ARMA experiment shows that the proposed adaptive sampling scheme generates distributions close to the true data-generating process. On real market data, the proposed approach outperforms several benchmarks, including the Markowitz portfolio, the equal-weight portfolio, and the S\&P 500.
\end{abstract}
\bigskip
\noindent\textbf{Subject Classification:}
dynamic programming: optimal control,
finance: portfolio,
computer science: artificial intelligence.

\section{Introduction}

Portfolio selection, a fundamental task in financial engineering, has been a subject of both theoretical and practical interest since the classical work of \citet{Markowitz1952}. Because dynamic formulations account for intertemporal hedging and rebalancing in nonstationary markets, formulating portfolio selection as a {\it dynamic} multiperiod problem has become increasingly appealing; see \citet{Mossin1968} and \citet{Merton1971} for early work. Researchers have explored this problem under various specifications, such as continuous-time dynamic mean-variance problems, in which return models are prescribed by stochastic differential equations \citep{ZhouLi2000}, and discrete-time problems with independent intertemporal returns \citep{LiNg2000}. However, because information is incomplete, estimation errors are unavoidable, and fully modeling market uncertainty is infeasible, we generally lack access to the true model required by model-based approaches. Instead, we have access only to a limited sample of financial data. Consequently, investigating portfolio selection problems in a {\it model-free} manner has been a long-standing topic in financial engineering; see discussions of the `universal portfolio' \citep{Cover1991}, stochastic portfolio theory \citep{Karatzas2017,CSW2017}, and applications of machine learning \citep{BEL2018,GKX2020}. However, dynamic portfolio selection problems are rather involved, even in model-based settings, due to their inherent intertemporal structure \citep{CV1999}, high dimensionality \citep{BGSS2005}, and time-inconsistency \citep{LiNg2000,BasakChabakauri2010,BMZ2014}\footnote{ \citet{BasakChabakauri2010} and \citet{
BMZ2014} adopt a different solution concept---equilibrium solutions---to obtain time-consistent policies. In this paper, we focus on time-0-optimal (precommitted) policies.}. Therefore, model-free dynamic portfolio selection is a natural and important problem to explore.

To address this challenge, we leverage score-based diffusion models. These models can generate additional samples whose distributional properties are similar to those of the original data. Specifically, when the law $\mP$ of $S^{1:T}$ is unknown, we use a data-driven approach to train a model $\mQ$ that closely resembles $\mP$. We use $\mQ$ as a scenario generator for the downstream portfolio step: we sample a large number of paths $\tilde S^{1:T}\sim \mQ$ and train a reinforcement learning (RL) agent on these simulated paths. Theoretically, we show that our method achieves near-optimality under Markowitz's mean-variance (MV) criterion. The analysis rests on two key contributions: (1) a quantitative error bound between $\mP$ and $\mQ$, and (2) model stability. From an engineering perspective, the approximate model $\mQ$ consists of a score network $s_\theta$, which is the core of the score-matching technique used in generative models, and a recurrent neural network (RNN) encoder $R_\theta$, which encodes the current state of the market. These two networks are trained simultaneously using historical price data. The reinforcement learning agent is trained on data generated from $\mQ$ by adaptively evaluating $s_\theta$ and $R_\theta$. Finally, the agent, fed with the encoded and updated market state, outputs portfolio allocations whose performance is evaluated in real-world experiments. For a summary of the theoretical results, see Figure \ref{fig:roadmap}; for the workflow implementing the proposed solution for model-free dynamic portfolio selection, see Figure \ref{fig:roadmap:implementation}.

\begin{figure}[t]
\centering
\begin{tikzpicture}[
  box/.style={
    rectangle, draw, very thick, minimum width=3cm, minimum height=1cm, align=center
  },
  arrow/.style={
    ->, thick, shorten >=2pt, shorten <=2pt
  },
  dashedarrow/.style={
    ->, thick, dashed, shorten >=2pt, shorten <=2pt
  },
  node distance=1.5cm and 3cm
]
  \node[box]                        (MVP) {MV problem\\under \(\mP\)};
  \node[box, below=of MVP]         (MVQ) {MV problem\\under \(\mQ\)};
  \node[box, below=of MVQ]         (QH)  {Quadratic hedging\\under \(\mQ\)};
  \node[box, right=5cm of MVP]     (Data){Financial Data\\from \(\mP\)};
  \node[box, below=of Data]        (PM)  {Pretrained model\\\(\mQ\)};
  \node[box, right=of QH]          (PG)  {Policy gradient\\algorithm\\ (Algorithm \ref{FVI-Q})};

  \draw[arrow] (MVQ) -- node[left]{\shortstack{Model Stability\\(Corollary \ref{coro:MVstability})}} (MVP);
  \draw[arrow] (QH)  -- node[left]{Duality \eqref{MV-hedging-duality}}               (MVQ);
  \draw[arrow] (PG)  --                (QH);
  \draw[arrow] (PM)  -- node[right]{\shortstack{Adaptive sampling\\(Section \ref{sec:sampling})}} (PG);
  \draw[arrow] (Data) -- node[right]{\shortstack{Quantitative error bounds\\in \(\cA\cW_2\)\\ (Section \ref{sec:mainresults})}} (PM);
  \draw[dashedarrow] (Data) to[bend left=25] node[above]{\shortstack{Model-based approach\\(omniscient solution)}} (MVP);

\end{tikzpicture}
\caption{Model-free dynamic MV portfolio selection: theory}
\label{fig:roadmap}
\end{figure}

\begin{figure}[htbp]
\tikzset{
  box/.style = {
    rectangle, draw, very thick,
    minimum width=3cm, minimum height=1cm,
    align=center
  },
  arrow/.style = {
    ->, thick,
    shorten >=2pt, shorten <=2pt
  },
  dashedarrow/.style = {
    ->, thick, dashed,
    shorten >=2pt, shorten <=2pt
  }
}

\centering

\begin{tikzpicture}[
  every node/.style = {font=\scriptsize, align=center},
  box/.style   = {draw, rounded corners=3pt, thick},
  data/.style  = {box, top color=white, bottom color=brandBlue!10,
                  minimum width=25mm, minimum height=12mm},
  proc/.style  = {box, top color=white, bottom color=brandPurple!8,
                  minimum width=25mm, minimum height=12mm},
  pathbox/.style={box, top color=white, bottom color=brandOrange!12,
                  minimum width=27mm, minimum height=12mm},
  arrow/.style = {->, >=Latex, line width=.9pt, color=brandPurple},
  node distance=10mm
]

\node[data] (hist)  {\includegraphics[width=10mm]{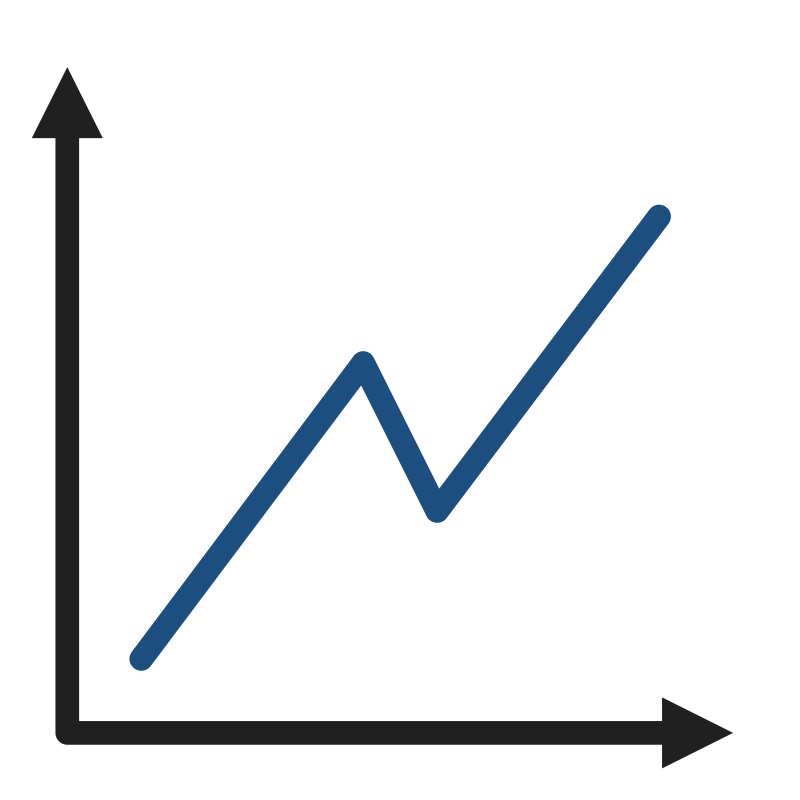}\\[-1pt] Historical\\Prices\\$S^{1:T}$};

\node[proc, right=of hist] (rnn) {RNN\\Encoder $R_\theta$};

\node[proc, right=of rnn] (rl) {%
  \includegraphics[width=10mm]{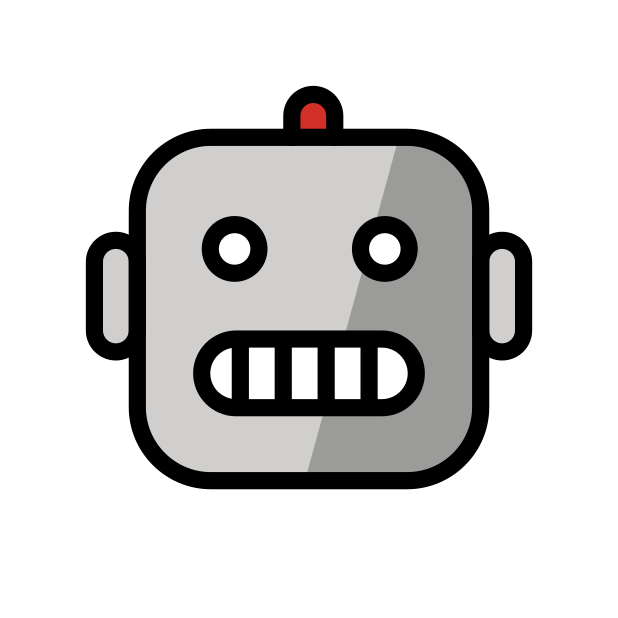}\\[-1pt]
  RL Agent\\(policy $\pi_\beta$)};

\node[data, top color=white, bottom color=brandGreen!10,
      right=of rl] (alloc) {Optimal\\Weights\\$a_t$};

\node[proc, below=of rnn] (diff) {Diffusion\\Model $s_\theta$};

\node[pathbox, right=of diff] (paths) {%
  \includegraphics[width=10mm]{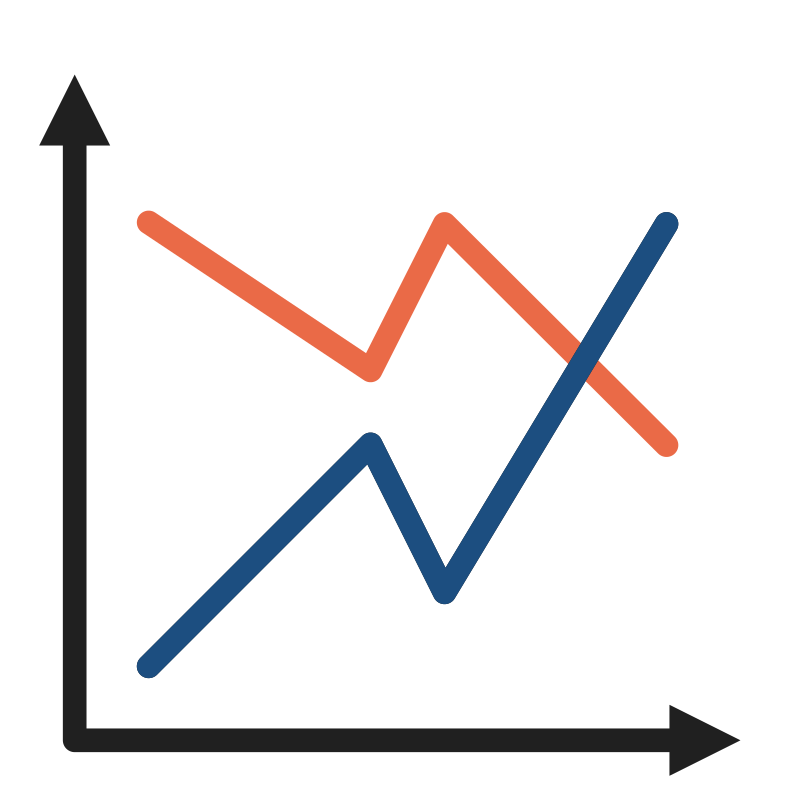}\\[-1pt]
  Sampled Paths\\$\tilde S^{1:T}$};

\draw[arrow] (hist) -- (rnn);          
\draw[arrow] (rnn)  -- (diff);         
\draw[arrow] (diff) -- (paths);        
\draw[arrow] (paths) -- (rl);          
\draw[arrow] (rnn) -- (rl);
\draw[arrow] (rl)    -- (alloc);       

\end{tikzpicture}




\caption{Model-free dynamic MV portfolio selection: implementation}
\label{fig:roadmap:implementation}
\end{figure}

\subsection{\texorpdfstring{  Main results and contributions}{Main results and contributions}}

{ 
Score matching has become a standard approach to training generative models since the work of \citet{hyvarinen2005score}; see, for example, \citet{SWMG2015} and \citet{SE2019}. The score-based framework of \citet{Song2021b}, which unifies score-based generative models and diffusion probabilistic models \citep{ho2020denoising}, provides a stochastic sampling mechanism and admits quantitative approximation guarantees; see \citet{tang2025scorebaseddiffusionmodelsstochastic} for a survey and \citet{gao2025wasserstein} for recent Wasserstein bounds. This methodology has been used in image and video generation \citep{Rombach2022,Melnik2024}, molecular simulation \citep{HSVW2022}, and health applications \citep{alphafold}. In finance, \citet{factormodel} recently studied score-based generation for high-dimensional return data with low-dimensional structure. These static models provide the starting point for our construction, and Section \ref{subsec:static-diffusion} reviews the formal diffusion framework.
}

The static construction does not directly address the decision problem studied here. Dynamic portfolio selection is driven by a time series $X=(X^1,X^2,\cdots,X^T)=X^{1:T}$, and the generated trajectories must preserve the information available at each decision time. Treating the whole path as a vector in $\mR^{dT}$ ignores this nonanticipative structure. { This distinction is crucial, as value functions in dynamic decision problems can be unstable under ordinary Wasserstein perturbations of the path law; see \citet{BBBE2020} and Example \ref{exm:instability}}. We therefore measure generative error by the adapted Wasserstein metric $\cA\cW_2$, which compares probability laws while respecting the nonanticipative information structure. A second requirement is conditional sampling. At time $t$, the policy-gradient algorithm needs samples from the future conditional law given the observed history $X^{1:t}=x^{1:t}$, not only unconditional samples from the joint law.

{ 
To meet these requirements, we train and sample adaptively and sequentially. At the conceptual level, the stage-$t$ diffusion is initialized from the clean next-state law $\mP_{x^{1:t}}$, conditional on the observed history $X^{1:t}=x^{1:t}$. To account for the possible degeneracy of the data distribution, we use an early-stopping technique as in \citet{chen2023sampling}. In the formal analysis, we replace the clean history by a noisy history, train against a posterior-mixture target score function, and stop the reverse process early, at time $\Ti-\tau_0$; Section \ref{subsec:conditional-score-matching} gives the precise conditional VP-SDE. Given the trained scores, the sampler generates a path recursively: it first samples $y^1$ by running the reverse VP-SDE from noise, and then samples $y^{t+1}$ by running the reverse equation with conditional input $y^{1:t}$ and the corresponding conditional score approximation. Algorithm \ref{adaptive-sampling} gives the full procedure. The resulting law $\mQ^{\Ti-\tau_0}$ is close to the data-generating law $\mP$ in the metric relevant for dynamic decisions:
}
\begin{theorem*}[Theorem \ref{thm:AWbounds} in the main text]
Denote by $\mQ^{\Ti-\tau_0}$ the output law of the early-stopped adaptive sampler. Then $\cA\cW_2(\mP,\mQ^{\Ti-\tau_0})$ is bounded explicitly in terms of the early-stopping parameter $\tau_0$, the reverse diffusion horizon $\Ti$, and the score-matching error $\epsilon_{\rm score}$. The bound converges to zero under a suitable scaling with $\tau_0\downarrow0$, $\Ti\to\infty$, and $\epsilon_{\rm score}\downarrow0$. 
\end{theorem*}
{ The same construction also provides conditional sampling. Given a conditioning history, we use it as the input and run the corresponding reverse SDE under the learned model. The proof combines conditional total-variation estimates, the clean-to-noisy early-stopping comparison, assumptions on the joint time-series law, and tail estimates for the reverse SDE; see Section \ref{sec:mainresults} and the appendices.}

The generative guarantee alone is not sufficient for portfolio selection or other dynamic decision problems. The learned law $\mQ$ is only a surrogate for the unknown market law $\mP$, and the portfolio obtained under $\mQ$ is useful only if its performance transfers back to $\mP$. We establish this transfer through value stability and policy performance bounds for the dynamic mean-variance problem:
\begin{theorem*}[Theorem \ref{DPP-stability} and Corollaries \ref{coro:MVstability} and \ref{coro:regular-policy-transfer} in the main text]
Denote by $v(\mP)$ and $v(\mQ)$ the optimal values of the mean-variance problem under $\mP$ and $\mQ$, respectively. Then, under appropriate technical conditions, $|v(\mP)-v(\mQ)|\leq C\cA\cW_2(\mP,\mQ)$. Moreover, if a sufficiently regular feedback policy is $\epsilon$-optimal under the surrogate model $\mQ$, then it is $(\epsilon+C\cA\cW_2(\mP,\mQ))$-suboptimal under the data-generating model $\mP$.
\end{theorem*}
The proof uses the adapted nature of $\cA\cW_2$ and a duality argument from \citet{LH2007} that reduces the dynamic mean-variance objective to a quadratic-hedging problem. This reduction also restores a dynamic-programming structure that is not available directly for the time-inconsistent mean-variance criterion. The policy-transfer result then combines value stability with a fixed-policy stability estimate for regular feedback policies.

It remains to solve the mean-variance problem under the implicit model $\mQ$. To this end, we use $\mQ$ as a generative environment and implement a policy-gradient algorithm; see Algorithm \ref{FVI-Q}. This application requires two adaptations of standard policy-gradient methods. First, financial portfolio selection lacks a natural episodic environment that can be reset repeatedly without relying on long historical samples that may be stale. The diffusion sampler supplies fresh paths conditional on the current market state. Second, standard policy-gradient methods rely on dynamic programming, whereas the mean-variance objective violates the Bellman principle \citep{BasakChabakauri2010}. We use the duality result of \citet{LH2007} to solve the associated quadratic-hedging problem and then recover the mean-variance portfolio through the dual relation.

Figure \ref{fig:roadmap:implementation} summarizes the implemented pipeline. Because the conditioning histories $h^{1:t}$ and $y^{1:t}$ grow with $t$, adaptive sampling would otherwise require history inputs in $\mR^{dt}$. We therefore use an RNN encoder to map the history into a fixed-dimensional representation and train this encoder jointly with the score network; see \citet{rasul2021,yan2021}. Section \ref{sec:exp} evaluates the method on synthetic and real data. In the synthetic ARMA experiment, the generated distribution closely matches the oracle distribution in terms of conditional moments and dependence diagnostics. Across the two real-data experiments, both GenMarkowitz and GenTD3 deliver higher annualized returns and stronger risk-adjusted performance than the S\&P~500, an equal-weight portfolio, and a history-based Markowitz portfolio. The FF-30 experiment further indicates that the proposed methods remain effective as the asset dimension increases from 10 to 30 after modest architectural adjustments, including an increase in the score-network base width.

{ 
To conclude, the main contribution of this work is a dynamic data-generation framework with decision-relevant guarantees. More specifically, the paper makes the following three contributions:
\begin{enumerate}
\item We develop an adaptive score-based diffusion framework for dynamic data distributions. The approach trains the model and samples from it adaptively, fully respects the information structure of the data, and provides an explicit $\cA\cW_2$ error bound.

\item We show why the $\cA\cW_2$ guarantee is the relevant error notion for downstream dynamic portfolio decisions. For dynamic mean-variance selection, we prove value stability and policy performance transfer: an approximately optimal regular feedback policy under the learned law $\mQ$ remains near-optimal under the data-generating law $\mP$, with the performance gap controlled by $\cA\cW_2(\mP,\mQ)$.

\item We evaluate conditional-generation fidelity against an exact oracle in the synthetic experiment and downstream portfolio performance in the real-data experiments.
\end{enumerate}}

\subsection{\texorpdfstring{  Related work}{Related work}}

{ 
Our work is related to three streams of research: simulation-based portfolio optimization, generative models for financial time series, and learning methods for dynamic portfolio decisions.

\paragraph{Simulation and resampling for portfolio selection.}
Simulation has long been used to construct scenarios for multistage decision problems and financial risk analysis \citep{hoyland2001generating,glasserman2004}. In portfolio optimization, resampling methods address estimation error in mean-variance inputs by averaging decisions over bootstrap samples \citep{michaud1998,Efron1993}. A complementary line of work fits structured return models, simulates fresh paths, and optimizes portfolios over the resulting scenarios. Examples include \citet{AngBekaert2002} for regime-switching return models, \citet{PaolellaPolakWalker2021,SahamkhadamStephanOstermark2018,Guastaroba2009} for non-Gaussian GARCH and EVT-copula models, and \citet{Lim2004} for stochastic-parameter diffusion models in portfolio selection. Simulation also plays a central role in approximate dynamic programming and information-relaxation bounds for stochastic dynamic programs \citep{BrownSmith2014}. These methods are effective when the structured or parametrized data-generating mechanism or resampling scheme fits the real data well, a condition that is difficult to satisfy in practice. In contrast, the present paper proposes a {\it data-driven} approach to learn a generator and studies its error under a metric tailored to dynamic decisions.

\paragraph{Data-driven financial scenario generators.}
A large literature focuses on data-driven generative models for financial applications. The TimeGAN and QuantGAN papers show that adversarial generators can reproduce stylized features such as heavy tails, volatility clustering, and cross-sectional dependence \citep{yoon2019timegan,wiese2020quantgan}. Representative specialized GAN papers include \citet{marti2020corrgan} on correlation matrices, \citet{liao2024sig} on conditional path generation under a signature metric, \citet{VolGAN,VolGANhedging} on implied-volatility surfaces and hedging applications, and \citet{cont2022tailgan} on tail-risk scenarios. These papers demonstrate the value of flexible generators for financial scenario construction. Their guarantees, however, are usually formulated in terms of distributional realism or scenario quality, rather than a bound that transfers a learned dynamic law to the performance of downstream dynamic decisions.

Score-based diffusion models, which provide the foundation for our paper, offer a different generative paradigm. Denoising diffusion probabilistic models and score-SDE models supply the basic methodology \citep{ho2020denoising,Song2021b}, while recent theory gives convergence guarantees in total variation, Wasserstein distance, or KL divergence under structural assumptions on the data law and score error \citep{de2022convergence,chen2023sampling,gao2025wasserstein}. Time-series diffusion models have been developed for probabilistic forecasting, imputation, and synthetic sequence generation. Representative works include \citet{rasul2021} on TimeGrad, \citet{yan2021} on ScoreGrad, \citet{tashiro2021csdi} on CSDI, \citet{shen2023non} on TimeDiff, \citet{kollovieh2023tsdiff} on TSDiff, and \citet{huang2024ftsdiff} on FTS-Diffusion. Finance-focused diffusion works include \citet{WangVentre2024} on denoising for financial time series, \citet{Kubiak2024} on synthetic correlation matrices, \citet{factormodel} on factor-structured high-dimensional returns, \citet{TakahashiMizuno2024} on wavelet-based synthetic financial time series, and \citet{KimChoiKim2025} on price-dependent noising mechanisms based on geometric Brownian motion. These works focus on forecasting accuracy and imputation quality, with most of the results being experimental. In contrast to the static framework, our analysis controls a sequentially generated law in $\cA\cW_2$, the metric that preserves the information structure used by downstream policies.

\paragraph{Conditional diffusion models for dynamic data.}
Conventional conditional diffusion models generate a target from an externally supplied condition. Classifier-free guidance is a representative mechanism for conditional generation \citep{HoSalimans2022}, and a recent study analyzes the statistical behavior of such conditional diffusion models \citep{FuYangWangChen2024}. In time-series applications, conditional diffusion typically uses past observations or partially observed entries as conditioning information for forecasting or imputation \citep{rasul2021,yan2021,tashiro2021csdi,shen2023non}. For stochastic dynamics, \citet{LiuChenXiuZhang2024} study a training-free conditional diffusion method for learning stochastic differential equations, and \citet{gao2025generatingsolutionpathsmarkovian} generate Markovian SDE solution paths by composing conditional diffusion kernels in an autoregressive fashion and controlling the error in KL divergence. Motivated by the widely acknowledged nonstationarity of financial data, our setting requires us to consider the time-series joint law rather than a single conditional kernel. Thus, the conditioning history is itself generated and then reused as input for future kernels. When the history is externally fixed, our result implies an average $\cW_2$ error between conditional kernels; see Corollary \ref{coro:conditional-error}. This paper thus goes beyond existing conditional diffusion models by analyzing the error of a {\it recursively generated path law} in $\cA\cW_2$.

\paragraph{Reinforcement learning for dynamic portfolio selection.}
Our portfolio application is also connected to dynamic mean-variance optimization and reinforcement learning. \citet{TamarDiCastroMannor2016} extend temporal-difference methods to estimate the variance of the reward-to-go under a fixed policy and discuss how this policy-evaluation step may serve as a component of variance-aware policy-optimization methods. Recent studies and surveys document progress as well as challenges in sample efficiency, trading frictions, and risk-aware exploration \citep{jiang2017deep,cong2021alphaportfolio,RLReview}. The role of these tools in our paper is specific. We use a learned distribution $\mQ$ as a generative environment, solve the precommitted portfolio problem under $\mQ$, and prove value stability and policy performance transfer under the data-generating law $\mP$. This links the quality of a learned dynamic simulator to downstream mean-variance performance, rather than treating the simulator and the portfolio-learning method as separate components.
}

\subsection{Organization of the paper}

{ The rest of this paper is organized as follows. In Section \ref{sec:sampling}, we review the static score-based diffusion framework, explain its limitations for dynamic data, and present our adaptive training and sampling method. Section \ref{sec:mainresults} provides a quantitative $\cA\cW_2$ error bound between the output distribution and the data distribution. In Section \ref{stability}, we turn to a specific mean-variance portfolio selection problem and establish model stability in terms of the adapted Wasserstein metric (i.e., the value function is Lipschitz in the adapted Wasserstein metric). Section \ref{sec:app} implements the adaptive sampler and evaluates its generative fidelity and downstream portfolio performance, using a TD3-style policy-gradient algorithm. Section \ref{conclusion} concludes the main text. The appendices include technical discussions and proofs of the theoretical results.}

We close this section by introducing notation used frequently throughout the paper.

\subsection{Frequently used notation}\label{subsec:notation}

{ 
For convenience, Table~\ref{tab:notation-quick-reference} distinguishes
physical time from diffusive time and summarizes the principal objects used
throughout the paper. Precise definitions are provided below.
}

\begin{table}[!htbp]
 
\centering
\footnotesize
\renewcommand{\arraystretch}{1.00}
\begin{tabular}{@{}p{0.29\textwidth}p{0.65\textwidth}@{}}
\toprule
Symbol & Meaning \\
\midrule
$t\in\{1,\ldots,T\}$; $T$; $d$
& Physical-time index, path length, and observation dimension. \\

$\tau\in[0,\Ti]$; $\tau_0$
& Diffusive time, maximum diffusion horizon, and early-stopping level. \\

$X^{1:t},x^{1:t}$; $Y^{1:t},y^{1:t}$
& Clean and generated histories (random variables and their realizations);
portfolio analogues are $S^{1:t},s^{1:t}$ and
$\tilde S^{1:t},\tilde s^{1:t}$. \\

$H^{1:T},h^{1:t}$
& Gaussian-smoothed data path and a realized noisy history. \\

$\mP$; $\mP^{\tau_0}$; $\mQ^{\Ti-\tau_0}$
& Clean path law, noisy path law, and early-stopped sampler law. \\

$\mP_{1:t}$; $\mP_{x^{1:t}}$;
$\mP^{\tau_0}_{h^{1:t}}$
& Clean-history marginal and clean- and noisy-history conditional
next-state laws. \\

$\Pi_{h^{1:t}}^{\tau_0,t+1}$
& Clean next-state posterior given a noisy history; distinct from
$\mP^{\tau_0}_{h^{1:t}}$, the noisy next-state law. \\

$p^{\tau_0}_{t+1}(\tau,\cdot\mid h^{1:t})$;
$\nabla_x\log p^{\tau_0}_{t+1}$
& Forward-VP conditional density and its score. \\

$s_\theta^{t+1}$; $\epsilon_{\rm score}$
& Learned conditional score and prescribed score-matching error level. \\

$\cW_2$; $\Pi_{\rm bc}(\mP,\mQ)$;
$\cA\cW_2(\mP,\mQ)$
& Wasserstein distance, bicausal couplings, and adapted Wasserstein
distance. \\
\bottomrule
\end{tabular}
\caption{Reference table for frequently used notation.}
\label{tab:notation-quick-reference}
\end{table}

In this work, the data take the form of time series of length $T\in \mN$ and dimension $d\in \mN$. We denote by $t\in \{1,2,\cdots, T\}$ the time index. For general theoretical results, we use $X^{1:t}:=(X^1, X^2, \cdots, X^t)$ ($Y^{1:t}:=(Y^1,Y^2,\cdots, Y^t)$) to denote random variables sampled from the data distribution (the approximating model, respectively), and $x^{1:t}$ ($y^{1:t}$) to denote their realizations. For portfolio selection problems, we use $S^{1:t}$ ($\tilde S^{1:t}$) and $s^{1:t}$ ($\tilde s^{1:t}$) instead. For the diffusion model, we use $\tau\in [0,\Ti]$ to denote the {\it diffusive time}, i.e., the time parameter of the forward and reverse SDEs, with $\Ti>0$ denoting the maximum diffusive time. { For the constant schedule $\beta\equiv2$ used in the theoretical analysis,} we write
\[
h_1(\tau):=e^{-\tau},\qquad h_2(\tau):=1-e^{-2\tau}.
\]
{ For the data distribution $\mP$ defined on the time-series space $\mR^{dT}$, $\mP_{1:t}$ is the joint distribution of $(X^1, X^2, \cdots, X^t)$, so $\mP_{1:T}=\mP$. Moreover, $\mP_{x^{1:t}}$ is the conditional distribution of $X^{t+1}$ given the clean data history $X^{1:t}=x^{1:t}$. Because we do not assume that the clean data distribution has a density, we implement an {\it early-stopping} technique. More specifically, we fix a small $\tau_0>0$ throughout the paper and consider the noisy path
\[
H^{1:T}:=h_1(\tau_0)X^{1:T}+\sqrt{h_2(\tau_0)}Z^{1:T},
\qquad Z^{1:T}\sim \N(0,I_{dT\times dT}),
\]
with $Z^{1:T}$ independent of $X^{1:T}\sim\mP$. We denote by $\mP^{\tau_0}:=\cL(H^{1:T})$ the resulting full noisy path law. Thus $\mP^{\tau_0}_{h^{1:t}}$ denotes the regular conditional law of $H^{t+1}$ given $H^{1:t}=h^{1:t}$ under this full noisy law. In practice, early stopping means that we stop the reverse-time process at $\Ti-\tau_0$ and compare the output measure $\mQ^{\Ti-\tau_0}$ with $\mP^{\tau_0}$; see Theorem \ref{thm:AWbounds}. For a noisy history $h^{1:t}$, define the posterior conditional kernel
\[
\Pi_{h^{1:t}}^{\tau_0,t+1}:=\cL(X^{t+1}\mid H^{1:t}=h^{1:t}).
\]
For $\tau\in[\tau_0,\Ti]$, let $p^{\tau_0}_{t+1}(\tau,\cdot|h^{1:t})$ be the density of the forward VP process at diffusive time $\tau$ whose random initial condition has law $\Pi_{h^{1:t}}^{\tau_0,t+1}$. Define $M^{\tau_0}_p(h^{1:t}):=(\mE_{\Pi_{h^{1:t}}^{\tau_0,t+1}}[|X^{t+1}|^p])^{1/p}$, for $p\geq 1$, and $E^{\tau_0}_c(h^{1:t}) := \mE_{\Pi_{h^{1:t}}^{\tau_0,t+1}}[e^{c|X^{t+1}|}]$ for $c>0$. We retain $M_p(x^{1:t})$ and $E_c(x^{1:t})$ for the corresponding clean conditional moments under $\mP_{x^{1:t}}$. By convention, $\mP$ denotes the data-generating model and hence the distribution of $X^{1:T}$ ($S^{1:T}$), while $\mQ$ denotes the alternative model and hence the distribution of $Y^{1:T}$ ($\tilde S^{1:T}$).}

To present the quantitative error bound for the score-based diffusion model, we use the adapted Wasserstein metric. More specifically, for two probability measures $\mP$ and $\mQ$ on $\mR^{dT}$, a coupling $\pi$ between $\mP$ and $\mQ$ is said to be {\it causal in $x$} if for any $t\in\{1,2,\cdots,T\}$,
\begin{align}
    \pi(Y^{1:t}\in \md y^{1:t}|X^{1:T}=x^{1:T}) = \pi(Y^{1:t}\in \md y^{1:t}|X^{1:t}=x^{1:t}).
\end{align}
A coupling $\pi$ is said to be {\it bicausal} if it is causal in both $x$ and $y$ \citep[cf.][]{BBLZ2017,BH23}. Define $\Pi_{\rm bc}(\mP,\mQ)$ to be the set of all bicausal couplings between $\mP$ and $\mQ$. With the choice of transport cost $|x^{1:T}-y^{1:T}|:=(\sum_{t=1}^T|x^t-y^t|^2)^{1/2}$, we define the {\it adapted Wasserstein metric} as follows:
\begin{align}
\cA\cW_2(\mP,\mQ)= \bigg( \inf_{\pi \in \Pi_{\rm bc}(\mP,\mQ)} \int_{\mR^{dT}\times \mR^{dT}} |x^{1:T}-y^{1:T}|^2 \pi(\md x^{1:T},\md y^{1:T})\bigg)^{1/2}.
\end{align}

{ Throughout the paper, we use $L_{\tau_0}$ and $C_{\tau_0}$ to denote generic constants that may vary from line to line. They may depend on $\tau_0$, but are independent of $\Ti$, $\epsilon_{\rm score}$ and $T$. To focus on the trade-off among score matching, noisy initialization, and early stopping, we explicitly write out the dependence on $\tau_0$, $\Ti$ and $\epsilon_{\rm score}$ below, while hiding all dependence on $d$ and the moments of the data distribution. To better understand the scaling with respect to the sequence length, we also retain the dependence on $T$.
}

\section{\texorpdfstring{  Dynamic data generation via diffusion models}{Dynamic data generation via diffusion models}}\label{sec:sampling}

{ 
\subsection{Background on static diffusion models}\label{subsec:static-diffusion}

In the static setting, a score-based diffusion model starts from a data distribution and gradually transports it to a tractable reference distribution. The version used in this paper is the variance-preserving SDE (VP-SDE). Let $p_{\rm data}$ be a probability distribution on $\mR^d$ from which the data are drawn. The VP-SDE is
\begin{align}
\begin{cases}
   & \md X_\tau
=
-\tfrac12\,\beta(\tau)\,X_\tau\,\md\tau
+\sqrt{\beta(\tau)}\,\md B_{\tau},\\
&X_0 \sim p_{\rm data}
\end{cases}\label{vp-sde}
\end{align}
where $\beta(\tau)=\beta_{\min}+(\beta_{\max}-\beta_{\min})\tau/\Ti$ is a prescribed noise schedule. Although $p_{\rm data}$ is unknown, the law of \eqref{vp-sde} converges to the invariant distribution $\N(0,I_{d\times d})$ as $\tau\to\infty$. Sampling from the data distribution is then reduced to solving the reverse-time SDE
\begin{align}
\begin{cases}
&\md Y_\tau = \bigg(\frac{1}{2}\beta(\Ti-\tau)Y_\tau + \beta(\Ti-\tau)\nabla \log p(\Ti-\tau,Y_\tau)\bigg)\md \tau + \sqrt{\beta(\Ti-\tau)}\md \bar B_\tau, \\
&Y_0 \sim \N(0,I_{d\times d}),
\end{cases}
\label{vp-sde-reversed}
\end{align}
where $p(\tau,\cdot)$ denotes the density of $X_\tau$. Here, $B$ and $\bar B$ are $d$-dimensional standard Brownian motions driving the forward and reverse constructions, respectively. When the reverse SDE is simulated as a separate generative process, they may be taken independent. Because $p_{\rm data}$ is unknown, the score $\nabla \log p$ is estimated from data. Under suitable score-matching conditions, an accurate score estimate implies that, for sufficiently large $\Ti$, the law of $Y_\Ti$ is close to $p_{\rm data}$. See \citet{tang2025scorebaseddiffusionmodelsstochastic} for a survey and \citet{gao2025wasserstein} for recent Wasserstein bounds.

This static construction provides the diffusion mechanism used below. We next explain why generating an entire dynamic path as a single static vector does not preserve the information structure required by sequential decision problems.
}

{ 
\subsection{Limitations of existing frameworks}

To motivate our proposed adaptive training and sampling scheme for generating dynamic data, we first discuss the inadequacy of existing results, especially when they are applied to financial (time-series) data and {\it sequential} decision-making.

\subsubsection{Assumptions on data distribution}

The static framework in Section \ref{subsec:static-diffusion} can be applied to time-series data by treating a complete path as a vector in $\mR^{dT}$, in which case all existing results (including algorithms and error bounds) apply directly. Indeed, recent developments provide comprehensive error estimates in terms of total variation distance \citep{chen2023sampling} and Wasserstein distance \citep{chen2023sampling,gao2025wasserstein}, both of which yield convergence guarantees for a sufficiently large diffusive horizon $\Ti$ and a sufficiently small score-matching error $\epsilon_{\rm score}$. These results usually make structural assumptions on the {\it data distribution}. For example, obtaining non-trivial total variation bounds requires the data distribution to admit a density \citep[see][]{Song2021b,chen2023sampling}. On the other hand, Wasserstein bounds rely on either strong log-concavity of the density function \citep{tang2025scorebaseddiffusionmodelsstochastic,gao2025wasserstein} or compact support of the data distribution \citep{chen2023sampling}. However, in the dynamic setting, there are practical scenarios in which $X^{1:T}$ has no joint density function on $\mR^{dT}$ and is not compactly supported.
\begin{example}\label{exm:data:assumption}
Let $X^t\in\mR^d$ be the vector of log prices of $d$ assets. Suppose
\[
X^1\sim \N(\mu_1,\Sigma_1),\qquad
X^{t+1}=a_tX^t+A_t\epsilon^{t+1},\quad t=1,\ldots,T-1,
\]
where $\Sigma_1$ is positive definite, $a_t\in \mR$, the innovations $\epsilon^2,\ldots,\epsilon^T$ are i.i.d.\ $\N(0,I_m)$ and independent of $X^1$, and $A_t\in\mR^{d\times m}$ with $m<d$. This represents a market in which each period introduces only a few systematic shocks, which are then propagated across many assets over time. Unlike a conventional static factor model with full-rank idiosyncratic noise at each date, the low dimensionality here arises from intertemporal dependence and low-dimensional innovations. Note that, in this example, reducing the number of assets to achieve full support is not feasible because the coefficients $a_t$ and $A_t$, as well as the dimension $m$, are unknown. The joint log-price path $X^{1:T}$ is Gaussian and supported on an affine subspace of dimension at most $d+m(T-1)<dT$, so it has no Lebesgue density on $\mR^{dT}$. It also has unbounded support and exponential moments of all orders. Because Appendix \ref{app:assumption:stheta}, particularly Example \ref{exm:posterior:gaussian}, covers jointly Gaussian laws with possibly singular covariance matrices, this dynamic low-dimensional log-price model satisfies the sufficient conditions used to verify Assumption \ref{assumption:P}.
\end{example}

In recent work, \citet{gao2025generatingsolutionpathsmarkovian} propose generating SDE solution paths (on a discrete time grid) via a {\it conditional diffusion model} in an autoregressive fashion (see their Equation~(8) and Remark~1). They obtain an error bound in terms of KL divergence. Because of the special structure of KL divergence, this bound inherently encodes the distance between conditional distributions (Equation~(12) of \citet{gao2025generatingsolutionpathsmarkovian}). Their autoregressive sampling scheme is justified by its numerical feasibility (Remark~1 therein). Although promising for generating discretely sampled SDEs, their results are not directly applicable to more general discrete-time dynamic data because $\mathrm{KL}(\mu|\nu)=\infty$ when $\mu$ and $\nu$ have different supports. Our work complements theirs by showing that this adaptive sampling method also provides error bounds in terms of the adapted Wasserstein distance. Thus, our results illustrate that adaptive sampling is critical for generating dynamic data distributions (possibly beyond SDE solutions) and remains robust when the data distribution is degenerate.

\subsubsection{Instability with respect to ordinary Wasserstein metric}

When the data are indeed compactly supported, e.g., when there is a known uniform bound on possible prices, the density assumption can be relaxed to accommodate degenerate distributions such as the one in Example \ref{exm:data:assumption}, and there are well-established error bounds in Wasserstein distance; see, e.g., \citet{chen2023sampling}. However, the following example demonstrates that, even for compactly supported prices (in fact, prices supported on very few possible values), working under an alternative model that is Wasserstein-close to the real one can lead to different optimal decisions.

\begin{example}\label{exm:instability}
This example is adapted from \citet{BBBE2020}. Let $S_0=1$ and the risk-free rate $r=0$. Suppose that, under the data-generating model $\mP$,
\begin{align}
    (S_1,S_2) =
    \begin{cases}
      (1,2), & \text{with probability } 1/4, \\
      (1,1), & \text{with probability } 1/2,\\
      (1,0), &\text{with probability } 1/4.
    \end{cases}
\end{align}
Suppose also that, under an alternative model $\mP_n$,
\begin{align}
    (S_1,S_2) =
    \begin{cases}
      (1+1/n,2), & \text{with probability } \frac14(1-\frac1{n^2}), \\
      (1+1/n,1), & \text{with probability } \frac14(1-\frac1{n^2}),\\
      (1-1/n,1), &\text{with probability } \frac14(1-\frac1{n^2}),\\
      (1-1/n,0), &\text{with probability } \frac14(1-\frac1{n^2}),\\
      (1,2), &\text{with probability } 3/(4n^2),\\
      (1,0), &\text{with probability } 1/(4n^2).
    \end{cases}
\end{align}
See Figure \ref{fig:exm2} for the path diagrams. Both $\mP$ and $\mP_n$ are finite-state tree models and satisfy the no-arbitrage condition. Consider the following transport plan from $\mP_n$ to $\mP$:
\begin{align}
    &(1,1+1/n,2) \to (1,1,2),\\
    &(1,1+1/n,1) \text{ and } (1,1-1/n,1) \to (1,1,1),\\
    &(1,1-1/n,0) \to (1,1,0),\\
    &(1,1,2) \to (1,1,2) \text{ with mass } 1/(4n^2),\\
    &(1,1,2) \to (1,1,1) \text{ with mass } 1/(2n^2),\\
    &(1,1,0) \to (1,1,0).
\end{align}
Each arrow without a stated mass transports the entire mass of its source path. Because this coupling incurs a time-$1$ cost of $(1-1/n^2)/n^2$ and a time-$2$ cost of $1/(2n^2)$, we have
\begin{align}
    \cW_2^2(\mP,\mP_n)
    \leq \frac{3}{2n^2}-\frac{1}{n^4}\to 0
\end{align}
as $n\to\infty$.

Under a bicausal coupling, the time-$1$ branch under $\mP_n$ must be selected without observing the terminal outcome under $\mP$. At each of the three time-$1$ nodes under $\mP_n$, the squared Wasserstein distance between the corresponding terminal kernel and the terminal law under $\mP$ is $1/2$. Hence, after dropping the nonnegative time-$1$ cost,
\begin{align}
    \cA\cW_2^2(\mP,\mP_n)\ge \frac{1}{2}.
\end{align}

\begin{figure}[htbp]
\centering
 
\begin{subfigure}{0.46\textwidth}
\centering
\begin{tikzpicture}[
    x=2.2cm,
    y=1.4cm,
    node/.style={circle,fill=black,inner sep=1.4pt},
    edge/.style={thick},
    lab/.style={font=\small},
    prob/.style={font=\scriptsize,midway,sloped,above,
                 fill=white,inner sep=0.6pt}
]

\node[node,label=left:{\(\scriptstyle S_0=1\)}] (P0) at (0,0) {};
\node[node,label=above:{\(\scriptstyle S_1=1\)}] (P1) at (1,0) {};

\node[node,label=right:{\(\scriptstyle S_2=2\)}] (P2u) at (2,1) {};
\node[node,label=right:{\(\scriptstyle S_2=1\)}] (P2m) at (2,0) {};
\node[node,label=right:{\(\scriptstyle S_2=0\)}] (P2d) at (2,-1) {};

\draw[edge] (P0) -- (P1) node[prob] {\(1\)};
\draw[edge] (P1) -- (P2u) node[prob] {\(1/4\)};
\draw[edge] (P1) -- (P2m) node[prob] {\(1/2\)};
\draw[edge] (P1) -- (P2d) node[prob,below] {\(1/4\)};

\node[lab] at (0,-1.45) {\(t=0\)};
\node[lab] at (1,-1.45) {\(t=1\)};
\node[lab] at (2,-1.45) {\(t=2\)};

\end{tikzpicture}
\caption{Real model \(\mP\)}
\end{subfigure}
\hfill
\begin{subfigure}{0.46\textwidth}
\centering
\begin{tikzpicture}[
    x=2.15cm,
    y=0.82cm,
    node/.style={circle,fill=black,inner sep=1.4pt},
    edge/.style={thick},
    lab/.style={font=\small},
    prob/.style={font=\scriptsize,midway,sloped,above,
                 fill=white,inner sep=0.6pt}
]

\node[node,label=left:{\(\scriptstyle S_0=1\)}] (Q0) at (0,0) {};

\node[node,label=above left:{\(\scriptstyle S_1=1+\frac1n\)}] (Q1u) at (1,1.6) {};
\node[node,label=above:{\(\scriptstyle S_1=1\)}] (Q1m) at (1,0) {};
\node[node,label=below left:{\(\scriptstyle S_1=1-\frac1n\)}] (Q1d) at (1,-1.6) {};

\node[node,label=right:{\(\scriptstyle S_2=2\)}] (Q2uu) at (2,2.2) {};
\node[node,label=right:{\(\scriptstyle S_2=1\)}] (Q2um) at (2,1.05) {};
\node[node,label=right:{\(\scriptstyle S_2=2\)}] (Q2mu) at (2,0.4) {};
\node[node,label=right:{\(\scriptstyle S_2=0\)}] (Q2md) at (2,-0.4) {};
\node[node,label=right:{\(\scriptstyle S_2=1\)}] (Q2dm) at (2,-1.05) {};
\node[node,label=right:{\(\scriptstyle S_2=0\)}] (Q2dd) at (2,-2.2) {};

\draw[edge] (Q0) -- (Q1u)
    node[prob] {\(\frac12(1-\frac1{n^2})\)};
\draw[edge] (Q0) -- (Q1m)
    node[prob] {\(\frac1{n^2}\)};
\draw[edge] (Q0) -- (Q1d)
    node[prob,below] {\(\frac12(1-\frac1{n^2})\)};

\draw[edge] (Q1u) -- (Q2uu) node[prob] {\(1/2\)};
\draw[edge] (Q1u) -- (Q2um) node[prob,below] {\(1/2\)};
\draw[edge] (Q1m) -- (Q2mu) node[prob] {\(3/4\)};
\draw[edge] (Q1m) -- (Q2md) node[prob,below] {\(1/4\)};
\draw[edge] (Q1d) -- (Q2dm) node[prob] {\(1/2\)};
\draw[edge] (Q1d) -- (Q2dd) node[prob,below] {\(1/2\)};

\node[lab] at (0,-2.7) {\(t=0\)};
\node[lab] at (1,-2.7) {\(t=1\)};
\node[lab] at (2,-2.7) {\(t=2\)};

\end{tikzpicture}
\caption{Alternative model \(\mP_n\)}
\end{subfigure}

\caption{Path diagrams of $\mP$ and $\mP_n$. Edge labels are conditional transition probabilities.}
\label{fig:exm2}
\end{figure}

Let us now consider mean-variance portfolio selection under these two models, with initial endowment $x_0=1$. We take the numbers of shares of $S$ held at $t=0,1$ as the decision variables and denote them by $\theta = (\theta_0,\theta_1)$. Clearly, under $\mP$, $\theta_0$ is irrelevant, and for any portfolio $\theta$, the terminal wealth $X^\theta_2$ has the following distribution:
\begin{align}
    X^\theta_2 = \begin{cases}
        1+\theta_1, & \text{with probability } 1/4,\\
        1, & \text{with probability } 1/2,\\
        1-\theta_1, &\text{with probability } 1/4.
    \end{cases}
\end{align}
Therefore, with risk aversion $\gamma>0$, the mean-variance objective under $\theta$ is $J(\theta) = 1-\frac{\gamma\theta_1^2}{4}$, and the non-participation decision yields the optimal mean-variance value $v(\mP)=1$.

On the other hand, under $\mP_n$, let $\theta_1^+$, $\theta_1^0$, and $\theta_1^-$ denote the time-$1$ risky holdings after observing $S_1=1+1/n$, $S_1=1$, and $S_1=1-1/n$, respectively. Write
\begin{align}
    c=\frac{\theta_1^++\theta_1^-}{2},\qquad
    k=\frac{\theta_1^+-\theta_1^-}{2},\qquad
    q=\frac{\theta_0}{n}.
\end{align}
A direct calculation gives
\begin{align}
    \mE[X_2^\theta]
    &=1+\left(1-\frac{1}{n^2}\right)
    \left(\frac12-\frac1n\right)k
    +\frac{\theta_1^0}{2n^2},\\
    \Var(X_2^\theta)
    &=\left(1-\frac{1}{n^2}\right)
    \left[
    \frac{c^2+k^2}{4}
    +\left(q+\left(\frac12-\frac1n\right)c\right)^2
    \right]
    +\frac{3(\theta_1^0)^2}{4n^2}\\
    &\quad+\frac{1}{n^2}\left(1-\frac{1}{n^2}\right)
    \left[
    \left(\frac12-\frac1n\right)k-\frac{\theta_1^0}{2}
    \right]^2.
\end{align}
Consequently, for $n>2$, the mean-variance objective is maximized at
\begin{align}
    q^*=c^*=0,\qquad
    k^*
    &=\frac{4n^3(n-2)}
    {\gamma(2n^4+n^2-6n+6)},\\
    \theta_1^{0,*}
    &=\frac{2n^2(n^2-2n+2)}
    {\gamma(2n^4+n^2-6n+6)}.
\end{align}
Equivalently, the optimal holdings on the three $\mP_n$-reachable time-$1$ nodes are uniquely given by
\begin{align}
    \theta_0^*=0,\qquad
    \theta_1^*=
    \begin{cases}
    k^*, & S_1=1+1/n,\\
    \theta_1^{0,*}, & S_1=1,\\
    -k^*, & S_1=1-1/n.
    \end{cases}
\end{align}
The corresponding optimal value is
\begin{align}
    v(\mP_n)
    =1+\frac{1}{2}\left(1-\frac{1}{n^2}\right)
    \left(\frac12-\frac1n\right)k^*
    +\frac{\theta_1^{0,*}}{4n^2}.
\end{align}
In particular, $k^*\to2/\gamma$, $\theta_1^{0,*}\to1/\gamma$, and $v(\mP_n)-v(\mP)\to1/(2\gamma)$. When the same feedback strategy is deployed under $\mP$, the state $S_1=1$ occurs with probability one, and its mean-variance value is
\begin{align}
    1-\frac{\gamma(\theta_1^{0,*})^2}{4}
    \to 1-\frac{1}{4\gamma}.
\end{align}
Thus, although $\cW_2(\mP_n,\mP)\to0$, the performance loss of the $\mP_n$-optimal feedback strategy under $\mP$ satisfies
\begin{align}
    v(\mP)-\left[1-\frac{\gamma(\theta_1^{0,*})^2}{4}\right]
    =\frac{\gamma(\theta_1^{0,*})^2}{4}
    \to\frac{1}{4\gamma}>0.
\end{align}
In particular, both the optimal-value gap and the policy-performance loss remain bounded away from zero.
\end{example}

Example \ref{exm:instability} demonstrates that {\it sequential decision-making} is unstable under model deviations measured by the usual Wasserstein distance. The intuition behind this phenomenon is that a general transport plan between two models does not necessarily respect the information structure (or conditional distributions), whereas sequential decision-making crucially exploits additional information as time evolves. In Example \ref{exm:instability}, a small Wasserstein perturbation changes the information at $t=1$ and creates the illusory signals $S_1=1\pm 1/n$. As a consequence, the $\mP_n$-optimal policy incurs a nonvanishing performance loss under $\mP$. This example motivates us to consider the adapted Wasserstein distance for error analysis in this section and to investigate the stability of dynamic mean-variance portfolio selection with respect to the adapted Wasserstein distance in Section \ref{stability}.

To conclude this subsection, Table \ref{tab:comparison-existing-frameworks} compares the preceding results with ours:

\begin{table}[htbp]
\centering
\begin{tabular}{lcccc}
\toprule
Reference & Data type & Metric & Has density & Bounded support \\
\midrule
\citet{chen2023sampling} & Static  & $\TV$ & Yes & No \\
\citet{chen2023sampling}  & Static & $\cW_2$ & No & Yes \\
\citet{gao2025wasserstein} & Static & $\cW_2$ & Yes & No \\
\citet{factormodel} & Static & $\TV$ & Yes & No\\
\citet{gao2025generatingsolutionpathsmarkovian} & Dynamic (SDE solutions) & $\KL$ & Yes  & No \\
This paper & Dynamic (discrete-time) & $\cA\cW_2$ & No & No \\
\bottomrule
\end{tabular}
\small
\caption{Comparison of existing error bounds and our framework.}
\label{tab:comparison-existing-frameworks}
\end{table}
}

\subsection{Conditional score matching}\label{subsec:conditional-score-matching}

{ 
For an observed clean history $x^{1:t}$, the law $\mP_{x^{1:t}}$ is the ideal conditional next-state distribution required for generation. The formal early-stopped analysis conditions instead on the noisy history $H^{1:t}=h^{1:t}$ and uses the posterior $\Pi_{h^{1:t}}^{\tau_0,t+1}$. For each $t=1,2,\cdots,T-1$, consider the conditional VP-SDE
\begin{align}
\begin{cases}
&\md X_\tau^{t+1}
=-\tfrac12\beta(\tau)X_\tau^{t+1}\md\tau
+\sqrt{\beta(\tau)}\md B_\tau^{t+1},\\
&X_0^{t+1}\sim\Pi_{h^{1:t}}^{\tau_0,t+1}
=\cL(X^{t+1}\mid H^{1:t}=h^{1:t}).
\end{cases}
\label{vp-sde-conditional}
\end{align}
Its density at diffusive time $\tau$ is denoted by $p^{\tau_0}_{t+1}(\tau,\cdot\mid h^{1:t})$. The corresponding reverse-time SDE is obtained from \eqref{vp-sde-reversed} using this conditional density and its score. For simplicity, the theoretical analysis below takes $\beta\equiv2$. The same arguments extend naturally to a time-varying VP schedule after replacing the corresponding transition coefficients.
}

To facilitate our adaptive sampling method, we assume that we have access to a family of pretrained approximate score functions $s^{t}_{\theta}:[0,\Ti]\times \mR^{dt}\to \mR^d$, $t=1,2,\cdots, T$, satisfying the following {\it noisy-history conditional score-matching error bounds}:

\begin{assumption}\label{assumption:score-matching-error}
	For any $\tau\in [\tau_0,\Ti]$ and $t=1,2,\cdots, T-1$, we have
	 	\begin{align}\label{score-error-initial}
		&\mE_{X\sim p_1(\tau,\cdot)}[|s^1_{\theta}(\tau,X)- \nabla_x \log p_1(\tau,X)|^2]\leq \epsilon^2_{\rm score},\\
		&\mE_{H^{1:t}\sim \mP^{\tau_0}_{1:t}}\big[\mE_{X_\tau^{t+1}\sim p^{\tau_0}_{t+1}(\tau,\cdot|H^{1:t})}[|s^{t+1}_\theta(\tau,H^{1:t},X^{t+1}_\tau)-\nabla_x \log p^{\tau_0}_{t+1}(\tau,X_\tau^{t+1}|H^{1:t})|^2]\big]\leq \epsilon^2_{\rm score}.\label{score-error}
	\end{align}
Here, $p^{\tau_0}_{t+1}(\tau,\cdot|h^{1:t})$ is the probability density function of the forward process whose random initial condition has law $\Pi_{h^{1:t}}^{\tau_0,t+1}=\cL(X^{t+1}\mid H^{1:t}=h^{1:t})$.
\end{assumption}

\begin{figure} 
\centering
\resizebox{0.95\textwidth}{!}{%
\begin{tikzpicture}[
    >=Latex,
    every node/.style={font=\small},
    lab/.style={font=\footnotesize, inner sep=1pt}
]
    \node (x1) at (0,1.4) {$X^1$};
    \node (x2) at (2.0,1.4) {$X^2$};
    \node (xdots) at (4.0,1.4) {$\cdots$};
    \node (xt) at (6.0,1.4) {$X^t$};
    \node (xp) at (9.1,1.4) {$X^{t+1}$};
    \node (xmore) at (10.8,1.4) {$\cdots$};

    \node (h1) at (0,0) {$H^1$};
    \node (h2) at (2.0,0) {$H^2$};
    \node (hdots) at (4.0,0) {$\cdots$};
    \node (ht) at (6.0,0) {$H^t$};
    \node (hp) at (9.1,0) {$H^{t+1}$};
    \node (hmore) at (10.8,0) {$\cdots$};

    \node (q1) at (0,-1.4) {$X^1_\tau$};
    \node (q2) at (2.0,-1.4) {$X^2_\tau$};
    \node (qdots) at (4.0,-1.4) {$\cdots$};
    \node (qt) at (6.0,-1.4) {$X^t_\tau$};
    \node (qp) at (9.1,-1.4) {$X^{t+1}_\tau$};
    \node (qmore) at (10.8,-1.4) {$\cdots$};

    \draw[->] (x1) -- (x2);
    \draw[->] (x2) -- (xdots);
    \draw[->] (xdots) -- (xt);
    \draw[->] (xt) -- node[lab, above] {$\mP_{x^{1:t}}$} (xp);
    \draw[->] (xp) -- (xmore);

    \draw[->] (h1) -- (h2);
    \draw[->] (h2) -- (hdots);
    \draw[->] (hdots) -- (ht);
    \draw[->] (ht) -- node[lab, below] {$\mP^{\tau_0}_{h^{1:t}}$} (hp);
    \draw[->] (hp) -- (hmore);

    \draw[->] (q1) -- (q2);
    \draw[->] (q2) -- (qdots);
    \draw[->] (qdots) -- (qt);
    \draw[->] (qt) -- (qp);
    \draw[->] (qp) -- (qmore);

    \draw[->] (x1) -- (h1);
    \draw[->] (x2) -- (h2);
    \draw[->] (xt) -- (ht);
    \draw[->] (xp) -- (hp);

    \draw[->] (x1.south east) to[out=-65,in=65] node[lab, right, pos=0.25] {$\phi(\tau,\cdot|X^1)$} (q1.north east);
    \draw[->] (x2.south east) to[out=-65,in=65] (q2.north east);
    \draw[->] (xt.south east) to[out=-75,in=75] (qt.north east);
    \draw[->] (xp.south east) to[out=-60,in=60] node[lab, right, pos=0.25] {$\phi(\tau,\cdot|X^{t+1})$} (qp.north east);

    \draw[->] (ht) to[bend left=10] node[lab, pos=0.52, above] {$\Pi_{h^{1:t}}^{\tau_0,t+1}$} (xp);
\end{tikzpicture}
}
\caption{{ The clean data variables $X$, the noisy history $H$, and the VP-diffused variables $X_\tau$.}}
\label{fig:noisy-history-kernels}
\end{figure}

We remark that \eqref{score-error-initial} is the classical score error bound for the first marginal of $\mP$; see \citet{Song2021b}. Due to the early-stopping technique, the score-matching error \eqref{score-error} is imposed along noisy histories $H^{1:t}\sim \mP^{\tau_0}_{1:t}$, and the target is the posterior-mixture score $\nabla_x\log p^{\tau_0}_{t+1}(\tau,\cdot|H^{1:t})$. { For clarity, Figure \ref{fig:noisy-history-kernels} summarizes the relationships among the various random variables and conditional kernels.}

Although we use the noisy history variables $H$ in \eqref{score-error}, the MSE objective remains fully trainable from joint data by denoising score matching; see Proposition \ref{prop:DSM} below. The proof idea is borrowed from classical denoising score matching \citep[e.g.,][]{hyvarinen2005score}, with the necessary modification that the conditioning variable is the noisy history. We provide the proof in Appendix \ref{app:proof:conditionalDSM} for completeness.

\begin{proposition}\label{prop:DSM}
For any $t=1,\ldots,T-1$ and $\tau\in[\tau_0,\Ti]$, let $H^{1:t}=h_1(\tau_0)X^{1:t}+\sqrt{h_2(\tau_0)}Z^{1:t}$, where $Z^{1:t}\sim \N(0,I_{dt\times dt})$ is independent of $X^{1:t+1}\sim \mP_{1:t+1}$. The following two score-matching problems are equivalent:
\begin{alignat}{2}
&\min_{\theta}
&&\mE_{H^{1:t}\sim \mP^{\tau_0}_{1:t}}\big[\mE_{X_\tau^{t+1}\sim p^{\tau_0}_{t+1}(\tau,\cdot|H^{1:t})}[|s^{t+1}_\theta(\tau,H^{1:t},X_{\tau}^{t+1})-\nabla_x \log p^{\tau_0}_{t+1}(\tau,X_\tau^{t+1}|H^{1:t})|^2]\big]\label{conditional-ESM}\\
&\min_{\theta}
&&\mE_{X^{1:t+1}\sim \mP_{1:t+1}}\Big[\mE_{Z^{1:t}}\big[\mE_{X_\tau^{t+1}\sim \phi(\tau,\cdot|X^{t+1})}[|s^{t+1}_\theta(\tau,H^{1:t},X_{\tau}^{t+1})\notag\\
&&&\hspace{17em}{}-\nabla_x \log \phi(\tau,X_\tau^{t+1}|X^{t+1})|^2]\big]\Big].\label{conditional-DSM}
\end{alignat}
Here, $\phi(\tau,\cdot|x_0)$ is the probability density function of the forward process, with initial condition $X_0=x_0$. In particular,
\[
\phi(\tau,x\mid x_0)
=(2\pi h_2(\tau))^{-d/2}
\exp\!\left(-\frac{|x-h_1(\tau)x_0|^2}{2h_2(\tau)}\right).
\]
\end{proposition}
\begin{proof}
    See Appendix \ref{app:proof:conditionalDSM}.
\end{proof}

\subsection{Adaptive sampling algorithm}

Because of the explicit form of $\phi(\tau,x|x_0)$, $X^{t+1}_\tau\sim \phi(\tau,\cdot|X^{t+1})$ is equivalent to $X^{t+1}_\tau = h_1(\tau)X^{t+1}+\sqrt{h_2(\tau)}\bz$, where $\bz\sim \N(0,I)$ is independent of the other random variables. Therefore, \eqref{conditional-DSM} is further equivalent to
\begin{equation}\label{conditional:DSM:noise}
\min_{\theta} \mE_{X^{1:t+1}\sim \mP_{1:t+1}}\mE_{Z^{1:t}}\mE_{\bz\sim \N(0,I)} \big| \sqrt{h_2(\tau)}s^{t+1}_\theta\big(\tau,H^{1:t},h_1(\tau)X^{t+1}+\sqrt{h_2(\tau)}\bz \big)+\bz\big|^2,
\end{equation}
where $H^{1:t}=h_1(\tau_0)X^{1:t}+\sqrt{h_2(\tau_0)}Z^{1:t}$. In the implementation, $s_\theta$ is trained to minimize the empirical version of \eqref{conditional:DSM:noise}, with additional averaging over $\tau$ to account for different diffusive times; see \eqref{conditional:DSM:implementation}.

Suppose the true model for time-series data is $\mP$, and we have pretrained approximate score functions $s_\theta^t$ satisfying \eqref{score-error-initial} and \eqref{score-error}. We note that, for a fixed noisy history $h^{1:t}$, $s^{t+1}_\theta(\tau,h^{1:t},\cdot)$ serves as an approximation to the noisy-history conditional score function $\nabla _x\log p^{\tau_0}_{t+1}(\tau,\cdot|h^{1:t})$. This observation motivates adaptive sampling from the output distribution; see Algorithm \ref{adaptive-sampling}. In the next section, we present and discuss our main theoretical results for this adaptive, autoregressive-style sampling algorithm.

\begin{algorithm}
\caption{Adaptive sampling}

\textbf{Inputs}: pretrained approximate score functions $s^{t}_{\theta}$, $t=1,2,\cdots,T$; a reverse-time SDE simulator.

\textbf{Initialization}: independently sample
$(Z_{(n)}^1,\ldots,Z_{(n)}^T)\sim
\N(0_{\mR^{dT}},I_{dT\times dT})$ for $n=1,\ldots,N$.

\begin{algorithmic}[1]
\FOR{$n=1$ \TO $N$}

\STATE{
With initial condition $\bar Y_0=Z_{(n)}^1$, run the reverse-time SDE using the score function { $s_\theta^1$}:
}
\begin{align}
\md \bar Y_\tau = [\bar Y_\tau +2s_\theta^{1}(\Ti-\tau,\bar Y_\tau)]\md \tau { + \sqrt{2}\md \bar B_\tau}.
\label{condition-reversed-1}
\end{align}

\STATE{
$y_{(n)}^1\leftarrow \bar Y_{\Ti-\tau_0}$.
}

\FOR{$t=1$ \TO $T-1$}
\STATE{$y^{1:t}\leftarrow y^{1:t}_{(n)}$}
\STATE{
With initial condition { $\bar Y_{0}=Z_{(n)}^{t+1}$}, run the reverse-time SDE using the score function $(\tau,x)\mapsto s^{t+1}_\theta(\tau,y^{1:t},x)$:
}
\begin{align}
\md \bar Y_\tau = [\bar Y_\tau +2s_\theta^{t+1}(\Ti-\tau,y^{1:t},\bar Y_\tau)]\md \tau { + \sqrt{2}\md \bar B_\tau}.\label{condition-reversed-t}
\end{align}

\STATE{
$y_{(n)}^{t+1}\leftarrow \bar Y_{\Ti-\tau_0}$.
}

\ENDFOR
\ENDFOR
\end{algorithmic}

\textbf{Output}: $\{(y_{(n)}^1,y_{(n)}^2,\cdots, y_{(n)}^T)\}_{n=1}^N$.

\label{adaptive-sampling}
\end{algorithm}

\section{Main results}\label{sec:mainresults}
We now prove that the sampling method described by Algorithm \ref{adaptive-sampling} gives a distribution $\mQ$ close to $\mP$ in $\cA\cW_2$. Our proof relies on existing total variation bounds for conditional distributions generated by diffusion models and on a uniform tail estimate for SDEs. We need the following technical assumptions.

\begin{assumption}[\bf Assumptions on the data distribution] \label{assumption:P}
\ \\
\vspace{-20pt}
\begin{enumerate}
    \item  There exists a constant $L_{\tau_0}>0$ depending on $\tau_0$ such that for any $\tau\in [\tau_0,\Ti]$, $x,y\in \mR^d$, $t\in \{1,2,\cdots,T-1\}$ and $h^{1:t},k^{1:t}\in \mR^{dt}$,
\begin{align}
   &|\nabla _x \log p_{1}(\tau,x)-\nabla _x \log p_{1}(\tau,y)|\leq L_{\tau_0}|x-y|, \label{Lip:forwardscore}\\
   &|\nabla _x \log p^{\tau_0}_{t+1}(\tau,x|h^{1:t})-\nabla _x \log p^{\tau_0}_{t+1}(\tau,y|k^{1:t})|\leq L_{\tau_0}\big((1+|x|)|h^{1:t}-k^{1:t}|+|x-y|\big).\label{Lip:conditionalscore}
   \end{align}
   \item There exists a constant $C_{\tau_0}>0$ such that for any $t\in \{1,2,\cdots,T-1\}$, $\tau\in [\tau_0,\Ti]$ and $h^{1:t}\in \mR^{dt}$,
   \begin{align}
       |\nabla _x \log p^{\tau_0}_{t+1}(\tau,x|h^{1:t})|\leq C_{\tau_0}(1+|x|+|h^{1:t}|).\label{conditionalscore:growth}
   \end{align}
    \item  There exist constants $c>0$ and $K>0$ such that
    \begin{align}
 \sup_{t\in \{1,2,\cdots, T\}}\mE_{\mP}[e^{c|X^t| }]\leq K.
 \end{align}
 \item There exists a constant $K>0$ such that for any $t\in \{1,2,\cdots, T-1\}$ and $x^{1:t},y^{1:t}\in \mR^{dt}$,
 \begin{align}
     \cW_2(\mP_{x^{1:t}},\mP_{y^{1:t}})\leq K|x^{1:t}-y^{1:t}|.
 \end{align}

\end{enumerate}

 \end{assumption}

\begin{assumption}[\bf Assumptions on the approximating network] \label{assumption:stheta}
\ \\
\vspace{-20pt}
\begin{enumerate}
    \item There exists a constant $L_{\tau_0}>0$ such that for any $\tau\in [\tau_0,\Ti]$, $x,y\in \mR^d$, $t\in \{1,2,\cdots,T-1\}$ and $h^{1:t},k^{1:t}\in \mR^{dt}$,
\begin{align}
&|s^{1}_\theta(\tau,x)-s^{1}_\theta(\tau,y)|\leq L_{\tau_0}|x-y|,\\
&|s^{t+1}_\theta(\tau,h^{1:t},x)-s^{t+1}_\theta(\tau,k^{1:t},y)|\leq L_{\tau_0}(1+|x|)(|h^{1:t}-k^{1:t}|+|x-y|).	\label{Lip:stheta}
\end{align}


\item There exist constants $\delta>0$ and $D_{\tep}>0$ (which may depend on $\tau_0$, $\epsilon_{\rm score}$ and $T$) such that for any $t\in \{1,2,\cdots,T-1\}$, $\tau\in [\tau_0,\Ti]$, $h^{1:t}\in \mR^{dt}$ and $x\in \mR^d$,
\begin{align}
	&2 x\cdot s^{1}_\theta(\tau,x)\leq -(1+\delta)|x|^2 +{D_{\tep}},\\
    &2 x\cdot s^{t+1}_\theta(\tau,h^{1:t},x)\leq -(1+\delta)|x|^2 +D_{\tep}.
\end{align}
\end{enumerate}
\end{assumption}


\begin{remark}
\begin{enumerate}
\item {  Although Lipschitz continuity assumptions on the forward-process score functions, such as \eqref{Lip:forwardscore}, are fairly standard in the literature (for example, see Assumption A1 of \citet{chen2023sampling} and Assumption 1 of \citet{gao2025wasserstein}), the Lipschitz continuity in the noisy observation variable $h^{1:t}$ required by \eqref{Lip:conditionalscore}, together with the growth condition \eqref{conditionalscore:growth}, appears to be new. In Appendix \ref{app:assumption:stheta}, we show that these requirements can be interpreted as stability and growth conditions on the posterior mean under $\Pi_{h^{1:t}}^{\tau_0,t+1}$ when the forward process is observed with noise. The Wasserstein stability of the clean conditional kernels in Assumption \ref{assumption:P}-4 is used separately in the proof of Theorem \ref{thm:AWbounds} to obtain the first-order smoothing estimate $\cA\cW_2^2(\mP,\mP^{\tau_0})\leq (K')^T\tau_0$ for some $K'>1$. We also show in Appendix \ref{app:assumption:stheta} that certain weak {\it log-concavity} conditions on $\Pi_{h^{1:t}}^{\tau_0,t+1}$, which are weaker than the usual strong log-concavity assumption on the density, make the Lipschitz constant with respect to $x$ uniform in $h^{1:t}$. In the finite-state and Gaussian examples, the posterior kernel can be computed directly from the data law. The non-Gaussian exponential example instead provides a tractable check of the sufficient conditions at the posterior level.

More generally, Assumption \ref{assumption:P}-1 imposes convenient technical conditions for our error analysis. We leave their relaxation to future work.}

\item One generally seeks to avoid assumptions directly on the approximation network because they may restrict its expressive power. Here, however, we show that under Assumption \ref{assumption:P}, which concerns only the data distribution, the approximating network $s_\theta$ can be chosen to satisfy Assumption \ref{assumption:stheta} (with an appropriate choice of $D_{\tep}$) while preserving the approximation-error bounds \eqref{score-error-initial} and \eqref{score-error}; see Proposition \ref{prop:assumption:stheta} below.
	    To keep the main development focused, we defer a detailed discussion of this result to Appendix \ref{app:assumption:stheta}. {  As $\epsilon_{\rm score}\to 0$, the required score matching becomes increasingly accurate, and the constant $D_{\tep}$ may diverge. Some deterioration of this kind appears unavoidable without strong log-concavity of the data distribution. Nevertheless, the resulting error bound makes the trade-off among $\Ti$, $\tau_0$ and $\epsilon_{\rm score}$ explicit and yields a convergence guarantee under a suitable joint scaling of these parameters. In particular, Proposition \ref{prop:assumption:stheta} shows that the growth of $D_{\tep}$ is polylogarithmic (quadratic in $\log(\epsilon_{\rm score}^{-1})$ for fixed $\tau_0$ and $T$), so its contribution to the final bound is comparatively mild. See also Remark \ref{remark:mainthm}-(1).}
\end{enumerate}
\end{remark}

\begin{proposition}\label{prop:assumption:stheta}
Suppose Assumption \ref{assumption:P} holds. For any $\tau_0>0$ and $\epsilon_{\rm score}>0$, there exists an $s_\theta$ satisfying Assumption \ref{assumption:score-matching-error} as well as Assumption \ref{assumption:stheta} with

\[
D_{\tep}=C\cdot C_{\tau_0}^2T\Big(\log(eT)+\log\big(e+h_2(\tau_0)^{-2}\epsilon^{-2}_{\rm score}\big)\Big)^2,
\]
where $C_{\tau_0}$ is the growth constant in \eqref{conditionalscore:growth} and $C>0$ is independent of $\tau_0$, $\Ti$, $\epsilon_{\rm score}$ and $T$.
\end{proposition}

We are now ready to present one of our main results: quantitative error bounds between the output distribution of the adaptive sampling method and the true data distribution. Recall the notation for noisy-history conditional moments: $M^{\tau_0}_p(h^{1:t}):=(\mE_{\Pi_{h^{1:t}}^{\tau_0,t+1}}[|X^{t+1}|^p])^{1/p}$, for $p\geq 1$, and $E^{\tau_0}_c(h^{1:t}) = \mE_{\Pi_{h^{1:t}}^{\tau_0,t+1}}[e^{c|X^{t+1}|}]$.
\begin{theorem} \label{thm:AWbounds}
		For $t=1,\ldots,T-1$, denote by $\{Y^{y^{1:t}}_{\tau}\}_{\tau\in [0,\Ti]}$ the backward process with initial condition $Y^{y^{1:t}}_{0}\sim \N(0_{\mR^d},I_{d\times d})$ and score function $(\tau,x)\mapsto s^{t+1}_{\theta}(\tau,y^{1:t},x)$, i.e., $Y^{y^{1:t}}$ satisfies \eqref{condition-reversed-t}. Define $\mQ^{\Ti-\tau_0}_{y^{1:t}}=\cL(Y^{y^{1:t}}_{\Ti-\tau_0})$. Moreover, define $\mQ^{\Ti-\tau_0}_1=\cL(Y_{\Ti-\tau_0})$ with $Y$ satisfying \eqref{condition-reversed-1}.
	\begin{itemize}
\item[(1)] There exists a constant $C_{\tau_0}>0$, which may depend on $\tau_0>0$, such that, for $t=1,2,\cdots,T-1$, we have
\begin{align}
\cW_2^2\big(\mP^{\tau_0}_{h^{1:t}},\mQ^{\Ti-\tau_0}_{y^{1:t}}\big)\leq & C_{\tau_0}\Big(D_{\tep}e^{-\Ti}\big(1+M^{\tau_0}_2(h^{1:t})\big)+e^{-c'D_{\tep}} \\
&+ D_{\tep}\E^{\tau_0}(\tau_0,h^{1:t})^{1/2} \\
&+e^{-c'\sqrt{D_{\tep}}}\big(1+E^{\tau_0}_c(h^{1:t})\big) \\
&+ D_{\tep}\Ti^{1/2}\big(1+M^{\tau_0}_2(h^{1:t})\big)|h^{1:t}-y^{1:t}|\Big),
		\label{conditional-bound}
\end{align}
	where
		\begin{align}
		\E^{\tau_0}(\tau_0,h^{1:t}):=\int_{\tau_0}^\Ti \mE_{X_\tau^{t+1}\sim p^{\tau_0}_{t+1}(\tau,\cdot|h^{1:t})}\big[\big|s^{t+1}_\theta(\tau,h^{1:t},X_\tau^{t+1})-\nabla_x \log p^{\tau_0}_{t+1}(\tau,X_\tau^{t+1}|h^{1:t})\big|^2\big] \md \tau.
	    \end{align}
    Here, $c'>0$ is independent of $\tau_0$, $\Ti$, $\epsilon_{\rm score}$ and $T$.
\item[(2)] Define $\mQ^{\Ti-\tau_0}$ to be the joint distribution with regular conditional kernels $\{\mQ^{\Ti-\tau_0}_{y^{1:t}}:y^{1:t}\in \mR^{dt}, t=1,2,\cdots, T-1\}$, i.e.,
\begin{align}
\mQ^{\Ti-\tau_0}(\md y^{1:T})
:=\mQ^{\Ti-\tau_0}_1(\md y^1)
\prod_{t=1}^{T-1}\mQ^{\Ti-\tau_0}_{y^{1:t}}(\md y^{t+1}).\label{Q:def}
\end{align}
	Moreover, define
	\begin{align}
	    \mathfrak b_{\tep}
	    =C_{\tau_0}\big(D_{\tep}e^{-\Ti}+e^{-c'D_{\tep}}+e^{-c'\sqrt{D_{\tep}}}\big).
	\end{align}
		Let $a_T:=2^{-(T-1)}$. Assume that
		\begin{align*}
		    &\Ti>1,\qquad C_{\tau_0}>1,\qquad
		    D_{\tep}>1,\qquad \mathfrak b_{\tep}<1,\\
		    &C_{\tau_0}D_{\tep}\Ti^{1/2}\epsilon_{\rm score}<1.
		\end{align*}
		Then
	\begin{align}
	\cA\cW_2^2(\mP,\mQ^{\Ti-\tau_0})\leq &(K')^T\tau_0
	+C_{\tau_0}T^2D_{\tep}^2\Ti\big(D_{\tep}e^{-\Ti}\big)^{a_T}\notag\\
	&+C_{\tau_0}T^2D_{\tep}^2\Ti e^{-c'a_T\sqrt{D_{\tep}}}\notag\\
	&+C_{\tau_0}T^2D_{\tep}^2\Ti^{1+a_T/2}\big(D_{\tep}\epsilon_{\rm score}\big)^{a_T}.
	\label{eq:AWbounds}
	\end{align}
    Here $K'>1$ and $c'>0$ are independent of $\tau_0$, $\Ti$, $\epsilon_{\rm score}$ and $T$, while $C_{\tau_0}$ is independent of $\Ti$, $\epsilon_{\rm score}$ and $T$.
	\end{itemize}
	\end{theorem}
	\begin{proof}
	See Appendix \ref{app:proof-AWbounds}.
	\end{proof}

Several remarks on Theorem \ref{thm:AWbounds} follow.

\begin{remark}\phantomsection\label{remark:mainthm}
	\begin{itemize}
	\item[(1)] (Convergence guarantee) { 
	    In this remark, we first fix $T$ and focus on the trade-off among $\tau_0$, $\Ti$ and $\epsilon_{\rm score}$. The scale of $T$ will be discussed in (4). Let $\eta>0$ be arbitrary. We first choose $\tau_0>0$ such that $(K')^T\tau_0<\eta/4$. Set $a_T=2^{-(T-1)}$ and $L_\epsilon:=\log(\epsilon_{\rm score}^{-1})$. By Proposition \ref{prop:assumption:stheta}, with this $\tau_0$ and $T$ fixed, we may take $D_{\tep}\sim_{\tau_0,T}L_\epsilon^2$. Then, up to constants depending on $\tau_0$ and $T$, the remaining three terms in \eqref{eq:AWbounds} have the rough orders $L_\epsilon^{4+2a_T}\Ti e^{-a_T\Ti}$, $L_\epsilon^{4}\Ti\epsilon_{\rm score}^{c_{\tau_0,T}}$, and $L_\epsilon^{4+2a_T}\Ti^{1+a_T/2}\epsilon_{\rm score}^{a_T}$, respectively. Here $c_{\tau_0,T}>0$ is a constant depending on $\tau_0$ and $T$. Therefore, choosing $\Ti=\epsilon_{\rm score}^{-r}$ with
	    \[
	        0<r<\min\left\{c_{\tau_0,T},\frac{a_T}{1+a_T/2}\right\}
	    \]
		    makes each of the three terms smaller than $\eta/4$ for sufficiently small $\epsilon_{\rm score}$. This scale also gives $D_{\tep}\Ti^{1/2}\epsilon_{\rm score}\to0$, so the technical smallness condition in Theorem \ref{thm:AWbounds} is satisfied. With such a choice of parameters, we achieve $\cA\cW_2^2(\mP,\mQ^{\Ti-\tau_0})\to0$. This provides a {\it theoretical convergence guarantee} as $\Ti\to \infty$, $\epsilon_{\rm score}\to 0$ and $\tau_0\to0$}. Even for static data, establishing Wasserstein convergence for diffusion models is known to be challenging; see Remark 1 of \citet{Kwon2022}. Under various technical assumptions on the data distribution, such as strong log-concavity, Wasserstein convergence results are available \citep[e.g.,][]{gao2025wasserstein,Tang2024}, and it is possible to bound $\cA\cW_2$ by $\cW_2$ up to some explicit error terms; see \citet{blanchet2024}. However, with this conventional approach, obtaining bounds for conditional distributions requires imposing strong log-concavity on {\it every} conditional distribution of $\mP$, uniformly in the observations $x^{1:t}$, which seems to be rather restrictive for real-world modeling. We circumvent these assumptions and replace them with certain dissipativity conditions on the {\it networks}, which can be enforced by carefully designing the network structure (i.e., Assumption \ref{assumption:stheta}).

	    \item[(2)] (Wasserstein bounds for static data) { To consider the static data case, take $T=1$ in \eqref{eq:AWbounds}. Using the relation $D_{\tep}\sim \log^2\epsilon_{\rm score}^{-1}$, we obtain the following Wasserstein bound for static data generation:
    \begin{align}
	    \cW_2^2(\mP,\mQ^{\Ti-\tau_0})\lesssim &K'\tau_0+\log^6(\epsilon_{\rm score}^{-1})\Ti e^{-\Ti}
    +\Ti\epsilon_{\rm score}^{c'}\log^4(\epsilon_{\rm score}^{-1})
    \\
    &+\Ti^{3/2}\epsilon_{\rm score}\log^6(\epsilon_{\rm score}^{-1})\\
    =:& {\tt EarlyStop\_Err}+{\tt Noise\_Err}+{\tt Cutoff\_Err}+{\tt Score\_Err},\label{bound:static}
    \end{align}
	  with {\tt EarlyStop\_Err} denoting the early-stopping error, {\tt Noise\_Err} the noisy approximation error, {\tt Score\_Err} the score-matching error, and {\tt Cutoff\_Err} the newly introduced cut-off error (see Appendix \ref{app:proof-AWbounds} for details). We emphasize that when $T=1$, Assumption \ref{assumption:P}-1 and Assumption \ref{assumption:P}-2 reduce to the usual Lipschitz continuity and growth conditions for score functions. In particular, we do not rely on any structural assumptions on $\mP$, such as strong log-concavity. Our result is achieved by bounding the Wasserstein metric by total variation, up to a tail error, using Proposition 20 of \citet{blanchet2024} and Theorem 6.15 of \citet{villani2008}, and then applying uniform tail estimates for SDEs with dissipative coefficients. The price to pay is that the noise approximation error term $\log^6(\epsilon_{\rm score}^{-1})\Ti e^{-\Ti}$ diverges when $\epsilon_{\rm score}\to 0$ with a {\it fixed} $\Ti$.}


	\item[(3)] (Conditional sampling) By the construction of $\mQ^{\Ti-\tau_0}$ in \eqref{Q:def}, for a random variable $Y^{1:T}\sim \mQ^{\Ti-\tau_0}$, $\mQ^{\Ti-\tau_0}_{y^{1:t}}$ is the conditional distribution of $Y^{t+1}$ given $Y^{1:t}=y^{1:t}$. Consequently, it can be sampled by selecting the score function $x\mapsto s^{t+1}_\theta(\tau,y^{1:t},x)$. Thus, Algorithm \ref{adaptive-sampling} can also be used for conditional sampling, provided that $y^{1:t}$ is included as input and the sampling process starts with $Z_{(n)}^{t+1}$.

	    \item[(4)]  (The exponential dependence on $T$) We remark that \eqref{eq:AWbounds} contains two types of horizon dependence. The first is the separate smoothing term $(K')^T\tau_0$, which is induced by the clean-kernel Lipschitz property in Assumption \ref{assumption:P}-4 and a linear smoothing iteration. The rest of \eqref{eq:AWbounds} has no hidden exponential constant in $T$. Proposition \ref{prop:assumption:stheta} gives
	    \[
	    D_{\tep}
	    =O_{\tau_0}\!\left(
	    T\big(\log(eT)+\log(e+h_2(\tau_0)^{-2}\epsilon_{\rm score}^{-2})\big)^2
	    \right).
	    \]
	    Hence, these prefactors are polynomial in $T$ and polylogarithmic in $\epsilon_{\rm score}^{-1}$. The severe dependence on $T$ in the remaining small quantities enters through the exponent $a_T=2^{-(T-1)}$. This exponent is induced by the proof strategy, which first controls conditional reverse SDEs in total variation and then converts total variation to $\cW_2$. More precisely, the conditional score-matching error is controlled only in mean along the noisy data histories $H^{1:t}\sim \mP^{\tau_0}_{1:t}$, whereas the adaptive sampler evaluates the learned conditional scores along generated histories $Y^{1:t}$. Without a direct stability estimate for the learned conditional kernels $y^{1:t}\mapsto \mQ_{y^{1:t}}^{\Ti-\tau_0}$, the proof estimates $\cW_2(\mP_{h^{1:t}}^{\tau_0},\mQ_{y^{1:t}}^{\Ti-\tau_0})$ through a total-variation bound. Since $\cW_2^2$ is controlled by $\TV$ only after a square-root loss, the adapted Wasserstein iteration contains a term of the form
    \begin{align}
        A_{t+1}\leq A_t+C\sqrt{A_t}+C\epsilon_{\rm score},
        \qquad
        A_t:=\mE |X^{1:t}-Y^{1:t}|^2,
    \end{align}
    inducing the cumulative square-root exponent. Nevertheless, it should be interpreted as a worst-case consequence of the present general assumptions rather than an intrinsic rate for every dynamic model. Under additional (though quite restrictive) contraction conditions on the data and learned kernels, the smoothing and approximation recursions become contractive; see Assumption \ref{ass:contraction} and Proposition \ref{prop:exp-in-T}.
\end{itemize}
\end{remark}

{ 
It is useful to relate Theorem \ref{thm:AWbounds} to the conventional conditional diffusion model, which is widely used, for instance, in text-to-image generation. In the standard conditional setting, one observes a condition $c$ and generates only from the conditional law of the target given this condition; the condition variable itself is not generated. Thus, at the level of distributions, the goal is to learn a static conditional kernel rather than a full dynamic law. This corresponds to the special case of our framework in which the history is externally fixed and only the next coordinate is sampled. Therefore, the error bound for conditional diffusion models is expressed as the average Wasserstein distance between the conditional kernels, rather than the adapted Wasserstein distance between joint laws. The following corollary records this type of bound and applies more generally at every stage of a {\it nonstationary} environment. This result is useful for {\it static} or {\it myopic} portfolio selection, in which we make a single decision at each stage based on the observed history. Nevertheless, we are mainly interested in the {\it dynamic} setting, in which decisions are planned several steps ahead.
}

\begin{corollary}\label{coro:conditional-error}
For $t=1,2,\cdots, T-1$, we have
\begin{equation}
\mE_{H^{1:t}\sim \mP^{\tau_0}_{1:t}}[\cW_2^2(\mP^{\tau_0}_{H^{1:t}},\mQ^{\Ti-\tau_0}_{H^{1:t}})]\leq C_{\tau_0}\big(D_{\tep}(\Ti^{1/2}\epsilon_{\rm score}+e^{-\Ti})+e^{-c'\sqrt{D_{\tep}}}\big).
\end{equation}

\end{corollary}

\begin{proof}
    Taking $h^{1:t}=y^{1:t}$ in \eqref{conditional-bound}, integrating with respect to $\mP^{\tau_0}_{1:t}$, and using \eqref{score-error} gives the desired result.
\end{proof}

{  To conclude this section, we present an improved bound under additional structural assumptions.

\begin{assumption}\label{ass:contraction}
   Suppose that the clean conditional kernels of $\mP$ are Markovian and contractive: for $t=1,2,\cdots,T-1$,
    $\mP_{x^{1:t}}=\mP_{x^t}$ and there exists $\kappa_P\in(0,1)$ such that, for all $x,y\in\mR^d$,
    \begin{align}
    \cW_2(\mP_x,\mP_y)\leq \kappa_P |x-y|.
    \end{align}
   Suppose also that the conditional kernels $\mQ^{\Ti-\tau_0}$ are Markovian, i.e.,
    $\mQ^{\Ti-\tau_0}_{y^{1:t}}=\mQ^{\Ti-\tau_0}_{y^t}$ for $t=1,2,\cdots,T-1$, and there exists $\kappa_Q\in(0,1)$ such that, for any $t=1,2,\cdots,T-1$ and $x,y\in \mR^d$,
    \begin{align}
    \cW_2(\mQ^{\Ti-\tau_0}_{x},\mQ^{\Ti-\tau_0}_{y})\leq \kappa_Q |x-y|.
    \end{align}
\end{assumption}

\begin{proposition}\label{prop:exp-in-T}
    Suppose that Assumption \ref{ass:contraction} holds in addition to Assumptions \ref{assumption:score-matching-error}-\ref{assumption:stheta}. Define the full-history conditional error
    \begin{align}
        \Delta_{\tep}:=&\ \cW_2^2(\mP^{\tau_0}_1,\mQ_1^{\Ti-\tau_0}) \notag\\
        &+\max_{1\leq t\leq T-1}\mE_{H^{1:t}\sim \mP^{\tau_0}_{1:t}}\Big[\cW_2^2\big(\mP^{\tau_0}_{H^{1:t}},\mQ_{H^{1:t}}^{\Ti-\tau_0}\big)\Big],
    \end{align}
    where $\mP^{\tau_0}_{1:t}$ is the marginal distribution of $H^{1:t}$ under $\mP^{\tau_0}$. Then there exist constants $C_{\kappa_P},C_{\kappa_Q}>0$, independent of $\tau_0,\Ti,\epsilon_{\rm score}$ and $T$, such that
    \begin{align}
        \cA\cW_2^2(\mP,\mQ^{\Ti-\tau_0})\leq C_{\kappa_P}T\tau_0+C_{\kappa_Q}T\Delta_{\tep}.
    \end{align}
In particular, combining the static estimate \eqref{bound:static} with the conditional estimate in Corollary \ref{coro:conditional-error}, both the iterated exponent $1/2^{T-1}$ and the exponential smoothing factor $(K')^T$ in \eqref{eq:AWbounds} can be removed under the additional contraction conditions above.
\end{proposition}

\begin{proof}
    See Appendix \ref{app:proof:sec:2}.
\end{proof}}

\section{\texorpdfstring{ Stability of dynamic mean-variance portfolio selection}{Stability of dynamic mean-variance portfolio selection}}\label{stability}

This section connects diffusion models to {\it downstream decision problems}. We first formulate the dynamic mean-variance portfolio selection problem, allowing a general constraint set $\mathcal K$. We employ a dual argument and then establish the dynamic programming principle for the dual quadratic-hedging problem in Subsection \ref{subsec:DPP}. In Subsection \ref{sec:stability}, we provide an optimal-value stability result and a performance-gap estimate between the data-generating model $\mP$ and an alternative model $\mQ$, using the $\cA\cW_2$ metric.

{ Throughout this section, the stock-price process is exogenous, i.e., an admissible portfolio strategy affects the investor's wealth and gain processes, while leaving the conditional law of future prices unchanged. Thus, the model represents a price-taking investor without market impact.}

Suppose the stock prices $(S^1,S^2,\cdots,S^T)$ follow some model $\mP$ and the risk-free interest rate is zero. For an investment strategy $\vartheta$ (in some admissible set or constraint set), the associated gain process is $(\vartheta \cdot S)_t := \sum_{l=1}^{t-1} \vartheta_l ^\mt(S^{l+1}-S^l)$. We want to maximize the { time-0-optimal (or precommitted)} mean-variance objective:
\begin{align}
&v(\mP;\vartheta):= \mE_\mP[(\vartheta\cdot S)_T]-\frac{\gamma}{2} \Var_{\mP}[(\vartheta\cdot S)_T].\\
&v^*(\mP):= \sup_{\vartheta}v(\mP;\vartheta).\label{MV-problem}
\end{align}
By a classical result, if we know the data-generating model $\mP$, it is sufficient to solve the {\bf quadratic-hedging problem}
\begin{equation}\label{auxiliary_MV}
V(\mP,c)=\min_{\vartheta}\mE_{\mP}\big[|c-(\vartheta\cdot S)_T|^2 \big],
\end{equation}
for some $c>0$. This is because, for $\vartheta$ such that $\mE_\mP[(\vartheta\cdot S)_T]>0$,
\begin{equation}\label{MV-hedging-relation}
	v(\mP;\vartheta)=\sup_{a\in \mR_+}\bigg\{-\frac{\gamma}{2}\mE_\mP\Big[\Big|(\vartheta\cdot S)_T-\Big(\frac{1}{\gamma}+a\Big)\Big|^2\Big]+\frac{1}{2\gamma}+a\bigg\}.
\end{equation}
Consequently, we have the duality relation between these two value functions:
\begin{equation}\label{MV-hedging-duality}
    v^*(\mP)=\sup_{a\in \mR_+} \bigg\{ -\frac{\gamma}{2}V\bigg(\mP,\frac{1}{\gamma}+a\bigg)+\frac{1}{2\gamma}+a\bigg\}.
\end{equation}

\begin{remark}
If we consider a {\it cone} constraint on the portfolio, then it is sufficient to study \eqref{auxiliary_MV} with $c=1$, and there is an explicit relation between solutions of these two problems. For generality, we choose to consider a generic convex constraint set $\mathcal K$. In particular, we can take $\mathcal K$ to be bounded.
\end{remark}

\subsection{Technical preparations and existence of an optimizer}\label{technical}
 This subsection summarizes the technical assumptions and results needed for the quadratic-hedging problem \eqref{auxiliary_MV}. There is a large literature on this type of problem; see, e.g., \citet{LH2007,CS2012,CCK2024}. Here, we consider a general bounded convex constraint set, which makes some classical results inapplicable. For this reason, we restate certain well-known results (such as the DPP) and provide short proofs for completeness.

 Denote $\triangle S^t = S^{t+1}-S^t$. Let $\mF=(\F_t)_{t=1}^T$ be the natural filtration of $S$, with $\F_1$ trivial.

\begin{assumption}\label{assumption-S} 
 $\mP(S^1=s^1)=1$ for some $s^1\in(0,\infty)^d$. Moreover, for some $M_S>0$,
\begin{align}
    M_S:=\mE_{\mP}[|S^{1:T}|^{2T}]<\infty.
\end{align}

\end{assumption}
\begin{assumption}\label{assumption-ND}
There exists $\delta\in (0,1)$ such that
\begin{equation}\label{ND}
\bigg(\mE_\mP\big[\triangle S^t|\F_t \big]\bigg)\bigg(\mE_\mP\big[\triangle S^t|\F_t \big]\bigg)^\mt\preceq (1-\delta)\mE_\mP\big[\triangle S^t(\triangle S^t)^\mt|\F_t \big], \quad t=1,2,\cdots, T-1,
\end{equation}
Equivalently, for any $\phi\in \mR^d$,
\begin{equation}\label{ND-equivalent}
	\Big|\big(\mE_\mP\big[\phi^\mt\triangle S^t|\F_t \big]\big)\Big|^2\leq (1-\delta)\mE_\mP\big[|\phi^\mt\triangle S^t|^2 |\F_t \big]
\end{equation}

\end{assumption}

\begin{definition}
	$L^2_{\mP}(S)$ is the set of $\mF$-adapted, $\mR^d$-valued processes $\{\vartheta_t\}_{t=1}^{T-1}$ such that
	\begin{equation}
		\mE_\mP\big[|\vartheta^\mt _t\triangle S^t|^2\big]<\infty,\forall t=1,2,\cdots, T-1.
	\end{equation}
\end{definition}

\begin{remark} 
\begin{enumerate}
\item Assumption \ref{assumption-S} is a normalization and integrability condition. The deterministic initial price $S^1=s^1$ simply fixes the observed initial state from which the finite-horizon portfolio problem starts. The $2T$-moment condition ensures that, for bounded portfolio constraints, the terminal gain and the mean-variance criterion are well-defined; in particular, constant strategies are in $L^2_{\mP}(S)$.

\item Assumption \ref{assumption-ND} is a technical nondegeneracy condition. It requires the conditional drift of each one-period price increment to be dominated by its conditional second moment, uniformly over time. Its role is the same as the structure-type conditions used in the discrete-time quadratic-hedging and mean-variance literature: following \citet{variance-hedging} (with the standard multidimensional extension), \eqref{ND} implies that the set of unconstrained attainable terminal gains is a closed subspace of $L^2(\mP)$; Proposition \ref{prop:closedness} establishes the corresponding result under bounded convex constraints. This closedness yields the existence of an optimizer for the constrained quadratic-hedging problem \eqref{auxiliary_MV}. Thus \eqref{ND} should be read as a sufficient technical condition for the well-posedness of the auxiliary quadratic-hedging problem.
\end{enumerate}
\end{remark}

We consider {\bf portfolio constraints}.
\begin{definition}
For a closed convex subset $\mathcal K\subset \mR^d$, define $\Theta_{\mathcal K,\mP} = \{\vartheta\in L^2_{\mP}(S): \vartheta_t\in \mathcal K,\ \mP{\rm -a.s.}  \}$, and $G(\Theta_{\mathcal K,\mP}):=\{(\vartheta\cdot S)_T:\vartheta \in \Theta_{\mathcal K,\mP} \}$. In this paper, we assume that $\mathcal K$ is bounded and write $K:=\sup_{\theta\in\mathcal K}|\theta|<\infty$.
\end{definition}

\begin{lemma}\label{predictable-projection}
		Denote $\Pi^S_t:=\sqrt{\mE_\mP[\triangle S^t (\triangle S^t)^\mt |\F_t]}, t=1,2,\cdots, T-1$. Then for any $\vartheta, \bar\vartheta \in L^2_{\mP}(S)$, $(\vartheta \cdot S)_t=(\bar \vartheta\cdot S)_t$, $\mP$-a.s., for every $t=1,2,\cdots, T$ if and only if $\Pi^S_t\vartheta_t = \Pi^S_t \bar\vartheta_t$, $\mP$-a.s., for every $t=1,2,\cdots, T-1$.
\end{lemma}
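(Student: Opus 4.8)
The plan is to reduce both the path condition and the projection condition to a single pointwise (in $t$) statement about the one‑step increment $\psi_t^\mt\triangle S^t$, where $\psi_t:=\vartheta_t-\bar\vartheta_t$, and then to identify the vanishing of that increment with the vanishing of $\Pi^S_t\psi_t$ through a conditional second‑moment identity. Since both maps $\vartheta\mapsto(\vartheta\cdot S)_\cdot$ and $\vartheta_t\mapsto\Pi^S_t\vartheta_t$ are linear, setting $\psi:=\vartheta-\bar\vartheta\in L^2(S)$ it suffices to prove that $(\psi\cdot S)_t=0$ for all $t$ is equivalent to $\Pi^S_t\psi_t=0$ $\mP$‑a.s. for all $t$; translating back through $\psi_t=\vartheta_t-\bar\vartheta_t$ then recovers $\Pi^S_t\vartheta_t=\Pi^S_t\bar\vartheta_t$.

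First I would telescope the discrete stochastic integral. Because $\F_1$ is trivial and $(\psi\cdot S)_1=0$ (empty sum), the one‑step relation $(\psi\cdot S)_{t+1}-(\psi\cdot S)_t=\psi_t^\mt\triangle S^t$ shows, by a straightforward induction on $t$, that the partial sums $(\psi\cdot S)_t$ all vanish if and only if each increment $\psi_t^\mt\triangle S^t=0$ $\mP$‑a.s. for $t=1,2,\dots,T-1$. This isolates the genuine content of the lemma into the single‑period equivalence.

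The heart of the argument is then the claim that, for each $t$, $\psi_t^\mt\triangle S^t=0$ a.s. if and only if $\Pi^S_t\psi_t=0$ a.s. Here I would use that $\psi_t$ is $\F_t$‑measurable (as $\vartheta,\bar\vartheta$ are adapted), so it can be pulled out of $\mE_\mP[\,\cdot\mid\F_t]$, and that $\Pi^S_t$ is by definition the symmetric positive‑semidefinite square root of $A_t:=\mE_\mP[\triangle S^t(\triangle S^t)^\mt\mid\F_t]$, hence $\F_t$‑measurable and satisfying $(\Pi^S_t)^\mt\Pi^S_t=A_t$. Then
\[
\mE_\mP\big[|\psi_t^\mt\triangle S^t|^2\,\big|\,\F_t\big]=\psi_t^\mt A_t\psi_t=|\Pi^S_t\psi_t|^2 ,
\]
and taking expectations yields $\mE_\mP[|\psi_t^\mt\triangle S^t|^2]=\mE_\mP[|\Pi^S_t\psi_t|^2]$. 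Since both integrands are nonnegative, one side vanishes in $L^2$ precisely when the corresponding random variable is $0$ a.s., which gives the equivalence in both directions and completes the proof.

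I expect no serious obstacle; the only points demanding care are the $\F_t$‑measurability of $\Pi^S_t$ and $\psi_t$ (needed both to factor $\psi_t$ out of the conditional expectation and to invoke symmetry of the square root via $(\Pi^S_t)^\mt=\Pi^S_t$) and the elementary fact that a nonnegative random variable with zero mean vanishes a.s. One should also reconcile the index ranges: with the convention $(\vartheta\cdot S)_t=\sum_{l=1}^{t-1}\vartheta_l^\mt\triangle S^l$, the increments captured for $t=1,\dots,T-1$ correspond to partial‑sum equalities that include the terminal index, so it is worth stating explicitly which endpoints are being matched so that the equivalence lines up exactly on both sides.
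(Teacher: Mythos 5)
Your proof is correct and follows essentially the same route as the paper: reduce the path equality to the vanishing of each increment $\psi_t^\mt\triangle S^t$ and then identify $\mE_\mP[|\psi_t^\mt\triangle S^t|^2]$ with $\mE_\mP[|\Pi^S_t\psi_t|^2]$ via the tower property and the definition of $\Pi^S_t$ as the square root of the conditional second-moment matrix. Your added care about the $\F_t$-measurability of $\psi_t$ and $\Pi^S_t$ and about the index endpoints (the stated range $t=1,\dots,T-1$ for the partial sums only pins down increments up to $l=T-2$) is a legitimate refinement of details the paper's proof leaves implicit.
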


\begin{proof}
    See Appendix \ref{app:proof:stability}.
\end{proof}

\begin{proposition}\label{prop:closedness}
$G(\Theta_{\mathcal K,\mP})$ is a closed convex subset of $L^2(\mP)$.
\end{proposition}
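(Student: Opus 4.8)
The plan is to verify convexity directly and then obtain closedness through a weak-compactness argument that exploits the boundedness of $K$, so that the non-degeneracy condition \eqref{ND} is not actually needed here.

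First I would dispatch convexity. For $\vartheta,\bar\vartheta\in\Theta_{K,\mP}$ and $\lambda\in[0,1]$, the process $\lambda\vartheta+(1-\lambda)\bar\vartheta$ is $\mF$-adapted and $K$-valued $\mP$-a.s. (convexity of $K$), hence lies in $\Theta_{K,\mP}$; since the wealth map $\vartheta\mapsto(\vartheta\cdot S)_T=\sum_{t=1}^{T-1}\vartheta_t^\mt\triangle S^t$ is linear, the corresponding terminal wealth equals $\lambda(\vartheta\cdot S)_T+(1-\lambda)(\bar\vartheta\cdot S)_T$, which therefore lies in $G(\Theta_{K,\mP})$.

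For closedness I would work in the Hilbert space $H:=\bigoplus_{t=1}^{T-1}L^2(\Omega,\F_t,\mP;\mR^d)$ with inner product $\langle\vartheta,\bar\vartheta\rangle_H=\sum_{t=1}^{T-1}\mE_\mP[\vartheta_t^\mt\bar\vartheta_t]$. Because $K$ is bounded, every $K$-valued adapted strategy automatically lies in $L^2(S)$ (Assumption \ref{assumption-S}), so $\Theta_{K,\mP}$ is exactly the set of such strategies and is a bounded convex subset of $H$; it is also strongly closed (an $L^2$-limit has an a.s.-convergent subsequence, $K$ is closed, and each $L^2(\Omega,\F_t)$ is a closed subspace so adaptedness survives). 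By Mazur's theorem a strongly closed convex set is weakly closed, and a bounded weakly closed set in a Hilbert space is weakly compact; thus $\Theta_{K,\mP}$ is weakly compact. Next I would check that the linear wealth map $G:H\to L^2(\mP)$, $G(\vartheta)=\sum_{t=1}^{T-1}\vartheta_t^\mt\triangle S^t$, is bounded: Assumption \ref{assumption-S} gives $\mE_\mP|\vartheta_t^\mt\triangle S^t|^2=\mE_\mP\big[\vartheta_t^\mt\mE_\mP[\triangle S^t(\triangle S^t)^\mt|\F_t]\vartheta_t\big]\le C\,\mE_\mP|\vartheta_t|^2$, so $G$ is a bounded linear operator and hence weak-to-weak continuous. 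Then, given $\xi^{(n)}=(\vartheta^{(n)}\cdot S)_T\to\xi$ in $L^2(\mP)$, weak compactness yields a subsequence with $\vartheta^{(n)}\rightharpoonup\vartheta\in\Theta_{K,\mP}$; weak continuity gives $\xi^{(n)}=G(\vartheta^{(n)})\rightharpoonup G(\vartheta)$, and comparing with the strong limit forces $\xi=G(\vartheta)=(\vartheta\cdot S)_T\in G(\Theta_{K,\mP})$.

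The hard part will be the closedness step, and the crux is recognizing that boundedness of $K$ already supplies the compactness, circumventing the non-degeneracy condition on which the unconstrained theory in \cite{variance-hedging} relies. Two points warrant care: verifying that $\Theta_{K,\mP}$ is weakly closed (strong closedness plus preservation of adaptedness under weak limits, then Mazur), and verifying boundedness of $G$, which is precisely where the uniform bound on $\mE_\mP[|S^{t+1}|^2|\F_t]$ from Assumption \ref{assumption-S} enters. As a fallback avoiding the weak-compactness machinery, I would instead invoke Komlós's theorem to extract a subsequence whose Cesàro means $\tfrac1N\sum_{n\le N}\vartheta^{(n)}$ converge $\mP$-a.s. to some $\vartheta\in\Theta_{K,\mP}$ (using convexity and closedness of $K$), and then identify the limit of the associated wealths with $\xi$ by dominated convergence, once more using Assumption \ref{assumption-S}.
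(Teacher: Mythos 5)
Your argument is correct in outline but takes a genuinely different route from the paper's. The paper first invokes the non-degeneracy condition \eqref{ND} (via the cited closedness of $G(L^2(S))$) to write the $L^2$-limit as $(\vartheta\cdot S)_T$ for some $\vartheta\in L^2(S)$, then runs a backward induction through the semimartingale decomposition $S=s^1+A+M$, using orthogonality and \eqref{ND} again to deduce $\Pi^S_t\vartheta^n_t\to\Pi^S_t\vartheta_t$ in $L^2$, and finally uses compactness of $K$ together with Lemma \ref{predictable-projection} to replace $\vartheta$ by a $K$-valued strategy with the same wealth. Your proof instead extracts the limiting strategy directly from weak compactness of the bounded, convex, strongly closed set $\Theta_{K,\mP}$ in $H=\bigoplus_t L^2(\Omega,\F_t,\mP;\mR^d)$, and so dispenses with \eqref{ND} entirely; this is a real gain, since it shows the constrained closedness needs only the boundedness of $K$ and second moments of $S$, whereas the paper's argument is tied to the non-degeneracy of the market. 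Your Komlós fallback is also fully correct and even more economical: the Cesàro means stay $K$-valued and adapted, converge a.s., and $|\vartheta_t^\mt\triangle S^t|\le(\sup_{k\in K}|k|)\,|\triangle S^t|\in L^2$ dominates, which only uses the \emph{unconditional} bound $\mE_\mP|\triangle S^t|^2<\infty$.

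The one step that does not follow from Assumption \ref{assumption-S} as stated is the operator bound $\mE_\mP|\vartheta_t^\mt\triangle S^t|^2\le C\,\mE_\mP|\vartheta_t|^2$, which you need for $G$ to be bounded on all of $H$ and hence weak-to-weak continuous. It requires $\mE_\mP[\triangle S^t(\triangle S^t)^\mt\mid\F_t]\preceq C I$ almost surely, but
\begin{equation}
\mE_\mP\big[|\triangle S^t|^2\mid\F_t\big]=\mE_\mP\big[|S^{t+1}|^2\mid\F_t\big]-2(S^t)^\mt\mE_\mP\big[S^{t+1}\mid\F_t\big]+|S^t|^2\le\big(\sqrt{C}+|S^t|\big)^2,
\end{equation}
and the assumption only bounds $\mE_\mP[|S^{t+1}|^2\mid\F_t]$, not $|S^t|^2$ itself ($S^1$ is constant, but for $t\ge 2$ a bounded conditional second moment does not force $S^t$ to be a.s. bounded). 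So $G$ need not be a bounded operator on $H$. The repair is cheap and keeps your architecture: on $\Theta_{K,\mP}$ the images $G(\vartheta^{(n)})$ are uniformly bounded in $L^2(\mP)$ because $\mE_\mP|\vartheta_t^{(n)\mt}\triangle S^t|^2\le(\sup_{k\in K}|k|^2)\,\mE_\mP|\triangle S^t|^2$, so to prove $G(\vartheta^{(n)})\rightharpoonup G(\vartheta)$ it suffices to test against bounded $\eta\in L^\infty(\mP)$; for such $\eta$ one has $\mE_\mP[\eta\,\vartheta_t^{(n)\mt}\triangle S^t]=\mE_\mP[\vartheta_t^{(n)\mt}\,\mE_\mP[\eta\,\triangle S^t\mid\F_t]]$ with $\mE_\mP[\eta\,\triangle S^t\mid\F_t]\in L^2(\Omega,\F_t,\mP;\mR^d)$, and weak convergence of $\vartheta_t^{(n)}$ in $L^2(\Omega,\F_t,\mP;\mR^d)$ finishes the identification. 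Alternatively, just promote your Komlós fallback to the main argument, which avoids the issue altogether.
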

\begin{proof}
    See Appendix \ref{app:proof:stability}.
\end{proof}

\begin{corollary}
	There exists an optimizer in $\Theta_{\mathcal K,\mP}$ for the quadratic hedging problem \eqref{auxiliary_MV}, and the optimizer is unique in value, i.e., $(\vartheta\cdot S)_T=(\tilde\vartheta\cdot S)_T$, $\mP$-a.s., for any two optimizers $\vartheta$ and $\tilde \vartheta$.
\end{corollary}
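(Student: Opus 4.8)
The plan is to recognize the constrained quadratic hedging problem as a nearest-point problem in the Hilbert space $L^2(\mP)$ and to apply the projection theorem for closed convex sets; the preceding Proposition has already carried out the essential analytic work. First I would rewrite
\begin{equation}
V(\mP,c)=\min_{\vartheta\in \Theta_{K,\mP}}\mE_\mP\big[|c-(\vartheta\cdot S)_T|^2\big]=\min_{g\in G(\Theta_{K,\mP})}\|c-g\|_{L^2(\mP)}^2,
\end{equation}
so that solving \eqref{auxiliary_MV} amounts to finding the element of $G(\Theta_{K,\mP})$ closest to the constant random variable $c$. Since $\mP$ is a probability measure, $c\in L^2(\mP)$; and since $K$ is nonempty, choosing any $k\in K$ and the constant strategy $\vartheta_t\equiv k$—which lies in $L^2(S)$ by the earlier remark and takes values in $K$—shows that $G(\Theta_{K,\mP})$ is nonempty.

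Next I would invoke the preceding Proposition, which asserts that $G(\Theta_{K,\mP})$ is a closed convex subset of $L^2(\mP)$. The Hilbert-space projection theorem then produces a unique minimizer $g^\star\in G(\Theta_{K,\mP})$ of the map $g\mapsto \|c-g\|_{L^2(\mP)}^2$. By the very definition of $G(\Theta_{K,\mP})$, there exists $\vartheta^\star\in\Theta_{K,\mP}$ with $(\vartheta^\star\cdot S)_T=g^\star$; this $\vartheta^\star$ is an optimizer of \eqref{auxiliary_MV}, which establishes existence.

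For uniqueness in value, I would exploit that the projection point $g^\star$ is unique as an element of $L^2(\mP)$. If $\vartheta$ and $\tilde\vartheta$ are both optimizers, then $(\vartheta\cdot S)_T$ and $(\tilde\vartheta\cdot S)_T$ both realize the minimal distance to $c$, so each equals $g^\star$ and hence $(\vartheta\cdot S)_T=(\tilde\vartheta\cdot S)_T$ $\mP$-a.s. Note that the strategies themselves need not coincide: by Lemma \ref{predictable-projection} this is equivalent to $\Pi^S_t\vartheta_t=\Pi^S_t\tilde\vartheta_t$ for all $t$, leaving $\vartheta_t$ and $\tilde\vartheta_t$ free to differ on the kernel of $\Pi^S_t$.

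Because the closedness and convexity of $G(\Theta_{K,\mP})$ are already supplied by the Proposition, no genuinely hard step remains here: the only items to verify are that the target $c$ lies in $L^2(\mP)$ and that the feasible set is nonempty, both of which are routine. The sole conceptual point—rather than a technical obstacle—is correctly recasting the minimization over strategies as a projection over terminal wealths and then lifting the optimal $g^\star$ back to an optimal $\vartheta^\star$.
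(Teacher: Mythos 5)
Your proposal is correct and matches the paper's intended argument: the paper states this corollary without a separate proof, relying (as you do) on the preceding proposition that $G(\Theta_{K,\mP})$ is closed and convex in $L^2(\mP)$ together with the Hilbert-space projection theorem (equivalently, the strong convexity of the quadratic objective noted in the earlier remark) to get existence and uniqueness of the optimal terminal wealth. Your additional checks that $c\in L^2(\mP)$ and that the feasible set is nonempty are routine but correctly identified.
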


\subsection{Dynamic programming}\label{subsec:DPP}
To derive the DPP, we define a conditional version of \eqref{auxiliary_MV} (for $t=1,2,\cdots,T$):
\begin{equation}\label{dynamic_auxiliary}
V(t,w,s^{1:t}):= \inf_{\vartheta\in \Theta^{t,s^{1:t}}_{\mathcal K,\mP}}	\mE_{\mP_{s^{1:t}}}\bigg[\bigg| c-w-\sum_{l=t}^{T-1}\vartheta_{l}^\mt\triangle S^l   \bigg|^2     \bigg].
\end{equation}
\begin{definition}
$\Theta^{t,s^{1:t}}_{\mathcal K,\mP}$ is the set of processes $\vartheta_l$, $l=t,\cdots, T-1$, that take values in $\mathcal K$, $\mP_{s^{1:t}}$-a.s., and are adapted to $\{\F^t_l\}_{l=t}^{T-1}$, with $\F^t_l=\sigma(S^{t+1},\cdots, S^l)$\footnote{By convention, we assume $\F^t_t$ is trivial.}
\end{definition}
Then, assuming $(\vartheta\cdot S)_1=0$ and $S^1=s^1$ is a constant, we have
\begin{equation}
	\inf_{\vartheta\in \Theta_{\mathcal K,\mP}}\mE_{\mP}\big[|c-(\vartheta\cdot S)_T|^2 \big] = V(1,0,s^1).
\end{equation}
Note that \eqref{ND} is written in terms of conditional expectations, where $\F_t=\F^0_t=\sigma(S^1,\cdots,S^t)$. It follows from Subsection \ref{technical} that \eqref{dynamic_auxiliary} also has an optimizer that is unique in value.

\begin{theorem}\label{DPP:thm}
The terminal condition holds for every $(w,s^{1:T})\in\mR\times\mR^{dT}$:
\begin{equation}
V(T,w,s^{1:T}) = |c-w|^2.\label{DPP-1}
\end{equation}
For each $t=1,\ldots,T-1$, the conditional value functions may be chosen measurable so that, for every $w\in\mR$ and $\mP_{1:t}$-almost every $s^{1:t}$, with an exceptional set independent of $w$,
\begin{equation}
V(t,w,s^{1:t}) = \inf_{\theta\in \mathcal K} \mE_{\mP_{s^{1:t}}}\Big[V\Big(t+1,w+\theta^\mt\triangle S^t,(s^{1:t},S^{t+1})\Big) \Big].	\label{DPP-2}
\end{equation}
Moreover, the infimum in \eqref{DPP-2} admits a measurable selector $\hat\theta$ satisfying, for every $w\in\mR$ and the same histories,
\begin{equation}\label{hattheta}\hat\theta(t,w,s^{1:t}) \in \argmin_{\theta\in \mathcal K}\mE_{\mP_{s^{1:t}}}\Big[V\Big(t+1,w+\theta^\mt\triangle S^t,(s^{1:t},S^{t+1})\Big) \Big].
\end{equation}
For the initial problem $V(1,0,s^1)$, an optimizer is given recursively by $\hat \vartheta _t = \hat \theta(t,(\hat\vartheta\cdot S)_t,S^{1:t})$.
\end{theorem}

\begin{proof}
    See Appendix \ref{app:proof:stability}.
\end{proof}

\subsection{Stability}\label{sec:stability}

For $\mu\in\{\mP,\mQ\}$, denote by $V(\cdot;c,\mu)$ the function obtained from DPP \eqref{DPP-1}-\eqref{DPP-2} under the model $\mu$. The next theorem bounds the difference between the two value functions by $\cA\cW_2(\mP,\mQ)$. For $t=1,\ldots,T-1$, write the local future second moments as
\[
M_2^\mP(s^{1:t}):=\Big(\mE_{\mP_{s^{1:t}}}[|S^{t+1:T}|^2]\Big)^{1/2},\qquad
M_2^{\mQ}(\tilde s^{1:t}):=\Big(\mE_{\mQ_{\tilde s^{1:t}}}[|\tilde S^{t+1:T}|^2]\Big)^{1/2}.
\]
We use the convention $M_2^\mP(s^{1:T})=M_2^{\mQ}(\tilde s^{1:T})=0$. For $\pi\in \Pi_{\rm bc}(\mP,\mQ)$, set
\[
D^\pi_t(s^{1:t},\tilde s^{1:t}):=
\Big(\mE_{\pi_{s^{1:t},\tilde s^{1:t}}}[|S^{t+1:T}-\tilde S^{t+1:T}|^2]\Big)^{1/2}.
\]

\begin{theorem}\label{DPP-stability}
		For any $\pi\in \Pi_{\rm bc}(\mP,\mQ)$, any $t\in\{1,2,\ldots,T-1\}$, and any $w,\tilde w\in\mR$, the following bound holds for $\pi_{1:t}$-almost every $(s^{1:t},\tilde s^{1:t})$:
	\begin{align}
	|V(t,w,s^{1:t};c,\mP)-& V(t,\tilde w,\tilde s^{1:t};c,\mQ)|\leq C\Big\{\big(|w-\tilde w|+|s^t-\tilde s^t|\big)B_t\\
	&+\big(B_t+|w-\tilde w|+|s^t-\tilde s^t|\big)D^\pi_t(s^{1:t},\tilde s^{1:t})+\big(D^\pi_t(s^{1:t},\tilde s^{1:t})\big)^2\Big\},
	\label{DPP-stability-eq}
	\end{align}
where
\[
B_t:=1+c+|w|+|\tilde w|+|s^t|+|\tilde s^t|+M_2^\mP(s^{1:t})+M_2^{\mQ}(\tilde s^{1:t}),
\]
and $C$ is a constant depending only on $T$ and $K$. In particular, when $t=1$, we have
\begin{equation}\label{stability-initial}
|V(1,0,s^1;c,\mP)-V(1,0,s^1;c,\mQ)|\leq C_0\big\{(1+|s^1|+c)\cA\cW_2(\mP,\mQ)+\cA\cW_2(\mP,\mQ)^2\big\},
\end{equation}
where $C_0$ depends only on $T$, $K$ and $M_S$.
\end{theorem}

\begin{proof}
    See Appendix \ref{app:proof:stability}.
\end{proof}

Recall the MV value function $v^*(\mP)$ defined in \eqref{MV-problem}. We now establish the stability of $v^*$ in terms of the adapted Wasserstein metric.

\begin{corollary}\label{coro:MVstability}
Assume $\cA\cW_2(\mP,\mQ)<1$. Then
\begin{equation}\label{stability-vstar}
	|v^*(\mP)-v^*(\mQ)|\leq C\cA\cW_2(\mP,\mQ),
\end{equation}
where $C$ depends only on $\gamma$, $T$, $K$ and $M_S$.
\end{corollary}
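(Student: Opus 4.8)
The plan is to reduce everything to the already-established stability estimate \eqref{stability-initial} for the quadratic-hedging value $V(\mP,c):=V(1,0,s^1;c,\mP)$ through the duality \eqref{MV-hedging-duality}. Writing $g_\mP(a):=-\tfrac\gamma2 V(\mP,\tfrac1\gamma+a)+\tfrac1{2\gamma}+a$, the duality reads $v^*(\mP)=\sup_{a\ge0}g_\mP(a)$, and likewise for $\mQ$. The naive bound $|\sup_a g_\mP-\sup_a g_\mQ|\le\sup_{a\ge0}|g_\mP(a)-g_\mQ(a)|=\tfrac\gamma2\sup_{a\ge0}|V(\mP,\tfrac1\gamma+a)-V(\mQ,\tfrac1\gamma+a)|$ is useless on its own, because the right-hand side of \eqref{stability-initial} grows linearly in $c=\tfrac1\gamma+a$ while the supremum runs over an unbounded set. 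The whole game is therefore to confine the maximizing multiplier $a$ to a bounded interval that is uniform in the model.

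To localize, I would first record a model-uniform a priori bound on hedging wealth. Since $K$ is bounded and, by Assumption \ref{assumption-S} (transferred to $\mQ$ using $\cW_2(\mP,\mQ)\le\cA\cW_2(\mP,\mQ)<1$ to control the $L^2$-norm of each coordinate $S^l$), the second moments of $S$ are uniformly controlled, there is a constant $M_0$ independent of the model with $\|(\vartheta\cdot S)_T\|_{L^2}\le M_0$ for every admissible $\vartheta$. The map $c\mapsto V(\mP,c)$ is convex, being the partial minimization of the jointly convex functional $(c,\vartheta)\mapsto\mE_\mP[(c-(\vartheta\cdot S)_T)^2]$ over the convex set $\Theta_{K,\mP}$, so $g_\mP$ is concave; moreover, by the envelope identity, every element of $\partial_c V(\mP,c)$ equals $2\mE_\mP[c-(\vartheta^*_c\cdot S)_T]\ge2(c-M_0)$. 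Consequently any supergradient of $g_\mP$ at $a$ is at most $\gamma(M_0-a)$, which is strictly negative for $a>M_0$. Concavity then forces the maximizer of $g_\mP$ to lie in $[0,M_0]$, and the identical bound holds for $g_\mQ$.

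With both maximizers confined to $[0,M_0]$, set $c_{\max}:=\tfrac1\gamma+M_0$ and let $a^*_\mP\in[0,M_0]$ attain $v^*(\mP)$. Then $v^*(\mP)=g_\mP(a^*_\mP)\le g_\mQ(a^*_\mP)+\tfrac\gamma2|V(\mP,\tfrac1\gamma+a^*_\mP)-V(\mQ,\tfrac1\gamma+a^*_\mP)|\le v^*(\mQ)+\tfrac\gamma2\sup_{c\le c_{\max}}|V(\mP,c)-V(\mQ,c)|$, and the reverse inequality follows by symmetry, so $|v^*(\mP)-v^*(\mQ)|\le\tfrac\gamma2\sup_{c\le c_{\max}}|V(\mP,c)-V(\mQ,c)|$. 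Finally, for $c\le c_{\max}$ the stability estimate \eqref{stability-initial} gives $|V(\mP,c)-V(\mQ,c)|\le C\{(1+s^1+c_{\max})\cA\cW_2(\mP,\mQ)+\cA\cW_2(\mP,\mQ)^2\}$; since $\cA\cW_2(\mP,\mQ)<1$ the quadratic term is dominated by the linear one, and absorbing $\gamma$, $s^1$, $c_{\max}$ into the constant yields $|v^*(\mP)-v^*(\mQ)|\le C\cA\cW_2(\mP,\mQ)$.

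The main obstacle is the localization in the second paragraph: both the model-uniform wealth bound $M_0$ and the resulting uniform cap $a\le M_0$ must be valid under $\mQ$ as well as $\mP$. I would secure this by transferring the moment control of Assumption \ref{assumption-S} from $\mP$ to $\mQ$ via $\cW_2(\mP,\mQ)\le\cA\cW_2(\mP,\mQ)<1$, which bounds the change in $\|S^l\|_{L^2}$ for each coordinate, together with the boundedness of $K$; the supergradient computation then needs only convexity of $V(\cdot,c)$ and the envelope identity. Everything downstream is the routine sup-of-concave comparison and a direct appeal to \eqref{stability-initial}.
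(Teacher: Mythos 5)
Your proposal is correct and follows essentially the same route as the paper: reduce via the duality \eqref{MV-hedging-duality} to the stability bound \eqref{stability-initial}, localize the optimal multiplier $a$ to a model-uniform bounded interval using the boundedness of $K$ and the transfer of the moment bound from $\mP$ to $\mQ$ through $\cW_2\le\cA\cW_2<1$, and then take the supremum of the difference over that interval. The only (cosmetic) difference is the localization mechanism — you argue via convexity of $c\mapsto V(\cdot;c,\mu)$ and a supergradient/envelope computation, whereas the paper derives the direct quadratic lower bound $V(1,0,s^1;c,\mu)\ge c^2-2Lc$ so that the dual objective tends to $-\infty$ in $a$ — but both rest on the identical uniform estimate $\mE_\mu[(\vartheta\cdot S)_T]\le L$.
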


\begin{proof}
    See Appendix \ref{app:proof:stability}.
\end{proof}

\begin{remark}
The dynamic programming method also applies to the {\it fully invested problem}. In this case, admissibility requires $(\theta\cdot S)_t=\theta_t^\mt S^t$, and the dynamic program minimizes over $\theta\in \mathcal K(w,s^t):=\{\theta:\theta^\mt s^t=w\}$. Thus, at each period, we reinvest all of our wealth in the risky assets $S$.
\end{remark}

{ 
Based on the optimal-value stability result in Corollary \ref{coro:MVstability}, we now discuss performance-gap bounds for sufficiently regular feedback policies obtained from the approximate model $\mQ$. To this end, we need the following fixed-policy stability result. In fact, we obtain stability under the ordinary Wasserstein metric because the nonanticipative feedback policy itself ensures the admissibility of the induced trading strategies under both $\mP$ and $\mQ$, without requiring a bicausal coupling. However, we emphasize that our results in Section \ref{sec:mainresults} are crucial for establishing the desired performance-gap bound because the optimal-value gap is controlled by $\cA\cW_2$.

Let $\phi=\{\phi_t\}_{t=1}^{T-1}$ be a feedback policy with
\[
\phi_t:\mR\times \mR^{dt}\to \mathcal K.
\]
For each deterministic path $s^{1:T}$, we define the induced wealth path iteratively by $(\phi\cdot s)_1=0$ and
\begin{equation}
(\phi\cdot s)_{t+1}=(\phi\cdot s)_t+\phi_t((\phi\cdot s)_t,s^{1:t})^\mt(s^{t+1}-s^t),\quad t=1,\ldots,T-1.
\end{equation}
We say that $\phi$ is an $L$-Lipschitz feedback policy for some $L>0$ if for every $t=1,\ldots,T-1$,
\begin{equation}\label{regular-feedback-primitive}
|\phi_t(w,s^{1:t})-\phi_t(\tilde w,\tilde s^{1:t})|
\leq L\big(|w-\tilde w|+|s^{1:t}-\tilde s^{1:t}|\big).
\end{equation}
As a deterministic function of the path $s^{1:T}$, we can evaluate the performance of $\phi$ under any model $\mu$ by
\begin{equation}
v(\mu;\phi):=\mE_\mu[(\phi\cdot S)_T]-\frac{\gamma}{2}\Var_\mu((\phi\cdot S)_T).
\end{equation}

\begin{lemma}\label{lemma:fixed-policy-stability}
Assume $\mathcal K$ is bounded and let $K:=\sup_{\theta\in \mathcal K}|\theta|<\infty$. Suppose that $\phi$ is a Lipschitz feedback policy satisfying \eqref{regular-feedback-primitive}. Then, for any $\mQ\in \cP(\mR^{dT})$ such that
\begin{equation}\label{price-moment-fixed-policy}
\mE_\mQ[|S^{1:T}|^{2T}]\leq M_S'
\end{equation}
for some constant $M_S'>0$,
we have
\begin{equation}
|v(\mP;\phi)-v(\mQ;\phi)|\leq C\cW_2(\mP,\mQ),
\end{equation}
where $C$ only depends on $\gamma$, $T$, $K$, $L$, $M_S$ and $M_S'$.
\end{lemma}
\begin{proof}
    See Appendix \ref{app:proof:stability}.
\end{proof}
\begin{corollary}\label{coro:regular-policy-transfer}
Assume $\cA\cW_2(\mP,\mQ)<1$. Let $\phi^\epsilon$ be an $L_{\phi^\epsilon}$-Lipschitz model-independent feedback policy satisfying \eqref{regular-feedback-primitive}, and suppose that the $2T$-moment condition \eqref{price-moment-fixed-policy} holds for this pair $(\mP,\mQ)$. If
\begin{equation}
v(\mQ;\phi^\epsilon)\geq v^*(\mQ)-\epsilon,
\end{equation}
then
\begin{equation}
v(\mP;\phi^\epsilon)\geq v^*(\mP)-\epsilon-C\cA\cW_2(\mP,\mQ),
\end{equation}
where $C$ depends only on $\gamma$, $T$, $K$, $L_{\phi^\epsilon}$, $M_S$ and $M_S'$.
\end{corollary}

\begin{proof}
    See Appendix \ref{app:proof:stability}.
\end{proof}

\begin{remark}
One can avoid imposing the $2T$-moment condition on $\mQ$ separately by additionally showing that the generative model $\mQ$ satisfies \eqref{price-moment-fixed-policy}, for example, by using dissipativity or tail estimates for the sampler.
\end{remark}

\begin{remark}
Corollary \ref{coro:regular-policy-transfer} connects the generative model to downstream decision-making problems via $\cA\cW_2$. It states that if we can solve the problem under $\mQ$ and obtain a sufficiently regular feedback policy that is $\epsilon$-suboptimal (for example, by obtaining $\phi^\epsilon$ from a reinforcement-learning algorithm under the alternative model $\mQ$), then this policy will also perform well under the real model $\mP$. The regularity condition is crucial for this transfer result, and whether it can be relaxed is an interesting question for future research.
\end{remark}

}

{ 
If the real data law $\mP$ and the generated law $\mQ^{\Ti-\tau_0}$ in Section \ref{sec:mainresults} are interpreted as laws of log prices instead of prices, the following elementary transfer result converts the log-price error into a price-law error. As a result, we still have a performance-gap bound similar to that in Corollary \ref{coro:regular-policy-transfer} for the MV portfolio selection problem.

Let $\Phi:\mR^{dT}\to(0,\infty)^{dT}$ be the componentwise exponential map, $\Phi(x^{1:T})=(e^{x^1},\ldots,e^{x^T})$.

\begin{lemma}\label{lemma:log-price-transfer}
Let $\mu$ and $\nu$ be two laws on $\mR^{dT}$. Suppose that, for some $a>2$ and $K_a<\infty$,
\[
\sup_{t=1,\ldots,T}\left\{\mE_\mu[e^{a|X^t|}]+\mE_\nu[e^{a|Y^t|}]\right\}\leq K_a.
\]
Then, for every $R>0$,
\[
\cA\cW_2^2(\Phi_\#\mu,\Phi_\#\nu)
\leq e^{2R}\cA\cW_2^2(\mu,\nu)+C_{d,T}K_a e^{-(a-2)R}.
\]
Consequently, if $\cA\cW_2(\mu,\nu)\leq 1$, then
\[
\cA\cW_2(\Phi_\#\mu,\Phi_\#\nu)
\leq C_{a,d,T,K_a}\cA\cW_2(\mu,\nu)^{(a-2)/a}.
\]
If $a>2T$, then $\Phi_\#\mu$ and $\Phi_\#\nu$ also have finite $2T$-moments.
\end{lemma}

\begin{proof}
    See Appendix \ref{app:proof:stability}.
\end{proof}
}

\section{Applications}\label{sec:app}

In this section, we illustrate the application of the diffusion-based adaptive sampling method proposed in Section \ref{sec:sampling} to MV portfolio selection problems. Specifically, we draw many paths from the sampler and use them to approximate price scenarios under the data-generating model. {The results in Section \ref{sec:mainresults} show that $\mP$ and $\mQ$ are close in $\cA\cW_2$ for a sufficiently long diffusion time and a sufficiently small score-matching error.} Meanwhile, Corollary \ref{coro:MVstability} shows that the value functions under these two models are also close. This motivates a deep reinforcement-learning approach that treats $\mQ$ as an approximate environment, which we call a {\it generative environment}. Because we can generate a large number of samples with statistical properties similar to those of the limited data ($\mP\approx \mQ$), our approach can potentially address the scarcity and time-sensitivity of financial data.

\subsection{Implementation of Algorithm \ref{adaptive-sampling}}

{ Real financial time series typically exhibit path dependence, and the conditional sampler should retain information from the observed or generated price history $s^{1:t}$ when generating the next coordinate. In the formal analysis, this history enters through the full noisy history $H^{1:t}=h^{1:t}$, and, in the implementation, a natural first choice is to use the clean history $s^{1:t}$ as its proxy. Feeding the whole history to the score network, however, creates an input whose dimension grows with $t$ and leads to the curse of dimensionality. Thus, we use an RNN module $R_\theta$ to compress the information in $s^{1:t}$ into a fixed-dimensional feature $f^t$. More specifically, inspired by papers on conditional diffusion models, such as \citet{rasul2021} and \citet{yan2021}, we define feature variables $f^{1:T}$ recursively by $f^t = R_\theta(s^t,f^{t-1})$, where $R_\theta:\mR^d\times \mR^{d'}\to \mR^{d'}$ and $d'$ is the dimension of the feature space. With this approach, we propose to jointly train the RNN encoder $R_\theta$ and the score network $s_\theta: [0,\Ti]\times \mR^{d'}\times \mR^d\to \mR^d$ using the following variant of \eqref{conditional:DSM:noise}}:
\begin{equation}
\label{conditional:DSM:implementation}
\frac{1}{(T-1)M}\sum_{t=1}^{T-1}\sum_{m=1}^{M}\Big|\sqrt{h_2(\tau_{(m)})}s_\theta\Big(\tau_{(m)},f^t_{(m)},s^{t+1}_{(m)}e^{-\tau_{(m)}}+\sqrt{h_2(\tau_{(m)})}\bz_{(m)}\Big)+\bz_{(m)}\Big|^2.
\end{equation}
Here, $\{s^{1:T}_{(m)}\}_{m=1}^M$ is a batch of data, and $\{f^{1:T}_{(m)}\}_{m=1}^M$ are the corresponding feature variables obtained by feeding the data to the RNN encoder $R_\theta$. The variables $\{\tau_{(m)}\}_{m=1}^M$ are randomly sampled from $[0,\Ti]$, and $\{\bz_{(m)}\}_{m=1}^M$ are sampled from the noise distribution $\N(0,I)$. Although only $s_\theta$ explicitly appears in \eqref{conditional:DSM:implementation}, $R_\theta$ is implicitly trained by backpropagation. Algorithm \ref{algo:training-s-r} provides the pseudocode for the training procedure.

{ We emphasize that, even though the theoretical guarantees in Section \ref{sec:mainresults} are stated for full-history conditional kernels, the RNN encoder used here is only an implementation-level representation of the conditioning history rather than a new modeling perspective. If the representation is sufficient, i.e., $s^{1:t}=(s^{1:t})'$ if and only if $f^t=(f^t)'$, replacing the full history by $f^t$ is a reparametrization. If the RNN compression is lossy, the resulting representation error is part of the overall neural-network approximation error $\epsilon_{\rm score}$. This term may enlarge the numerical error attainable by the chosen architecture, while leaving the theoretical results unchanged because they take $\epsilon_{\rm score}$ as an input. We leave the detailed analysis to future work.}

\begin{algorithm}
\caption{Training the score network and the RNN encoder}
\label{algo:training-s-r}

{\bf Initialization:} Initialize $s_\theta$ and $R_\theta$; collect the training data.

\begin{algorithmic}[1]
\FOR{\texttt{epochs}}
\STATE{Sample a batch of data $\{s^{1:T}_{(m)}\}_{m=1}^M$.}

\STATE{Compute $\{f^{1:T}_{(m)}\}_{m=1}^M$ using $R_\theta$.}

\STATE{Sample $\{\tau_{(m)}\}_{m=1}^M$ uniformly from $[0,\Ti]$.}

\STATE{Sample $\{\bz_{(m)}\}_{m=1}^M$ from $\N(0,I)$.}

\STATE{Update $\theta$ by minimizing \eqref{conditional:DSM:implementation}.}

\ENDFOR

\end{algorithmic}

{\bf Outputs:} Trained score network $s_\theta$ and RNN encoder $R_\theta$.

\end{algorithm}

For the sampling procedure, we consider an implementable version of Algorithm \ref{adaptive-sampling}, which has the following two refinements: we use the feature variables obtained from the RNN encoder to address high dimensionality, and we use the conventional predictor-corrector sampler to reduce discretization error from the SDE solver. The sampling procedure is summarized in Algorithm \ref{adaptive_sampling_implementation}.

\begin{algorithm}
\caption{Adaptive sampling: implementation}
\label{adaptive_sampling_implementation}

{\bf Inputs:} Trained score network $s_\theta$ and RNN encoder $R_\theta$; number of time-discretization (predictor) steps $\Npre$; number of corrector steps $\Ncor$; { predictor grid $\{\tau_k\}_{k=0}^{\Npre}$; corrector step sizes $\{\eta_k\}_{k=0}^{\Npre-1}$}; initial feature $f^0$.

\begin{algorithmic}[1]
\STATE{$f \leftarrow f^0$.}
\STATE{Sample $s^{1:T}\sim \N(0_{\mR^{dT}},I_{dT\times dT})$.}

\FOR{$t=1,\cdots, T$}

\FOR{$k=0,1,\cdots,\Npre-1$}

\STATE{ $s^{t}\leftarrow s^{t}+ \Big(s^{t}+2s_\theta\big(\Ti-\tau_k,f, s^{t}\big)\Big)\cdot (\tau_{k+1}-\tau_k)+\sqrt{2}\bep_k$, where { $\bep_k \sim \N(0_{\mR^d},(\tau_{k+1}-\tau_k)I_{d\times d})$}.
}

\STATE{\# \texttt{The predictor step.}}

\FOR{$l=0,1,\cdots, \Ncor-1$}

\STATE{
$s^{t}\leftarrow s^{t} + { \eta_k}s_\theta\big(\Ti-\tau_k,f, s^{t}\big) + \sqrt{2}\bep'_{k,l} $, where { $\bep'_{k,l}\sim \N(0_{\mR^d},\eta_k I_{d\times d})$}.
}

\STATE{\# \texttt{The corrector step.}}

\ENDFOR
\ENDFOR

\STATE{
$f\leftarrow R_\theta(s^{t},f) $. \quad \#\texttt{Update the feature from $f^{t-1}$ to $f^{t}$.}
}
\ENDFOR

\end{algorithmic}
{\bf Outputs:} $s^{1:T}$ and the corresponding feature path $f^{1:T}$.
\end{algorithm}

\subsection{Numerical experiments}\label{sec:exp}

This section presents numerical results demonstrating the feasibility of the proposed algorithms, including adaptive sampling through the diffusion model and the policy-gradient algorithm for solving portfolio selection problems. We conduct both synthetic-data and real-data experiments. All code is available at \href{https://github.com/fy-yuan/diffusion_dynamic_mv}{https://github.com/fy\-yuan/diffusion\_dynamic\_mv}.

{ 
\subsubsection{Synthetic data: ARMA returns}
\label{exp:arma}

We first isolate the adaptive sampling component in a setting where the conditional data law is known exactly. Let $X_t\in\mR^5$ follow a stationary vector ARMA$(4,3)$ process
\begin{equation}
X_t
=0.10X_{t-1}+0.04X_{t-2}+0.02X_{t-3}+0.01X_{t-4}
+\varepsilon_t
+0.06\varepsilon_{t-1}+0.03\varepsilon_{t-2}+0.01\varepsilon_{t-3},
\end{equation}
where
\begin{equation}
\varepsilon_t\overset{\mathrm{i.i.d.}}{\sim}\N(0,\Sigma_\varepsilon),
\qquad
\Sigma_\varepsilon
=0.01^2\left(0.70I_5+0.30\boldsymbol{1}\boldsymbol{1}^{\mt}\right).
\end{equation}
Thus, the scalar lag coefficients are shared across the five components, the innovation standard deviation is $0.01$, and every pair of contemporaneous innovation components has a correlation of $0.30$. We draw the initial state from the stationary Gaussian law and retain $8{,}000$ return observations.

The first $6{,}400$ observations are used to train the model and fit a componentwise standardizer, the next $1{,}576$ are used for fixed-seed validation, and the remaining $24$ are unused. The score model is a same-resolution U-Net with a base width of 32 and a depth of 3, and the conditioner is a two-layer LSTM with a hidden dimension of 32 and a dropout rate of $0.1$. The AdamW optimizer uses a learning rate of $10^{-3}$, a weight decay of $10^{-2}$, and a batch size of 256; an EMA with a decay rate of $0.999$ is used for validation and sampling.

The diffusion model uses the linear VP-SDE schedule $\beta_{\min}=0.01$ and $\beta_{\max}=20$. The reverse-time predictor--corrector sampler uses 500 steps, one Langevin corrector step with a signal-to-noise ratio of $0.16$, and a numerical terminal time of $\epsilon_{\rm num}=10^{-3}$. The selected context used for evaluation is encoded once. Each generated return is then used to update the persistent LSTM state, which conditions the next one-step sampling operation, thereby implementing the adaptive state recursion in Algorithm \ref{adaptive_sampling_implementation}.

\begin{figure}[H]
    \centering
    \includegraphics[width=\linewidth]{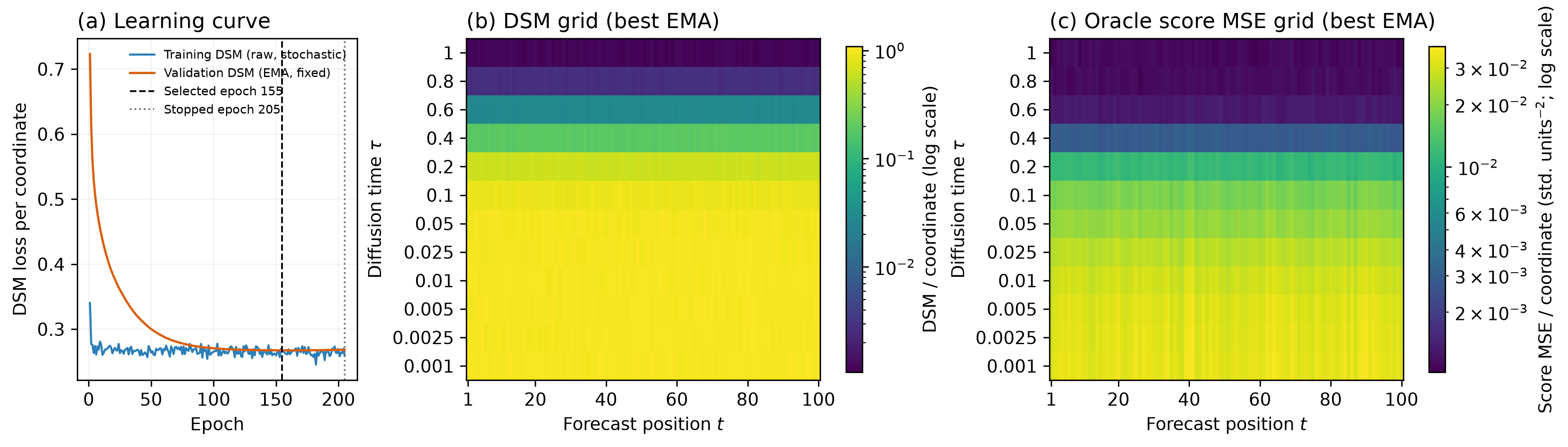}
    \caption{Training and selected-checkpoint score diagnostics for the synthetic ARMA experiment. Panel (a) shows the training DSM loss and fixed-validation DSM loss evaluated with EMA weights; the vertical lines mark the selected checkpoint at epoch 155 and the stopping point at epoch 205. Panels (b) and (c) evaluate the selected checkpoint over forecast position $t=1,\ldots,100$ and VP diffusion time $\tau$. Panel (b) reports the dimensionless DSM loss on 256 fixed validation histories, whereas panel (c) reports score MSE (in training-standardized units$^{-2}$) against the exact Gaussian conditional score on 256 independently generated stationary ARMA histories.}
    \label{arma_score_diagnostic_figure}
\end{figure}

Figure \ref{arma_score_diagnostic_figure} provides an indicative check of score learning. The fixed-validation criterion selects EMA epoch 155, with a DSM loss of $0.2674$, and training stops at epoch 205. On the independent oracle grid, the score MSE varies little with forecast position: its mean over $t=76,\ldots,100$ is $1.008$ times that over $t=1,\ldots,25$, and its mean over diffusion times at $t=100$ is $1.015$ times that at $t=1$. Thus, within the tested 100-step horizon, score-estimation error does not exhibit significant accumulation with $t$. The main variation occurs across diffusion time, with larger raw score error near the numerical low-noise endpoint. 

For evaluation, we condition both the learned sampler and an exact Gaussian oracle on the same 24-period context. We draw $N_{\rm path}=2{,}000$ independent paths of length $T_{\rm pred}=100$ from each conditional law. For any prefix length $T'\leq T_{\rm pred}$, any source $q\in\{{\rm gen},{\rm oracle}\}$, and any component $j$, let $\widehat\mu_{q,j}$ and $\widehat\sigma_{q,j}$ be the mean and the sample standard deviation obtained by pooling the $N_{\rm path}T'$ observations, let $\widehat a_{q,j}$ be the lag-one autocorrelation computed from the $N_{\rm path}(T'-1)$ within-path adjacent pairs, and let $\widehat C_q$ be the cross-sectional correlation matrix obtained by pooling the $N_{\rm path}T'$ rows. We report
\begin{align}
\Delta_{\rm mean}
&=\frac{1}{d}\sum_{j=1}^d
\left|\widehat\mu_{{\rm gen},j}-\widehat\mu_{{\rm oracle},j}\right|,
&
\Delta_{\rm sd}
&=\frac{1}{d}\sum_{j=1}^d
\left|\widehat\sigma_{{\rm gen},j}-\widehat\sigma_{{\rm oracle},j}\right|,
\nonumber\\
\Delta_{\rm acf}
&=\frac{1}{d}\sum_{j=1}^d
\left|\widehat a_{{\rm gen},j}-\widehat a_{{\rm oracle},j}\right|,
&
\Delta_{\rm corr}
&=\frac{1}{d(d-1)}\sum_{i\ne j}
\left|\widehat C_{{\rm gen},ij}-\widehat C_{{\rm oracle},ij}\right|.
\end{align}

Table \ref{output_arma_diagnostics} reports the standardized-scale errors for the forecast horizons $T'\in\{10,50,100\}$. The MAE of the sample mean is largest when only the first 10 forecast steps are pooled and decreases as the pooling horizon expands, whereas the dispersion and dependence errors vary only modestly across $T'$.

\begin{table}[H]\centering
\caption{Conditional-generation MAEs on the standardized-return scale. Each row pools forecast periods $1$ through $T'$ from the same 2,000 generated and 2,000 oracle paths.}
\label{output_arma_diagnostics}
\begin{tabular}{c|rrrr}
$T'$ & Mean & Std.\ dev. & Lag-one ACF & Off-diag.\ correlation \\ \hline
$10$  & 0.0895 & 0.0200 & 0.0146 & 0.0619 \\
$50$  & 0.0403 & 0.0250 & 0.0159 & 0.0545 \\
$100$ & 0.0334 & 0.0259 & 0.0127 & 0.0562
\end{tabular}
\end{table}

Figures \ref{arma_distribution_figures} and \ref{arma_qq_figure} compare the pooled marginal distributions on the training-standardized scale. The boxplots and QQ panels show close agreement across all five components, with modest discrepancies in location, scale, and tail behavior. These figures assess the pooled conditional marginals, while the conclusions regarding autocorrelation and cross-sectional dependence rely on the path-based diagnostics in Table \ref{output_arma_diagnostics}.

Figure \ref{arma_conditional_path_figure} provides a complementary fixed-context trajectory diagnostic for the first component on the raw-return scale. The generated mean tracks the analytical oracle mean, with the largest discrepancy occurring over the first few forecast periods, and the empirical pointwise prediction band is close to the oracle band.

\begin{figure}[H]
\centering
\includegraphics[width=0.75\linewidth]{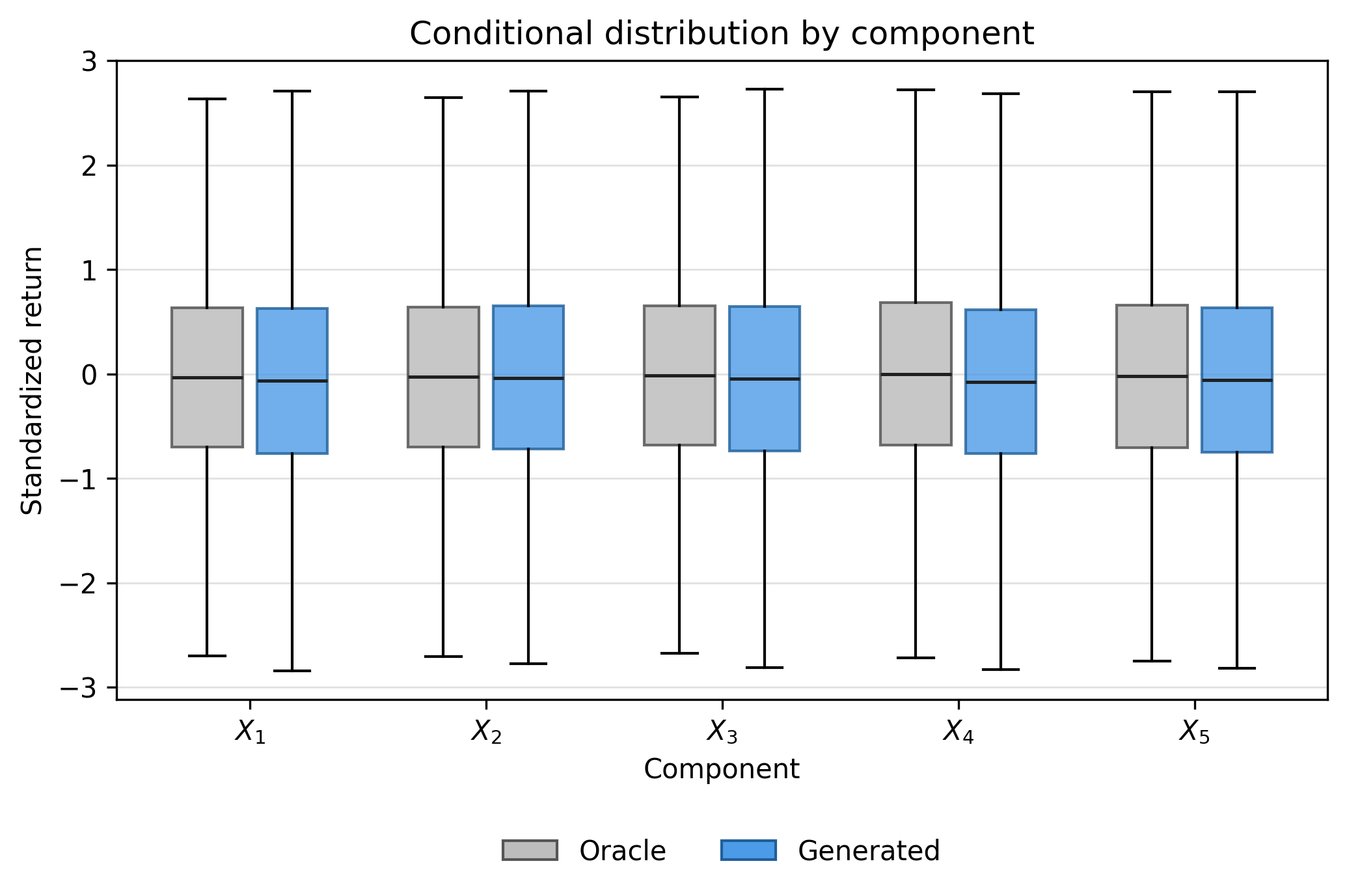}
\caption{Boxplots for generated and oracle samples.}
\label{arma_distribution_figures}
\end{figure}

\begin{figure}[H]
\centering
\includegraphics[width=0.88\linewidth]{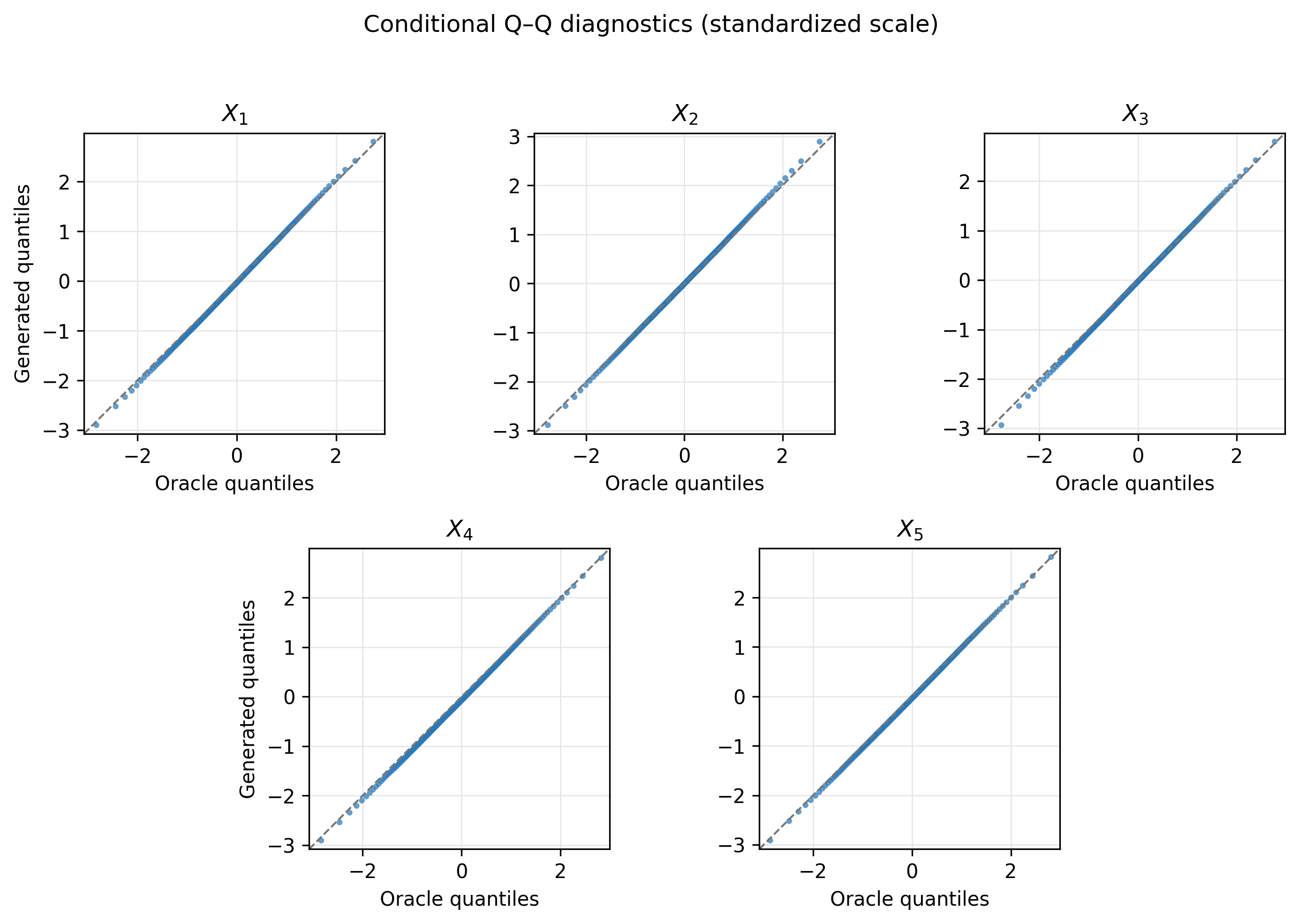}
\caption{QQ plots for generated and oracle samples.}
\label{arma_qq_figure}
\end{figure}

\begin{figure}[H]
\centering
\includegraphics[width=\linewidth]{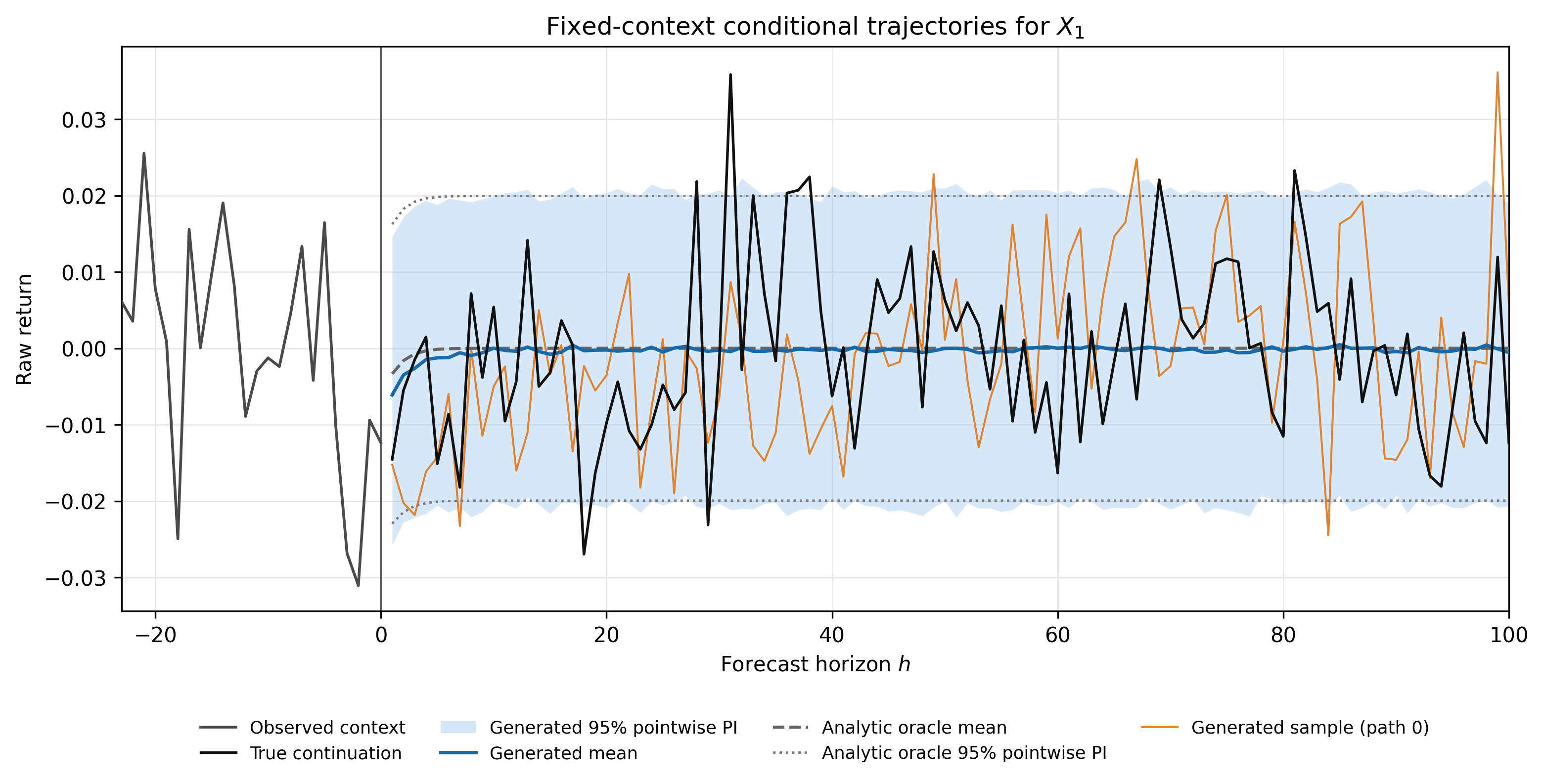}
\caption{Selected-context conditional trajectories for the first component of $X_t$ on the raw-return scale.}
\label{arma_conditional_path_figure}
\end{figure}

}
\subsubsection{Real data: FF-10 dataset}\label{exp:real}

The preceding experiment examines our adaptive sampling method and assesses its generation quality in a synthetic setting. We now use Algorithm \ref{adaptive_sampling_implementation} to construct scenario paths from distributions learned from real data and use these paths with the downstream TD3 solver summarized in Algorithm \ref{FVI-Q}. This subsection assesses the real-world applicability of the framework illustrated in Figures \ref{fig:roadmap} and \ref{fig:roadmap:implementation}. Our empirical tests employ the value-weighted 10 Industry Portfolios from the Kenneth R. French Data Library, which aggregate all common stocks on the CRSP tape into ten broad industry groups---Non-Durables (``NoDur''), Durables (``Durbl''), Manufacturing (``Manuf''), Energy (``Enrgy''), HiTech (``HiTec''), Telecommunications (``Telcm''), Utilities (``Utils''), Shops (``Shops''), Health (``Hlth''), and Others (``Other''). We construct monthly simple returns by compounding the daily value-weighted portfolio returns; no risk-free rate is subtracted. The sample spans July 1926 through February 2025 and is split chronologically as follows:
\begin{enumerate}

\item The initial test context comprises 12 months from March 2009 through February 2010. The walk-forward evaluation then covers March 2010 through February 2025, for a total of 180 monthly decisions. At every decision date, the context ends in the preceding month.

\item The validation targets cover March 1999 through February 2009, for a total of 120 months.

\item The training observations cover July 1926 through February 1999. After allowing for the 12-month context, the training targets cover July 1927 through February 1999, for a total of 860 months. All return and feature standardizers are fitted using training data only.
\end{enumerate}

For training, we use a {\it same-resolution U-Net} inspired by \citet{ronneberger2015u} as the score network $s_\theta$ and a two-layer LSTM \citep{hochreiter1997long} with a hidden dimension of 20 and a dropout rate of 0.1 as the RNN encoder $R_\theta$. The full-history generator conditions on the most recent 12 months. We also train a separate one-lag generator of the same architecture for the Markov baseline introduced below. Both models use the variance-preserving stochastic differential equation (VP-SDE):
\[
\md X_\tau
=
-\tfrac12\,\beta(\tau)\,X_\tau\,\md\tau
+\sqrt{\beta(\tau)}\,\md B_{\tau},
\qquad
\beta(\tau)=\beta_{\min}+(\beta_{\max}-\beta_{\min})\,\tau,
\]
where $\beta_{\min}=0.01$ and $\beta_{\max}=20$. Training uses continuously sampled diffusion times, whereas the reverse-time predictor--corrector sampler uses $\Npre=500$ predictor steps. The score network $s_\theta$ and the RNN encoder $R_\theta$ are trained according to Algorithm \ref{algo:training-s-r}. Detailed model architectures and training configurations are available in our GitHub repository. Optimization uses AdamW with a learning rate of $10^{-3}$, $(\beta_{1},\beta_{2})=(0.9,0.999)$, and a weight decay of $10^{-2}$. A cosine-annealing scheduler runs for at most 1,000 epochs following a 10-epoch warm-up; early stopping is triggered after 500 epochs without improvement. Gradients are clipped to a maximum norm of 1, and an EMA of the model weights with a decay rate of 0.999 is used for validation and sampling.

For testing, we set the time horizon $T=12$ (i.e., one year) for the TD3 agent and use an extended test horizon $T_{\rm test}=15T$ (i.e., 15 years). We use the S\&P 500 price return as an external market benchmark and compare the following eight FF-10 portfolio strategies:
\begin{enumerate}
    \setlength{\itemsep}{0pt}
    \setlength{\parsep}{0pt}
    \item {\bf Equal Weight (EW)}: $a\equiv (1/d,1/d,\cdots,1/d)$;
    \item {\bf History-based Markowitz (HistMarkowitz)}: For each $t=1,2,\cdots, T_{\rm test}$, the action $a$ is determined by solving the Markowitz problem with no-short-selling, no-borrowing, and full-investment constraints. The mean and covariance are estimated using the sample mean and sample covariance of the most recent 60 months of data, as in \citet{demiguel2009optimal}. In other words, this is a monthly rebalancing Markowitz strategy based on historical estimates;
    \item {\bf Generative-model-based Markowitz (GenMarkowitz)}: For each $t=1,2,\cdots, T_{\rm test}$, we collect the most recent $T$ months (i.e., one year) of data as the context window and use it as the initial feature $f^0$ in the diffusion model (see Algorithm \ref{adaptive_sampling_implementation}). We then generate 500 {\it one-month-ahead predictions} for $s^t$ and use them to solve the constrained Markowitz problem. In other words, this is a monthly rebalancing Markowitz strategy based on the generative model;
    \item {\bf Historical-scenario TD3 (HistTD3)}: At each annual cutoff, we construct 12-month scenarios from returns observed before deployment. Training and validation donor periods are disjoint, and blocks are selected without replacement according to their proximity to the latest return state. The policy uses the latest 10-dimensional raw-return vector, standardized using training-period statistics, as its market state;
    \item {\bf Parametric TD3 (ParametricTD3)}: At each annual cutoff, we fit an expanding-window Gaussian VAR(1), with a full transition matrix and a residual covariance matrix, and recursively simulate separate training and validation scenarios. The policy uses the same latest-return state as HistTD3;
    \item {\bf Markov diffusion TD3 (MarkovTD3)}: We use the separate one-lag conditional diffusion generator to construct the scenario pools. The policy state contains the RNN feature produced by this generator;
    \item {\bf No-corrector GenTD3 (NoCorrectorGenTD3)}: We retain the full-history generator, its RNN encoder, and the reverse-diffusion predictor, but disable the Langevin corrector. Thus, this method isolates the effect of the corrector while retaining the same 12-month conditioning information;
    \item {\bf Generative-model-based TD3 (GenTD3)}: At each annual cutoff, we condition the full diffusion generator on the latest 12-month context and use Algorithm \ref{adaptive_sampling_implementation} to construct one-year scenario pools. The policy uses the corresponding frozen RNN feature as its market state.
\end{enumerate}

\begin{algorithm}[H]\footnotesize
\caption{Policy-gradient solver with generative scenario pools}
\label{FVI-Q}

\textbf{Initialization}: Initialize Q-network $Q_{\alpha}$ and policy network $\pi_{\beta}$; initialize target networks $\alpha_{\targ}\leftarrow \alpha$ and $\beta_\targ\leftarrow \beta$; choose a target-update rate $\rho\in(0,1)$ and an exploration-noise distribution; set up an empty replay buffer $\D$ and a scenario pool $\Sigma$ of size $L$; train the score network $s_\theta$ and RNN encoder $R_\theta$ using Algorithm \ref{algo:training-s-r}. Collect context window data $s^{-T:-1}$.

\begin{algorithmic}[1]
\STATE{Use $s^{-T:-1}$ to compute the initial feature $f^0$.}
\STATE{Use Algorithm \ref{adaptive_sampling_implementation} to construct the indexed scenario pool
\[
\Sigma=\{(s_\ell^{1:T},f_\ell^{1:T})\}_{\ell=1}^L.
\]
}

\FOR{\texttt{steps}}
\STATE{Initialize $w=1$; sample an index $\ell$ and retrieve $(s_\ell^{1:T},f_\ell^{1:T})$ from $\Sigma$; sample $c\sim\nu_c$, where $\nu_c$ is a prescribed training distribution for terminal targets.}

\FOR{$t=1$ \TO $T-1$}
\IF{\texttt{warm-up}}
\STATE{$a\sim \mathrm{Uniform}(\mathcal K)$.}
\ELSE
\STATE{Sample exploration noise $\epsilon$ and set $a\leftarrow \operatorname{Proj}_{\mathcal K}(\pi_\beta(t,w,f_\ell^t,c)+\epsilon)$.}
\ENDIF

\STATE{Store $(t,\ell,w,a,c)$ in $\D$.}
\STATE{$w\leftarrow w+a^\mt(s_\ell^{t+1}-s_\ell^t)$.}

\IF{not \texttt{warm-up}}
\STATE{Sample a minibatch $\{(t_j,\ell_j,w_j,a_j,c_j)\}_{j=1}^B$ from $\D$.}
\STATE{For each $j$, retrieve $(s_{\ell_j}^{t_j},s_{\ell_j}^{t_j+1},f_{\ell_j}^{t_j},f_{\ell_j}^{t_j+1})$ from $\Sigma$ and set
\[
w'_j=w_j+a_j^\mt(s_{\ell_j}^{t_j+1}-s_{\ell_j}^{t_j}).
\]
}
\STATE{For each $j$, define the Bellman target
\[
y_j=
\begin{cases}
Q_{\alpha_\targ}\!\Big(t_j+1,w'_j,f_{\ell_j}^{t_j+1},
\pi_{\beta_\targ}(t_j+1,w'_j,f_{\ell_j}^{t_j+1},c_j),c_j\Big),
& t_j<T-1,\\
|w'_j-c_j|^2,
& t_j=T-1.
\end{cases}
\]
}
\STATE{Update $\alpha$ by minimizing
\[
\frac{1}{B}\sum_{j=1}^B
\Big|Q_\alpha(t_j,w_j,f_{\ell_j}^{t_j},a_j,c_j)-y_j\Big|^2.
\]
}
\STATE{Update $\beta$ by gradient descent using
\[
\frac{1}{B}\sum_{j=1}^B
\nabla_a Q_\alpha\Big(t_j,w_j,f_{\ell_j}^{t_j},
\pi_\beta(t_j,w_j,f_{\ell_j}^{t_j},c_j),c_j\Big)
\nabla_\beta\pi_\beta(t_j,w_j,f_{\ell_j}^{t_j},c_j).
\]
}
\STATE{Update target networks:
\begin{align}
&\alpha_\targ\leftarrow\rho\alpha_\targ+(1-\rho)\alpha,\\
&\beta_\targ\leftarrow\rho\beta_\targ+(1-\rho)\beta.
\end{align}
}
\ENDIF
\ENDFOR
\ENDFOR

\STATE{Find the optimal multiplier by maximizing
\[
-\frac{\gamma}{2}Q_\alpha\big(1,0,f^1,\pi_\beta(1,0,f^1,c),c\big)+c.
\]
}
\end{algorithmic}

\textbf{Output}: Policy network $\pi_\beta$.
\end{algorithm}

To instantiate the downstream decision step, we train a generative-scenario-based TD3 solver using deterministic policy-gradient machinery \citep{ddpg_original,td3}. For readability, Algorithm \ref{FVI-Q} uses raw wealth and one critic symbol. For each fixed terminal target, the implementation uses an algebraically equivalent centered-cost parameterization and standard TD3 refinements---twin critics, target-policy smoothing, and delayed actor updates---which, together with numerical hyperparameters, are omitted from the pseudocode.

All five TD3 strategies follow the same 15-stage annual walk-forward schedule. Before each stage, the latest available context is used to construct separate training and validation pools. GenTD3, ParametricTD3, MarkovTD3, and NoCorrectorGenTD3 use 384 training paths and 256 validation paths per stage; HistTD3 uses the 128 nearest eligible training blocks and the 109 complete blocks available in the disjoint validation donor period. At each deployment boundary, the terminal-wealth target is recalibrated on a 65-point grid over $[0,5]$ using only the actor, the twin critics, and the pre-deployment state. All FF-10 strategies are long-only and fully invested, are rebalanced monthly, and exclude transaction costs. If a diffusion-generated path contains an infinite value or a simple return no greater than $-100\%$, the entire return--feature path is discarded and resampled under the fixed-seed protocol.

\begin{remark}
    For clarity, the score network $s_\theta$ and the RNN encoder $R_\theta$ are trained first and remain frozen during TD3 training. During walk-forward evaluation of the diffusion-backed TD3 strategies, the generator is not resampled to select actions. Instead, the actor uses the realized context, the method-specific state, and the annually calibrated target. See Figure \ref{fig:roadmap:implementation} in the Introduction.
\end{remark}

We report annualized return, annualized volatility, Sharpe ratio, Sortino ratio, maximum drawdown, and Calmar ratio for $\gamma=3$ in Table \ref{output_reaL_metrics}; the Sharpe and Sortino ratios use a zero reference rate. Figure \ref{output_portfolio_wealth} presents the corresponding wealth trajectories.

\begin{table}[H]\centering
\caption{Walk-forward performance of the FF-10 strategies and the S\&P 500 benchmark with $\gamma = 3$}
\label{output_reaL_metrics}
\resizebox{\textwidth}{!}{%
\begin{tabular}{l|rrrrrr}
Method & Ann. return & Ann. volatility & Sharpe & Sortino & Max drawdown & Calmar \\ \hline
S\&P 500              & 11.89\% & 14.53\% & 0.8180 & 0.7585 & -24.77\% & 0.4799 \\
EW                     & 13.26\% & 15.02\% & 0.8831 & 0.8310 & -22.93\% & 0.5783 \\
HistMarkowitz          & 13.19\% & 16.71\% & 0.7892 & 0.7723 & -37.59\% & 0.3508 \\
GenMarkowitz           & {\bf 20.07\%} & 21.71\% & 0.9246 & 1.1016 & -23.12\% & 0.8681 \\
HistTD3                &  9.59\% & 17.37\% & 0.5523 & 0.5619 & -27.82\% & 0.3448 \\
ParametricTD3          & 11.14\% & 15.50\% & 0.7187 & 0.6767 & -25.52\% & 0.4364 \\
MarkovTD3              & 12.60\% & 14.56\% & 0.8657 & 0.8731 & -17.69\% & 0.7122 \\
NoCorrectorGenTD3      & 14.06\% & {\bf 14.03\%} & 1.0025 & 0.9619 & -21.56\% & 0.6522 \\
GenTD3                 & 18.79\% & 15.34\% & {\bf 1.2246} & {\bf 1.2716} & {\bf -15.71\%} & {\bf 1.1959}
\end{tabular}%
}
\end{table}

\begin{figure}[H]
\centering
\includegraphics[width=0.92\linewidth]{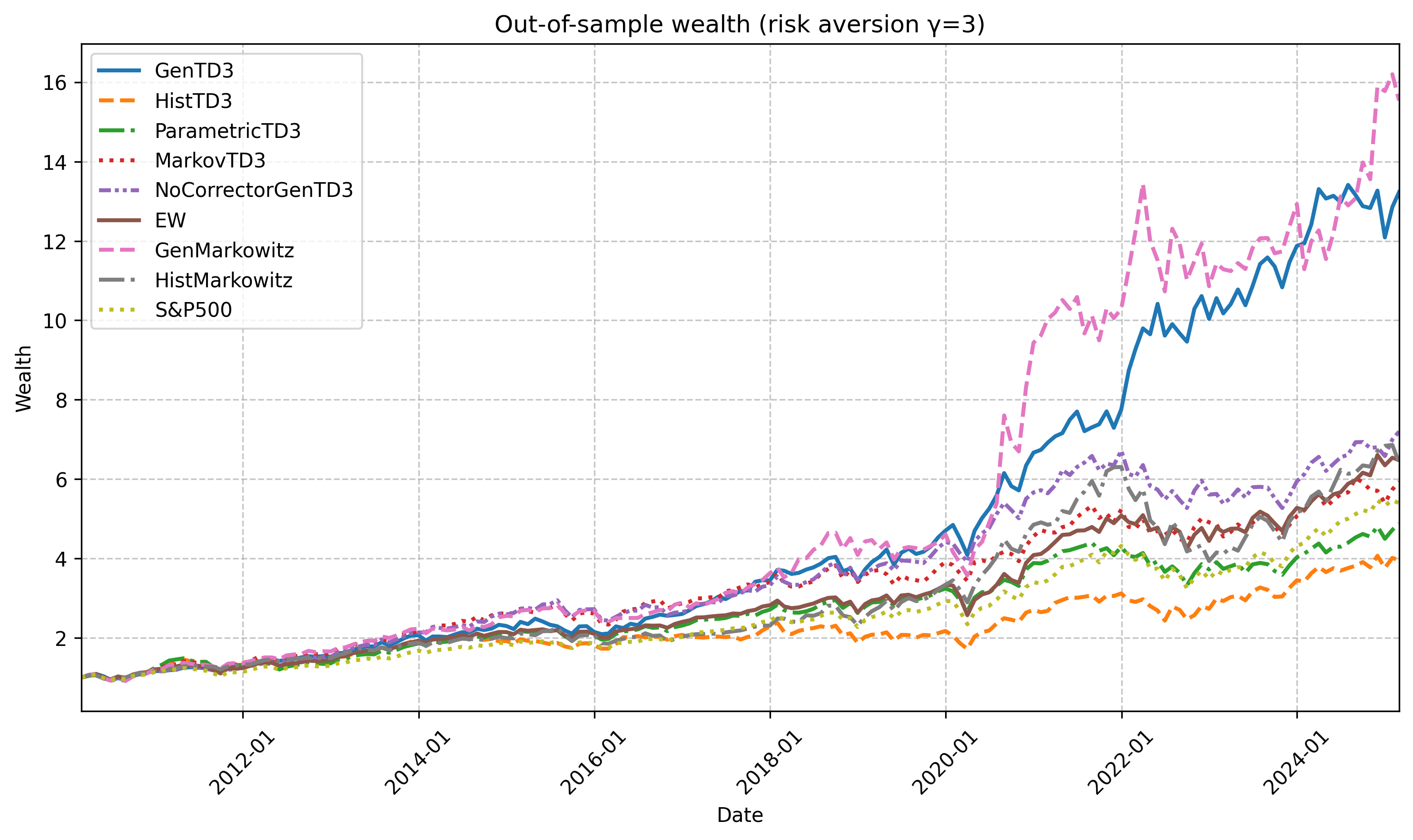}
\caption{Wealth trajectories in the FF-10 walk-forward evaluation with $\gamma=3$}
\label{output_portfolio_wealth}
\end{figure}

GenMarkowitz has both the highest annualized return and the highest volatility. GenTD3 attains an annualized return of 18.79\% and the best Sharpe ratio (1.2246), Sortino ratio (1.2716), maximum drawdown ($-15.71\%$), and Calmar ratio (1.1959). It therefore outperforms GenMarkowitz on each of the remaining five metrics, while NoCorrectorGenTD3 attains the lowest volatility. Relative to this ablation, the full GenTD3 achieves a higher return, better risk-adjusted performance, and better drawdown statistics. MarkovTD3 also trails the full-history GenTD3 on these measures, while HistTD3 and ParametricTD3 both have lower annualized returns and Sharpe ratios than EW. In this fixed-seed experiment, these rankings support the usefulness of longer-history conditioning and the Langevin corrector, while also showing that TD3 alone does not ensure superior performance.

{ 
\subsubsection{Real data: FF-30 dataset}\label{exp:real:ff30}

To assess whether the proposed methods remain effective as the number of assets increases, we repeat the real-data experiment using the value-weighted 30 Industry Portfolios from the Kenneth R. French Data Library. We use the official ``Average Value Weighted Returns -- Monthly'' panel and convert the reported returns from percentages to decimals; no risk-free rate is subtracted. The training observations span July 1926 through February 1999; after allowing for the 12-month context, the training targets cover July 1927 through February 1999, for a total of 860 months. The 120 validation targets cover March 1999 through February 2009. The initial test context covers March 2009 through February 2010, and the walk-forward evaluation covers March 2010 through February 2025, for a total of 180 monthly decisions.

To accommodate the higher-dimensional data, we increase the hidden dimension of the two-layer RNN encoder from 20 to 48 and the score-network base width from 16 to 32, while retaining a score-network depth of 3 and a time-embedding dimension of 64. We compare five methods: the S\&P~500 benchmark, EW, HistMarkowitz, GenMarkowitz, and GenTD3. The S\&P~500 series remains a price-return benchmark; all the portfolio strategies remain long-only and fully invested, and transaction costs are excluded. We report the same six performance metrics in Table \ref{output_ff30_metrics}; Figure \ref{output_ff30_portfolio_wealth} presents the corresponding wealth trajectories.

\begin{table}[H]\centering
\caption{Walk-forward performance of the FF-30 strategies and the S\&P 500 benchmark with $\gamma = 3$}
\label{output_ff30_metrics}
\resizebox{\textwidth}{!}{%
\begin{tabular}{l|rrrrrr}
Method & Ann. return & Ann. volatility & Sharpe & Sortino & Max drawdown & Calmar \\ \hline
S\&P 500         & 11.89\% & 14.53\% & 0.8180 & 0.7585 & -24.77\% & 0.4799 \\
EW                & 12.19\% & 16.49\% & 0.7391 & 0.7127 & -26.68\% & 0.4568 \\
HistMarkowitz     & 13.73\% & 18.75\% & 0.7325 & 0.7837 & -31.07\% & 0.4420 \\
GenMarkowitz      & {\bf 17.53\%} & 16.77\% & 1.0450 & {\bf 1.0662} & -25.82\% & 0.6787 \\
GenTD3            & 15.01\% & {\bf 13.69\%} & {\bf 1.0964} & 1.0573 & {\bf -20.80\%} & {\bf 0.7218}
\end{tabular}%
}
\end{table}

\begin{figure}[H]
\centering
\includegraphics[width=0.92\linewidth]{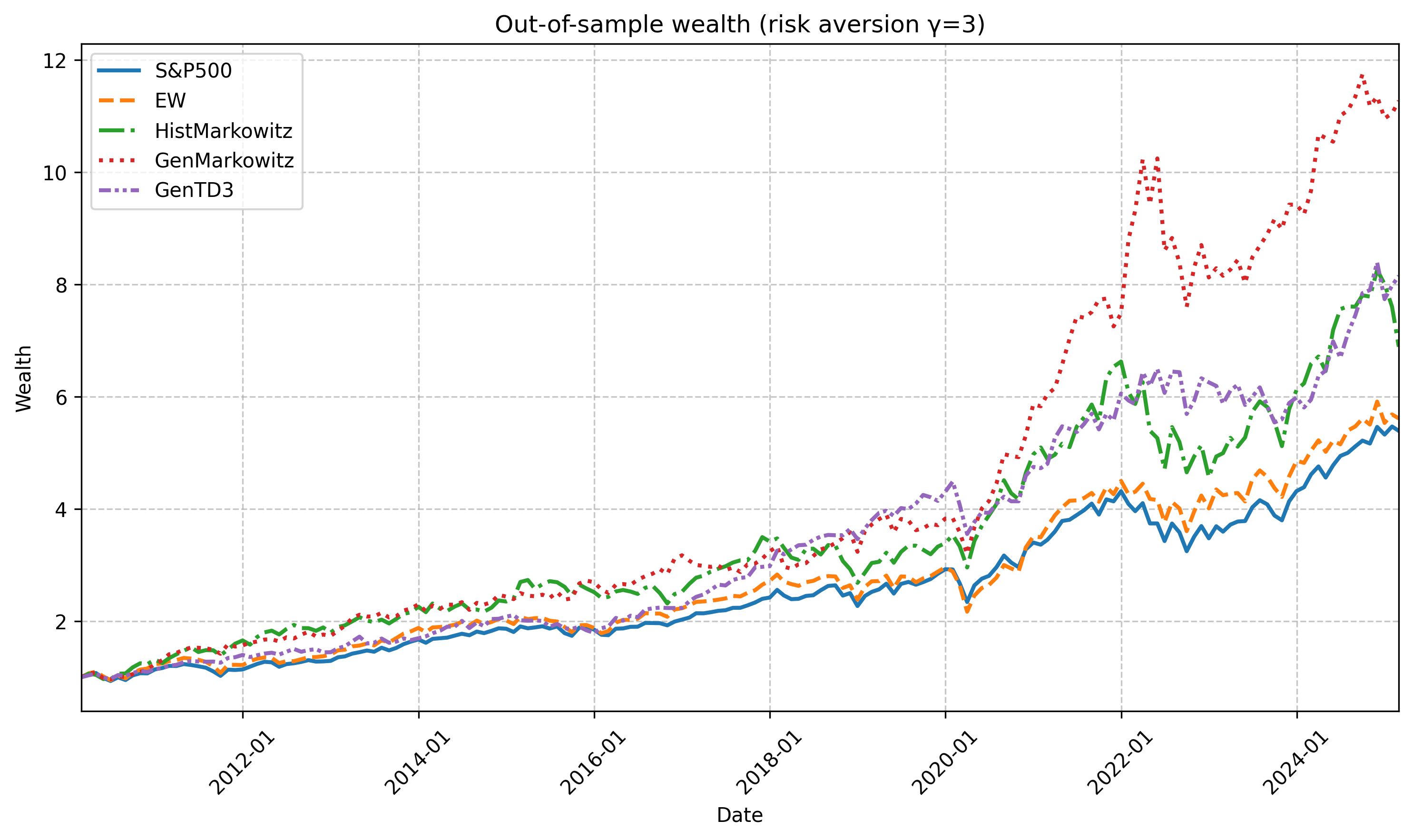}
\caption{Wealth trajectories in the FF-30 walk-forward evaluation with $\gamma=3$}
\label{output_ff30_portfolio_wealth}
\end{figure}

GenMarkowitz attains the highest annualized return and the highest Sortino ratio, whereas GenTD3 attains the lowest annualized volatility and the best Sharpe ratio, maximum drawdown, and Calmar ratio. Both generator-based portfolios outperform the S\&P~500, EW, and HistMarkowitz on annualized return, Sharpe ratio, Sortino ratio, and Calmar ratio; GenTD3 improves on all three benchmarks across all six metrics. Taken together, these results suggest that the proposed methods remain effective as the number of assets increases from 10 to 30 after modest architectural adjustments, including a wider score network.
}

\section{Concluding remarks}\label{conclusion}

In this work, we extend the score-based diffusion-model framework to handle time-series data. We introduce an adaptive training and sampling scheme tailored to dynamic control problems, and we demonstrate its effectiveness on the dynamic mean-variance portfolio selection problem. Our theoretical contributions include a quantitative error bound and a model-stability result that make this application feasible. We implement the adaptive sampler, use its scenarios with downstream policy-gradient algorithms, and present experiments that demonstrate the practical effectiveness of the approach.

Several future directions naturally emerge. {\it On the theoretical side}, one could relax the assumptions on the data distribution, particularly Assumption \ref{assumption:P}. With a more refined approximation analysis, it may be possible to remove Lipschitz continuity in the {\it conditional variable $h^{1:t}$} and impose it only on $x^{t+1}$ in \eqref{Lip:conditionalscore}, which aligns with the usual Lipschitz condition for the score function. Moreover, the error bounds established here are unlikely to be optimal; tightening or even optimizing these rates remains an interesting direction. {\it On the practical side}, further fine-tuning of the dynamic diffusion models and downstream policy-gradient algorithms, together with additional large-scale experiments, is highly desirable. Furthermore, to enhance predictive performance, one could incorporate more covariates into the input of the RNN encoder instead of using only price or return data. A quantitative analysis of the RL component of this work (e.g., a regret analysis) would also be valuable. {  As a scope limitation, we point out that this paper treats the price process as an exogenous market environment. Portfolio decisions affect the investor's wealth and gain processes rather than the conditional law of future prices. Thus, problems with market impact, endogenous price formation, or other decision-dependent state dynamics would require learning controlled transition kernels and fall outside the scope of the present analysis.} We leave these directions for future study.

\appendix

\section{Discussion of assumptions}\label{app:assumption:stheta}

In this appendix, we justify Assumption \ref{assumption:P}. We also demonstrate that the approximating network $s_\theta$ can be selected to satisfy Assumption \ref{assumption:stheta} while retaining the desired approximation power, i.e., while satisfying Assumption \ref{assumption:score-matching-error}.

For convenience, in this appendix we work with the pair $(h^{1:t-1},x^t)$ for $t\in \{1,2,\cdots,T\}$, where $h^{1:t-1}$ is the noisy condition and $x^t$ is the clean next coordinate to be predicted. If $t=1$, then $h^{1:0}$ is void and $\Pi_{h^{1:0}}^{\tau_0,1}=\mP_1$. This corresponds to the initial score-matching step (i.e., \eqref{score-error-initial}).

{ 
\subsection{Assumptions on data distribution}

To elaborate on the Lipschitz assumptions in Assumption \ref{assumption:P}-1 and the growth condition in Assumption \ref{assumption:P}-2, we first observe that, by \eqref{eq:real-score-tau}, the noisy-history target score function $\nabla_x \log p^{\tau_0}_{t}(\tau,x|h^{1:t-1})$ can be written as
\begin{align}
    \nabla_x \log p^{\tau_0}_t(\tau,x|h^{1:t-1})&=\frac{\nabla_x\int_{\mR^d}\phi(\tau,x|x^t)\Pi_{h^{1:t-1}}^{\tau_0,t}(\md x^t) }{\int_{\mR^d}\phi(\tau,x|x^t)\Pi_{h^{1:t-1}}^{\tau_0,t}(\md x^t)}\\
     &= \frac{1}{h_1(\tau)}\nabla _z \log \cZ^\tau_{h^{1:t-1}}\big(z\big)\Big|_{z=x/h_1(\tau)},
\end{align}
because
\begin{align}
    \cZ^\tau_{h^{1:t-1}}(x/h_1(\tau))\label{identity-F-Z}
     \propto \int_{\mR^d}\phi(\tau,x|x^t)\Pi_{h^{1:t-1}}^{\tau_0,t}(dx^t).
\end{align}
Here, for simplicity we denote $h_1(\tau):=e^{-\tau}$, $h_2(\tau):=1-e^{-2\tau}$, $\sigma^2(\tau):=h_2(\tau)/h_1(\tau)^2$, and
\begin{align}
  \cZ^\tau_{h^{1:t-1}}(z)=&\int_{\mR^d}K_\tau(z-x^t)\Pi_{h^{1:t-1}}^{\tau_0,t}(dx^t),\\
  K_\tau(z)=& \exp\bigg(-\frac{|z|^2}{2\sigma^2(\tau)}\bigg).
\end{align}
Consider the ground-truth noisy-history conditional score
\[
\nabla_x\log p^{\tau_0}_t(\tau,x\mid h^{1:t-1}).
\]
Its properties can be studied through $\cZ^\tau_{h^{1:t-1}}$. We propose the following {\it sufficient conditions} on the posterior kernels $\Pi_{h^{1:t-1}}^{\tau_0,t}$, which are fully determined by the data distribution $\mP$ and the fixed early-stopping level $\tau_0$:
\begin{assumption}\phantomsection\label{ass:stability}
  \begin{enumerate}
    \item There exists a universal constant $K>0$ such that, for any $t\in \{1,2,\cdots, T\}$, $h^{1:t-1}\in \mR^{d(t-1)}$, either of the following two conditions holds:
    \begin{enumerate}
    \item $\Pi_{h^{1:t-1}}^{\tau_0,t}$
is supported in $B_K:=\{x\in \mR^{d}: |x|\leq K\}$;
\item $\Pi_{h^{1:t-1}}^{\tau_0,t}$ is a log-concave probability measure. Equivalently, there exists an affine subspace $E_{h^{1:t-1}}\subset \mR^d$ and a log-concave density function $\pi_{h^{1:t-1}}(x^t)$ on $E_{h^{1:t-1}}$ such that $\Pi_{h^{1:t-1}}^{\tau_0,t}(dx^t) = \pi_{h^{1:t-1}}(x^t)\lambda_{E_{h^{1:t-1}}}(dx^t)$, where $\lambda_E$ is the Lebesgue measure on $E$.
\end{enumerate}

\item Define
    \begin{align}
        \mu^{\tau,z}_{h^{1:t-1}}(dx^t) = &\frac{K_\tau(z-x^{t})\Pi_{h^{1:t-1}}^{\tau_0,t}(dx^t)}{\cZ^\tau_{h^{1:t-1}}(z)}.
    \end{align}
    The mean of $\mu^{\tau,z}_{h^{1:t-1}}$, denoted by $m^{\tau,z}_{h^{1:t-1}}:=\mE_{X^t\sim \mu^{\tau,z}_{h^{1:t-1}}}[X^t]$, satisfies the following conditions: there exist constants $L_{\tau_0}$, $C_{\tau_0}>0$ depending on $\tau_0$ such that for any $z\in \mR^d$, $t\in \{1,2,\cdots,T\}$, $\tau\in [\tau_0,\Ti]$, $h^{1:t-1},k^{1:t-1}\in \mR^{d(t-1)}$,
    \begin{align}
        & |m^{\tau,z}_{h^{1:t-1}}-m^{\tau,z}_{k^{1:t-1}}|\leq L_{\tau_0}(1+|z|)|h^{1:t-1}-k^{1:t-1}|,\label{posterior-Lip}\\
        &|m^{\tau,z}_{h^{1:t-1}}|\leq C_{\tau_0}(1+|z|+|h^{1:t-1}|).\label{posterior-growth}
    \end{align}
\end{enumerate}
\end{assumption}

\begin{proposition}
    If the posterior kernels $\Pi_{h^{1:t-1}}^{\tau_0,t}$ satisfy Assumption \ref{ass:stability}, then the ground-truth noisy-history conditional score function $(h^{1:t-1},x)\mapsto \nabla_x\log p^{\tau_0}_t(\tau,x|h^{1:t-1})$ satisfies Assumptions \ref{assumption:P}-1 and \ref{assumption:P}-2.
\end{proposition}

\begin{proof}
	    Let us first show that, under Assumption \ref{ass:stability}-1, the score $z\mapsto\nabla_z\log \cZ^\tau_{h^{1:t-1}}(z)$ is Lipschitz continuous, with a Lipschitz constant {\it independent of $h^{1:t-1}$}. Indeed, by direct computation,
    \begin{align}
    \nabla ^2_z (\log \cZ^\tau_{h^{1:t-1}})(z) =& \frac{1}{\sigma^4(\tau)}\Var_{X^t\sim \mu^{\tau,z}_{h^{1:t-1}}}(X^t-z) - \frac{1}{\sigma^2(\tau)}I_d. \label{identity-nabla2log}
    \end{align}
    Under Assumption \ref{ass:stability}-1(a), $|X^{t}|\leq K$, $\Pi_{h^{1:t-1}}^{\tau_0,t}$-a.s., for any $h^{1:t-1}\in \mR^{d(t-1)}$. By definition, $\mu^{\tau,z}_{h^{1:t-1}}$ is equivalent to $\Pi_{h^{1:t-1}}^{\tau_0,t}$; thus, $|X^t|\leq K$, $\mu^{\tau,z}_{h^{1:t-1}}$-a.s., for any $h^{1:t-1}$ and $z$. Equation \eqref{identity-nabla2log} then gives
    \begin{align}
   |\nabla ^2(\log \cZ^\tau_{h^{1:t-1}})(z)|\leq \frac{1}{\sigma^4(\tau)}K^2 +\frac{1}{\sigma^2(\tau)}.
    \end{align}
    It follows from \eqref{identity-F-Z} that the ground-truth score function $x\mapsto \nabla\log p^{\tau_0}_{t}(\tau,x|h^{1:t-1}) $ is Lipschitz, with Lipschitz constant $L_\tau:=K^2/h_2(\tau)^2+1/h_2(\tau)$ (use $h_1(\tau)\leq 1$). Now suppose that Assumption \ref{ass:stability}-1(b) holds instead. To ease the notation, we fix $h^{1:t-1}$ and write the supporting affine subspace as $E = a+E_0$, where $E_0$ is a linear subspace of $\mR^d$. Let $P_{E_0}$ and $P_{E_0^\perp}$ be the corresponding orthogonal projections. Then $z=a+P_{E_0}(z-a)+P_{E_0^\perp}(z-a)$, and
    \begin{align}
        \cZ^\tau_{h^{1:t-1}}(z) = e^{-\frac{|P_{E_0^\perp}(z-a)|^2}{2\sigma^2(\tau)}}(K_\tau*\pi_{h^{1:t-1}})(a+P_{E_0}(z-a)).
    \end{align}
Here, $*$ denotes convolution on the Euclidean subspace $E_0$. By Pr\`ekopa's theorem \citep[cf. Theorem 7 of][]{Prekopa1973LogConcave}, the function $z\mapsto K_\tau *\pi_{h^{1:t-1}}(a+P_{E_0}(z-a))$ is log-concave. As a result, $\cZ^\tau_{h^{1:t-1}}$ is also log-concave, which, combined with \eqref{identity-nabla2log}, gives
\begin{align}
       -\frac{1}{\sigma^2(\tau)}I_d \preceq\nabla^2(\log \cZ^\tau_{h^{1:t-1}})(z) \preceq 0.
\end{align}
Therefore, by \eqref{identity-F-Z}, the ground-truth score function is $L_\tau$-Lipschitz, with $L_\tau = 1/h_2(\tau)$.

To prove the Lipschitz continuity with respect to the noisy observation $h^{1:t-1}$, we use
\begin{align}
\nabla_z (\log \cZ^\tau_{h^{1:t-1}})(z) = \frac{1}{\sigma^2(\tau)}(m^{\tau,z}_{h^{1:t-1}} - z),\label{score:decomposition}
\end{align}
and directly conclude from \eqref{posterior-Lip} that
\begin{align}
  |\nabla_x\log p^{\tau_0}_t(\tau,x|h^{1:t-1}) - \nabla _x\log p^{\tau_0}_t(\tau, x|k^{1:t-1})|\leq &\frac{1}{h_1(\tau)\sigma^2(\tau)}|m^{\tau,z}_{h^{1:t-1}}-m^{\tau,z}_{k^{1:t-1}}|\Big|_{z = x/h_1(\tau)}\\
  \leq& \frac{L_{\tau_0}}{h_2(\tau)}(h_1(\tau)+ |x|)|h^{1:t-1}-k^{1:t-1}|\\
  \leq &\frac{L_{\tau_0}}{h_2(\tau)}(1+|x|)|h^{1:t-1}-k^{1:t-1}|.
\end{align}
Combined with the Lipschitz continuity with respect to $z$ (with a Lipschitz constant independent of $h^{1:t-1}$), we obtain \eqref{Lip:conditionalscore}. The Lipschitz constants are
\[
\frac{K^2}{h_2(\tau_0)^2}+\frac{1+L_{\tau_0}}{h_2(\tau_0)}
\quad\text{under Assumption \ref{ass:stability}-1(a),}
\qquad
\frac{1+L_{\tau_0}}{h_2(\tau_0)}
\quad\text{under Assumption \ref{ass:stability}-1(b).}
\]
Finally, \eqref{posterior-growth} and \eqref{score:decomposition} give Assumption \ref{assumption:P}-2 with the constant $C_{\tau_0}/h_2(\tau_0)$.
\end{proof}

\begin{remark}
\begin{enumerate}
\item Assumption \ref{ass:stability}-1(b) is weaker than standard strong log-concavity conditions because we allow the distribution to be supported on affine subspaces. More importantly, in the case of our noisy-history conditional score matching, we do not require Assumption \ref{ass:stability}-1 to be {\it uniform in} $t$ and $h^{1:t-1}$ (except for the support bound $K$). For example, the supporting affine subspaces of $\Pi_{h^{1:t-1}}^{\tau_0,t}$ may vary with $h^{1:t-1}$; for some noisy observations $h^{1:t-1}$, $\Pi_{h^{1:t-1}}^{\tau_0,t}$ may even be discrete with finite support. These weak conditions already imply that the Lipschitz constant with respect to $x$ in Assumption \ref{assumption:P} can be chosen uniformly in $h^{1:t-1}$. Nevertheless, these remain convenient {\it sufficient} conditions for Assumption \ref{assumption:P}, and we observe from the proof (see \eqref{identity-nabla2log}) that we only need to impose a uniform upper bound on the posterior variance $\Var_{\mu^{\tau,z}_{h^{1:t-1}}}(X^t)$. This may be verified in explicit examples.

	   \item  The decomposition \eqref{score:decomposition} is essentially the celebrated Tweedie's formula, and $\mu^{\tau,z}_{h^{1:t-1}}$ is the posterior distribution of $X^t\sim \Pi_{h^{1:t-1}}^{\tau_0,t}$ given the noisy observation $X^t+\sigma(\tau)\xi = z$. Therefore, Assumption \ref{ass:stability}-2 requires certain regularity of this posterior distribution. While challenging to obtain explicit sufficient conditions directly on the true data law $\mP$ in full generality, Examples \ref{exm:posterior:finite} and \ref{exm:posterior:gaussian} below show that $\Pi_{h^{1:t-1}}^{\tau_0,t}$ can be computed explicitly from $\mP$ in finite-state and Gaussian cases. We also record a non-Gaussian exponential example as a tractable posterior-level check of Assumption \ref{ass:stability}.
\end{enumerate}
\end{remark}

\begin{example}\label{exm:posterior:finite}
  Suppose the data law itself has finite path support:
  \begin{align}
      \mP=\sum_{\ell=1}^N w_\ell\delta_{x_\ell^{1:T}},
      \qquad w_\ell>0,\qquad \sum_{\ell=1}^N w_\ell=1.
  \end{align}
  For fixed $t\geq2$, since $H^{1:t-1}=h_1(\tau_0)X^{1:t-1}+\sqrt{h_2(\tau_0)}Z^{1:t-1}$, Bayes' rule gives
  \begin{align}
      \Pi_{h^{1:t-1}}^{\tau_0,t}
      =\sum_{\ell=1}^N r_\ell(h^{1:t-1})\delta_{x_\ell^t},
  \end{align}
  where
  \begin{align}
      r_\ell(h)
      =
      \frac{w_\ell\exp\{-|h-h_1(\tau_0)x_\ell^{1:t-1}|^2/(2h_2(\tau_0))\}}
      {\sum_{m=1}^N w_m\exp\{-|h-h_1(\tau_0)x_m^{1:t-1}|^2/(2h_2(\tau_0))\}}.
  \end{align}
  For $t=1$, the same example reduces to $\Pi_{h^{1:0}}^{\tau_0,1}=\mP_1$. Because the path support is finite, Assumption \ref{ass:stability}-1(a) holds with $K=\max_{\ell,t}|x_\ell^t|$.

  We next verify Assumption \ref{ass:stability}-2. The posterior distribution $\mu^{\tau,z}_{h^{1:t-1}}$ is supported on $\{x_\ell^t\}_{\ell=1}^N$ with weights
  \begin{align}
      \nu_\ell^{\tau,z}(h)
      =
      \frac{w_\ell\exp\{-|h-h_1(\tau_0)x_\ell^{1:t-1}|^2/(2h_2(\tau_0))\}K_\tau(z-x_\ell^t)}
      {\sum_{m=1}^Nw_m\exp\{-|h-h_1(\tau_0)x_m^{1:t-1}|^2/(2h_2(\tau_0))\}K_\tau(z-x_m^t)}.
  \end{align}
  Therefore,
  \begin{align}
      m_h^{\tau,z}:=m^{\tau,z}_{h^{1:t-1}}
      =\sum_{\ell=1}^N\nu_\ell^{\tau,z}(h)x_\ell^t,
  \end{align}
  and the growth condition \eqref{posterior-growth} is immediate from $|m_h^{\tau,z}|\leq K$.
  For a fixed finite horizon, the following estimate is automatic from bounded finite support. To keep the constant independent of $T$, we assume the finite path support satisfies the uniform cross-covariance bound
  \begin{align}
      \sup_{1\leq t\leq T}\sup_{\rho\in\Delta_N}
      \bigg\|
      \sum_{\ell=1}^N\rho_\ell
      \big(x_\ell^t-\bar x_\rho^t\big)
      \big(x_\ell^{1:t-1}-\bar x_\rho^{1:t-1}\big)^\mt
      \bigg\|_{\rm op}
	      \leq C_{\rm cov}<\infty,\label{finite-cross-cov}
  \end{align}
	  where $C_{\rm cov}$ is independent of $T$, $\Delta_N:=\{\rho\in[0,1]^N:\sum_{\ell=1}^N\rho_\ell=1\}$, $\bar x_\rho^t:=\sum_{\ell=1}^N\rho_\ell x_\ell^t$, and $\bar x_\rho^{1:t-1}:=\sum_{\ell=1}^N\rho_\ell x_\ell^{1:t-1}$.
  Direct differentiation gives the covariance-form identity
  \begin{align}
      \nabla_h m_h^{\tau,z}
      =
      \frac{h_1(\tau_0)}{h_2(\tau_0)}
      \sum_{\ell=1}^N\nu_\ell^{\tau,z}(h)
      (x_\ell^t-m_h^{\tau,z})
      \bigg(x_\ell^{1:t-1}-\sum_{m=1}^N\nu_m^{\tau,z}(h)x_m^{1:t-1}\bigg)^\mt .
  \end{align}
  Hence, by \eqref{finite-cross-cov},
  \begin{align}
      \|\nabla_h m_h^{\tau,z}\|_{\rm op}
      \leq C_{\tau_0}.
  \end{align}
  Therefore,
  \begin{align}
      |m_h^{\tau,z}-m_k^{\tau,z}|
      \leq C_{\tau_0}|h-k|
      \leq C_{\tau_0}(1+|z|)|h-k|.
  \end{align}
  This proves \eqref{posterior-Lip}.
\end{example}

\begin{example}\label{exm:posterior:gaussian}
  Suppose that, for each $t$, $X^{1:t}$ is jointly Gaussian under the data law $\mP$ (possibly with a singular covariance matrix). In this case, the noisy-history posterior kernel $\Pi_{h^{1:t-1}}^{\tau_0,t}$ can be computed directly from the original data distribution. For $t\geq 2$, write
  \begin{align}
      &\mu_{<t}=\mE[X^{1:t-1}],\qquad \mu_t=\mE[X^t],\\
      &\Sigma_{<t,<t}=\Cov(X^{1:t-1}),\qquad
      \Sigma_{t,<t}=\Cov(X^t,X^{1:t-1}),\qquad
      \Sigma_{t,t}=\Cov(X^t).
  \end{align}
  Since $H^{1:t-1}=h_1(\tau_0)X^{1:t-1}+\sqrt{h_2(\tau_0)}Z^{1:t-1}$, the pair $(X^t,H^{1:t-1})$ is jointly Gaussian, with
  \begin{align}
      \mE[H^{1:t-1}]&=h_1(\tau_0)\mu_{<t},\\
      \Cov(H^{1:t-1})&=h_1(\tau_0)^2\Sigma_{<t,<t}+h_2(\tau_0) I_{d(t-1)},\\
      \Cov(X^t,H^{1:t-1})&=h_1(\tau_0)\Sigma_{t,<t}.
  \end{align}
  The matrix $h_1(\tau_0)^2\Sigma_{<t,<t}+h_2(\tau_0) I_{d(t-1)}$ is positive definite because $h_2(\tau_0)>0$. Therefore, by the standard Gaussian conditioning formula,
  \begin{align}
      \Pi_{h^{1:t-1}}^{\tau_0,t}=\N(a_t(h^{1:t-1}),C_t),
  \end{align}
  where
  \begin{align}
      A_t&=h_1(\tau_0)\Sigma_{t,<t}\big(h_1(\tau_0)^2\Sigma_{<t,<t}+h_2(\tau_0) I_{d(t-1)}\big)^{-1},\\
      a_t(h^{1:t-1})&=\mu_t+A_t\big(h^{1:t-1}-h_1(\tau_0)\mu_{<t}\big),\\
      C_t&=\Sigma_{t,t}-h_1(\tau_0)^2\Sigma_{t,<t}\big(h_1(\tau_0)^2\Sigma_{<t,<t}+h_2(\tau_0) I_{d(t-1)}\big)^{-1}\Sigma_{<t,t}.
  \end{align}
  For $t=1$, we use the same notation with the convention $A_1=0$, $a_1=\mu_1$, and $C_1=\Sigma_{1,1}$.
  If $C_t$ is singular, the above Gaussian law is understood as a Gaussian probability measure supported on the affine subspace $a_t(h^{1:t-1})+\operatorname{Range}(C_t)$. In either case, $\Pi_{h^{1:t-1}}^{\tau_0,t}$ is log-concave on its supporting affine subspace, and hence Assumption \ref{ass:stability}-1(b) holds.

  We next verify Assumption \ref{ass:stability}-2. For $X^t\sim \N(a_t(h^{1:t-1}),C_t)$ and the additional Gaussian observation $X^t+\sigma(\tau)\xi=z$, Gaussian conditioning gives
  \begin{align}
      m^{\tau,z}_{h^{1:t-1}}
      &=a_t(h^{1:t-1})+C_t(C_t+\sigma^2(\tau)I_d)^{-1}(z-a_t(h^{1:t-1}))\\
      &=(I_d-B_t^\tau)a_t(h^{1:t-1})+B_t^\tau z,
      \qquad B_t^\tau:=C_t(C_t+\sigma^2(\tau)I_d)^{-1}.
  \end{align}
  Since $C_t$ is positive semidefinite, both $B_t^\tau$ and $I_d-B_t^\tau$ have operator norm bounded by $1$, uniformly over $\tau>0$. Hence
  \begin{align}
      |m^{\tau,z}_{h^{1:t-1}}-m^{\tau,z}_{k^{1:t-1}}|
      &\leq \|A_t\|_{\rm op}|h^{1:t-1}-k^{1:t-1}|,\\
      |m^{\tau,z}_{h^{1:t-1}}|
      &\leq |z|+|\mu_t-h_1(\tau_0)A_t\mu_{<t}|+\|A_t\|_{\rm op}|h^{1:t-1}|.
  \end{align}
  Consequently, if
  \begin{align}
      \sup_{1\leq t\leq T}\big(\|A_t\|_{\rm op}+|\mu_t-h_1(\tau_0)A_t\mu_{<t}|\big)<\infty,
  \end{align}
  then \eqref{posterior-Lip} and \eqref{posterior-growth} hold. For a fixed finite horizon, this is automatic; for a sequence of Gaussian models with increasing horizon, it is precisely the natural uniform stability condition on the Gaussian regression coefficients induced by the noisy-history observation.
\end{example}

\begin{example}\label{exm:posterior:exp}
     This example provides a non-Gaussian scenario in which Assumption \ref{assumption:P} can be verified. In this case, the exact form of $\Pi_h^{\tau_0,t}$ may be challenging to obtain, so we use it directly as an approximation to the clean kernel $\mP$. More specifically, take $T=2$, $d=1$, and suppose
    \begin{align}
        \Pi_h^{\tau_0,2}={\rm Exp}(\lambda(h)),
        \qquad
        \lambda(h)=\frac{1}{1+\ell(h)},
    \end{align}
    where $\ell$ is a nonnegative Lipschitz function. Since the exponential distribution is log-concave on $\mR$, Assumption \ref{ass:stability}-1(b) holds. For simplicity, denote $\lambda=\lambda(h)$ and $\sigma=\sigma(\tau)$. The posterior distribution $\mu^z_h$ is the normal distribution $\N(\mu,\sigma^2)$ truncated on $(0,\infty)$, where $\mu=z-\lambda\sigma^2$, and
    \begin{align}
	    m^{\tau,z}_{h} = \mu + \sigma \frac{\varphi(\mu/\sigma)}{\Phi_{\rm N}(\mu/\sigma)}\geq 0.
    \end{align}
    To investigate the growth condition \eqref{posterior-growth}, we consider the following subcases:
    \begin{enumerate}
\item $\mu\geq 0$ (which implies $z\geq 0$). In this case, we have $\sigma\leq \sqrt{z/\lambda}=\sqrt{z(1+\ell(h))}\leq (z+1+\ell(h))/2$. Hence,
\begin{align}
  0\leq m^{\tau,z}_{h} \leq z + \sigma\sqrt{\frac{2}{\pi}}\leq \bigg(1+\sqrt{\frac{1}{2\pi}}\bigg)(1+z+\ell(h)).
\end{align}
\item $\mu< 0$. In this case, we have $\zeta:=-\mu/\sigma> 0$, and we use \eqref{hbound} to obtain
\begin{align}
    0\leq m^{\tau,z}_{h} \leq \min\bigg\{\sigma\sqrt{\frac{2}{\pi}},\frac{\sigma^2}{\lambda\sigma^2-z}\bigg\}.
\end{align}
If $\sigma^2>2z/\lambda$, use the bound $\sigma^2/(\lambda\sigma^2-z)<2/\lambda=2(1+\ell(h))$; if $\sigma^2\leq 2z/\lambda$, use the bound $\sigma\sqrt{2/\pi}\leq \sqrt{2z(1+\ell(h))/\pi}\leq (z+1+\ell(h))/\sqrt{\pi}$.
    \end{enumerate}
    Combining these bounds and using $\ell(h)\leq C(1+|h|)$, we obtain a constant $C$, independent of $h$, $z$ and $\sigma$, such that
    \begin{align}
        |m^{\tau,z}_{h}|\leq C(1+|z|+|h|).
    \end{align}
We now consider the Lipschitz condition \eqref{posterior-Lip}. At differentiability points of $\ell$, denote $\zeta=-\mu/\sigma=\lambda\sigma-z/\sigma$. From the discussion above,
	$m^{\tau,z}_{h}=\sigma g(\zeta)$, where $g(\zeta)=-\zeta+\varphi(\zeta)/\Phi_{\rm N}(-\zeta)$. By Lemma \ref{lemma:mills},
\begin{align}
   |\partial_\lambda m^{\tau,z}_{h}|
	   = \sigma^2\Big(1+r(\zeta)\big(\zeta-r(\zeta)\big)\Big),
\end{align}
	where $r(\zeta)=\varphi(\zeta)/\Phi_{\rm N}(-\zeta)$ is the Mills-ratio notation used in Lemma \ref{lemma:mills}. Since $|\lambda'(h)|\leq C/(1+\ell(h))^2=C\lambda^2$, if $\zeta\leq 0$ (thus $\lambda\sigma^2\leq z$), then
\begin{align}
    |\partial_h m^{\tau,z}_{h}|
    \leq C\sigma^2\lambda^2
    \leq C(1+|z|).
\end{align}
If $\zeta>0$, then \eqref{hprimebound} implies
\begin{align}
    |\partial_h m^{\tau,z}_{h}|
    \leq C\sigma^2\lambda^2
    \min\bigg\{\frac{1}{(\lambda\sigma - z/\sigma)^2},1-\frac{2}{\pi}\bigg\}.
\end{align}
If $\lambda\sigma<2z/\sigma$, then the right-hand side is bounded by $C\sigma^2\lambda^2\leq C(1+|z|)$. If $\lambda\sigma\geq 2z/\sigma$, then $\lambda\sigma-z/\sigma\geq c\lambda\sigma$ for a universal constant $c>0$, and the right-hand side is bounded by $C$. Thus $|\partial_h m^{\tau,z}_{h}|\leq C(1+|z|)$ at differentiability points of $\ell$. Integrating this estimate along line segments gives
\begin{align}
    |m^{\tau,z}_{h}-m^{\tau,z}_{k}|
    \leq C(1+|z|)|h-k|,
\end{align}
which proves \eqref{posterior-Lip}. In this example, the relevant estimates are uniform in $\sigma=\sigma(\tau)>0$.
\end{example}

The following lemma is used in Example \ref{exm:posterior:exp}; we provide a short proof for completeness.
\begin{lemma}\label{lemma:mills}
Let $\varphi$ and $\Phi_{\rm N}$ denote the density function and cumulative distribution function of the standard normal distribution, respectively. Define $r(\zeta) = \varphi(\zeta)/\Phi_{\rm N}(-\zeta)$. Then, for $\zeta>0$, we have
    \begin{align}
	        &\frac{2}{\zeta+\sqrt{\zeta^2+8}} \leq r(\zeta)-\zeta\leq  \min\bigg\{\frac{1}{\zeta},\sqrt{\frac{2}{\pi}}\bigg\}  \label{hbound}\\
	        &0\leq 1+r(\zeta)\big(\zeta-r(\zeta)\big)\leq \min\bigg\{\frac{1}{\zeta^2},1-\frac{2}{\pi}\bigg\}.\label{hprimebound}
    \end{align}
\end{lemma}
\begin{proof}
	    By classical bounds for the Mills ratio $(1-\Phi_{\rm N}(\zeta))/\varphi(\zeta)$ (e.g., use the lower bounds with $n=1,2$ in Theorem 1 of \citet{GasullUtzet2014ApproximatingMillsRatio}), we obtain
    \begin{align}\label{mills:1}
	        \zeta\leq \frac{\varphi(\zeta)}{\Phi_{\rm N}(-\zeta)}\leq \zeta+\frac{1}{\zeta}.
    \end{align}
	    Moreover, the lower bound in \eqref{hbound} follows directly from the upper bound in (32) of \citet{GasullUtzet2014ApproximatingMillsRatio}. On the other hand, let $g(\zeta) = -\zeta+r(\zeta)$, so $g'(\zeta) = -1+r(\zeta)(r(\zeta)-\zeta)$. Suppose $Z_\zeta \sim \N(0,1)$ is truncated on $(\zeta,\infty)$. Integration by parts then gives
    \begin{align}
	    \mE[Z_\zeta]=& \frac{\int_\zeta^\infty t\varphi(t)dt}{\Phi_{\rm N}(-\zeta)}=\frac{\varphi(\zeta)}{\Phi_{\rm N}(-\zeta)}  \\
	    \mE[Z_\zeta^2 ] =& \frac{\int_\zeta^\infty t^2\varphi(t) dt}{\Phi_{\rm N}(-\zeta)}= \frac{\int_\zeta^\infty -t\varphi'(t) dt}{\Phi_{\rm N}(-\zeta)}\\
	    =&\frac{\zeta\varphi(\zeta)+\Phi_{\rm N}(-\zeta)}{\Phi_{\rm N}(-\zeta)}\\
	    =&\zeta r(\zeta)+1.
    \end{align}
	    In particular, we have $1+\zeta r(\zeta)=\mE[Z^2_\zeta] \geq (\mE[Z_\zeta])^2=r(\zeta)^2$, implying
    \begin{align}
	        r(\zeta)(r(\zeta)-\zeta)\leq 1, \quad \forall \zeta\in \mR.
    \end{align}
    Therefore, when $\zeta>0$,
    \begin{align}
	        r(\zeta)-\zeta \leq r(0) = \sqrt{\frac{2}{\pi}}.\label{mills:2}
    \end{align}
    Combining \eqref{mills:1} and \eqref{mills:2} completes the proof of \eqref{hbound}. It remains to prove the upper bound of \eqref{hprimebound}. By direct computation,
    \begin{align}
	        g''(\zeta) = r(\zeta)\big(2g(\zeta)^2+\zeta g(\zeta) -1 ).
    \end{align}
	    Because $s = 2/(\zeta+\sqrt{\zeta^2+8})$ is the positive root of $2s^2+\zeta s -1$, the lower bound in \eqref{hbound} implies $g''(\zeta)\geq 0$, hence $1+r(\zeta)(\zeta-r(\zeta))=-g'(\zeta)\leq -g'(0)=1-2/\pi$, which is the constant bound in \eqref{hprimebound}. To prove the tail bound $1/\zeta^2$, recall that $1+r(\zeta)(\zeta-r(\zeta))=\Var(Z_\zeta)$; thus, the desired bound is equivalent to $\Var(Y_\zeta)\leq 1$, where $Y_\zeta = \zeta (Z_\zeta-\zeta)\geq 0$. Note that, for $t\geq 0$,
    \begin{align}
\mP(Y_\zeta \geq t) = \mP(Z_\zeta \geq \zeta+t/\zeta)=\frac{\Phi_{\rm N}(-\zeta-t/\zeta)}{\Phi_{\rm N}(-\zeta)} = e^{-H_\zeta(t)},
    \end{align}
	where $H_\zeta(t) = -\log \Phi_{\rm N}(-\zeta-t/\zeta)+{\rm const.}$ satisfies (use $r(\zeta')\geq \zeta'$ for $\zeta'> 0$)
\begin{align}
	    H_\zeta'(t) = r(\zeta+t/\zeta)/\zeta \geq 1+t/\zeta^2>1.
\end{align}
Consequently, $G_\zeta(t):=H_\zeta^{-1}(t)$ is an increasing, 1-Lipschitz function. Because $\mP(Y_\zeta\geq t)=e^{-H_\zeta(t)}=\mP(E\geq H_\zeta(t))=\mP(G_\zeta(E)\geq t)$ for $E\sim {\rm Exp}(1)$, we have
\begin{align}
    \Var(Y_\zeta) = \Var(G_\zeta(E))\leq \Var(E)=1.
\end{align}
This concludes the proof.
\end{proof}

}

\subsection{Assumptions on the approximating networks}

Denote
\begin{align}
    G(\tau, h^{1:t-1},x):=\frac{h_1(\tau)}{h_2(\tau)}m^{\tau,x/h_1(\tau)}_{h^{1:t-1}}
    =\frac{h_1(\tau)}{h_2(\tau)}\frac{\int_{\mR^d}x^t e^{-\frac{|x - x^th_1(\tau)|^2}{2h_2(\tau)}}\Pi_{h^{1:t-1}}^{\tau_0,t}(dx^t)}{\int_{\mR^d}e^{-\frac{|x - x^th_1(\tau)|^2}{2h_2(\tau)}}\Pi_{h^{1:t-1}}^{\tau_0,t}(dx^t)}.
\end{align}
The Tweedie-type decomposition \eqref{score:decomposition} motivates us to construct an appropriate $G_\theta$ that approximates $G$ and define
\begin{equation}
    s_\theta(\tau,h^{1:t-1},x):=-\frac{x}{h_2(\tau)} + G_\theta(\tau,h^{1:t-1},x).
\end{equation}
If $G_\theta$ can be chosen to be Lipschitz continuous and compactly supported in $\{|h^{1:t-1}|\leq R\}\times \{|x|\leq R\}$, then, for any fixed $\delta\in(0,1)$, Assumption \ref{assumption:stheta} is satisfied with $D_{\tep}\sim C_{\tau_0}^2R^2$. Indeed, noting that $h_2(\tau)<1$, we have
\begin{align}
    2x\cdot s_\theta(\tau,h^{1:t-1},x)&=-\frac{2|x|^2}{h_2(\tau)}+2x\cdot G_\theta(\tau,h^{1:t-1},x)\\
    &\leq -2|x|^2 + 2\|G_\theta\|_{\infty}|x|\\
    &\leq -(1+\delta)|x|^2+\frac{\|G_\theta\|_{\infty}^2}{1-\delta}.
\end{align}
By the growth condition \eqref{posterior-growth}, we have $\|G_\theta\|_{\infty}\leq C_{\tau_0}(1+R)$ on this support, and this is exactly Assumption \ref{assumption:stheta}-2 with $D_{\tep}=C C_{\tau_0}^2(1+R^2)$ for a universal constant $C>0$. The Lipschitz conditions in Assumption \ref{assumption:stheta} are also inherited via Lemma \ref{universalapproximation} below.

In fact, we can construct a compactly supported $G_\theta$ that approximates $G$ to any accuracy $\epsilon_{\rm score}$ and has support size $R$ of logarithmic order in $\epsilon_{\rm score}^{-1}$. To prove this, we need an improved version of the universal approximation property of ReLU networks from \citet{Chen2020}:
\begin{lemma}[Lemma 10 in \citet{Chen2020}]\label{universalapproximation}
	    For any $\epsilon>0$, there exists a ReLU network architecture on $\mR^m$ such that, for every scalar-valued 1-Lipschitz function $f$ on $[0,1]^m$ with $f(0)=0$, some realization $\hat f$ of that architecture satisfies $\|f-\hat f\|_{\infty}\leq \epsilon$. Moreover, $\hat f$ is $10m$-Lipschitz with respect to the infinity norm. The scalar-valued statement can be applied coordinatewise to vector-valued targets.
\end{lemma}

\begin{lemma}
   For any $\tau\in [\tau_0,\Ti]$ and $\epsilon_{\rm score}>0$, choose
    \begin{align}
	R=C\sqrt{T}\Big(\log(eT)+\log \big(e+h_2(\tau_0)^{-2}\epsilon_{\rm score}^{-2}\big)\Big)
    \end{align}
	    for a constant $C>0$ independent of $\tau_0$, $\Ti$, $\epsilon_{\rm score}$ and $T$. Then there exists a ReLU network $G_\theta$ such that $G_\theta(\tau,\cdot,\cdot)$ is supported on $B_R:=\{|h^{1:t-1}|\leq R\}\times \{|x|\leq R\}$, and
    \begin{equation}
\sup_{\tau\in [\tau_0,\Ti]}\mE_{H^{1:t-1}\sim \mP^{\tau_0}_{1:t-1}}\big[\mE_{X_\tau^{t}\sim p^{\tau_0}_{t}(\tau,\cdot|H^{1:t-1})}[|G_\theta(\tau,H^{1:t-1},X^{t}_\tau)-G(\tau,H^{1:t-1},X^{t}_\tau)|^2]\big]\leq \epsilon^2_{\rm score}.\label{L2error:complete}
    \end{equation}
\end{lemma}

\begin{proof}
   We first control the tail outside $B_R$. Let
    \[
    H^{1:t-1}=h_1(\tau_0)X^{1:t-1}+\sqrt{h_2(\tau_0)}Z^{1:t-1},
    \qquad
    X_\tau^t=h_1(\tau)X^t+\sqrt{h_2(\tau)}\xi,
    \]
    where $Z^{1:t-1}$ and $\xi$ are independent standard Gaussian random vectors and are independent of $X^{1:t}\sim\mP_{1:t}$. This joint construction has the conditional law used in the lemma: conditionally on $H^{1:t-1}=h^{1:t-1}$, the law of $X^t$ is $\Pi_{h^{1:t-1}}^{\tau_0,t}$, and then $X_\tau^t$ has density $p_t^{\tau_0}(\tau,\cdot|h^{1:t-1})$. Moreover, by Bayes' rule and the definition of $G$, for the regular conditional law of $X^t$ given $(H^{1:t-1},X_\tau^t)$,
    \begin{align}
    G(\tau,h^{1:t-1},x)
    =\frac{h_1(\tau)}{h_2(\tau)}
    \mE\big[X^t\mid H^{1:t-1}=h^{1:t-1},\,X_\tau^t=x\big].
    \end{align}
    Write $A_R:=\{(H^{1:t-1},X_\tau^t)\notin B_R\}$, where $B_R=\{|h^{1:t-1}|\le R\}\times\{|x|\le R\}$. Jensen's inequality gives
    \begin{align}
    &\mE_{H^{1:t-1}}\mE_{X_\tau^{t}\sim p^{\tau_0}_{t}(\tau,\cdot|H^{1:t-1})}
    \Big[\big|G(\tau,H^{1:t-1},X^{t}_\tau)\big|^2\ind_{B_R^c}\Big]\notag\\
    &\qquad\leq
    \bigg(\frac{h_1(\tau)}{h_2(\tau)}\bigg)^2
    \mE\Big[\mE\big[|X^t|^2\mid H^{1:t-1},X_\tau^t\big]\ind_{A_R}\Big]\notag\\
    &\qquad=
    \bigg(\frac{h_1(\tau)}{h_2(\tau)}\bigg)^2
    \mE\big[|X^t|^2\ind_{A_R}\big]\notag\\
    &\qquad\leq
    \frac{1}{h_2(\tau_0)^2}
    \mE\Big[|X^t|^2\big(\ind_{\{|H^{1:t-1}|>R\}}+\ind_{\{|X_\tau^t|>R\}}\big)\Big].
    \end{align}
    In the last inequality, we used $h_1(\tau)\leq1$ and $h_2(\tau)\geq h_2(\tau_0)$ for $\tau\in[\tau_0,\Ti]$.
	    We now bound the two terms on the right-hand side of the preceding display. Assumption \ref{assumption:P}-3 gives
	    \[
	    \sup_{1\le s\le T}\mP(|X^s|>r)\leq Ce^{-cr},\qquad r>0.
	    \]
	    Also, the standard Gaussian tail gives $\mP(|Z^s|>r)+\mP(|\xi|>r)\leq Ce^{-cr^2}\leq Ce^{-cr}$ for $r\geq1$. Since $h_1(\tau_0)\leq1$ and $h_2(\tau_0)\leq1$, if $|H^{1:t-1}|>R$, then for some $s\leq t-1$,
	    \[
	    |X^s|+|Z^s|>R/\sqrt{T}.
	    \]
	    Thus, for $R\geq \sqrt T$,
	    \[
	    \mP(|H^{1:t-1}|>R)\leq CT e^{-cR/\sqrt T}.
	    \]
	    Similarly, $|X_\tau^t|\leq |X^t|+|\xi|$ implies $\mP(|X_\tau^t|>R)\leq Ce^{-cR}$ for $R\geq1$. Therefore, by Cauchy's inequality,
	    \begin{align}
	    \mE\big[|X^t|^2 \ind_{\{|H^{1:t-1}|>R\}}\big]
	    &\leq \big(\mE|X^t|^4\big)^{1/2}\mP(|H^{1:t-1}|>R)^{1/2}
	    \leq C\sqrt{T}e^{-cR/\sqrt{T}},\\
	    \mE\big[|X^t|^2 \ind_{\{|X_\tau^t|>R\}}\big]
	    &\leq \big(\mE|X^t|^4\big)^{1/2}\mP(|X_\tau^t|>R)^{1/2}
	    \leq Ce^{-cR}.
	    \end{align}
    Plugging these bounds into the preceding tail-reduction display yields
    \begin{equation}
	    \mE_{H^{1:t-1}}\mE_{X_\tau^{t}\sim p^{\tau_0}_{t}(\tau,\cdot|H^{1:t-1})}
	    \Big[\big|G(\tau,H^{1:t-1},X^{t}_\tau)\big|^2\ind_{B_R^c}\Big]
	    \leq \frac{C\sqrt{T}}{h_2(\tau_0)^2}e^{-cR/\sqrt{T}}.
	    \label{L2error:outsupport}
	    \end{equation}
	    After increasing the universal constant $C$ in the definition of $R$, the right-hand side of \eqref{L2error:outsupport} is smaller than $\epsilon_{\rm score}^2/2$.

  It remains to approximate on $B_R$. On this compact set, Assumption \ref{assumption:P}-1 and \eqref{posterior-growth} imply that $(h,x)\mapsto G(\tau,h,x)$ is Lipschitz and bounded by $C_{\tau_0}(1+R)$, uniformly over $\tau\in[\tau_0,\Ti]$. By Lemma \ref{universalapproximation}\footnote{With an additional ReLU layer, the support can be mapped from a unit cube of the appropriate dimension to $B_R$. See Appendix B.1 of \citet{Chen2023} for details.}, after rescaling $B_R$ to a unit cube, we can choose $G_\theta$ compactly supported on $B_R$ and satisfying
\begin{equation}
\mE_{H^{1:t-1}}\mE_{X_\tau^{t}\sim p^{\tau_0}_{t}(\tau,\cdot|H^{1:t-1})}\Big[\big|G_\theta(\tau,H^{1:t-1},X^{t}_\tau)-G(\tau,H^{1:t-1},X^{t}_\tau)\big|^2\ind_{B_R}\Big]<\frac{\epsilon_{\rm score}^2}{2}.\label{L2error:onsupport}
\end{equation}
	Combining \eqref{L2error:outsupport} and \eqref{L2error:onsupport} establishes \eqref{L2error:complete}. Since Assumption \ref{assumption:stheta}-2 holds with $D_{\tep}=C C_{\tau_0}^2(1+R^2)$ for a universal constant $C>0$, the above choice of $R$ gives
	\[
		D_{\tep}=C\cdot C_{\tau_0}^2T\Big(\log(eT)+\log(e+h_2(\tau_0)^{-2}\epsilon_{\rm score}^{-2})\Big)^2,
	\]
	which is the dependence stated in Proposition \ref{prop:assumption:stheta}.
\end{proof}

\section{Proof of Proposition \ref{prop:DSM}}\label{app:proof:conditionalDSM}

\begin{proof}
	By definition, for every noisy history $h^{1:t}$,
	\begin{equation}\label{eq:real-score-tau}
	p^{\tau_0}_{t+1}(\tau,x|h^{1:t})=\int_{\mR^d}\phi(\tau,x|x_0)\Pi_{h^{1:t}}^{\tau_0,t+1}(\md x_0).
	\end{equation}
The objective in \eqref{conditional-ESM} can therefore be rewritten as
\begin{align}
	 	&\mE_{H^{1:t}\sim \mP^{\tau_0}_{1:t}}\mE_{X_\tau^{t+1}\sim p^{\tau_0}_{t+1}(\tau,\cdot|H^{1:t})}\bigg|s^{t+1}_\theta(\tau,H^{1:t},X_{\tau}^{t+1})-\nabla_x \log p^{\tau_0}_{t+1}(\tau,X_\tau^{t+1}|H^{1:t})\bigg|^2\\
	 	=&\mE_{H^{1:t}\sim \mP^{\tau_0}_{1:t}}\mE_{X_\tau^{t+1}\sim p^{\tau_0}_{t+1}(\tau,\cdot|H^{1:t})}\big|s^{t+1}_\theta(\tau,H^{1:t},X_{\tau}^{t+1})\big|^2+C\\
	 	&-2\mE_{H^{1:t}\sim \mP^{\tau_0}_{1:t}}\int_{\mR^d}s^{t+1}_\theta(\tau,H^{1:t},x)\cdot \nabla_x p^{\tau_0}_{t+1}(\tau,x|H^{1:t})\md x \\
	 	=&\mE_{H^{1:t}\sim \mP^{\tau_0}_{1:t}}\mE_{X_\tau^{t+1}\sim p^{\tau_0}_{t+1}(\tau,\cdot|H^{1:t})}\big|s^{t+1}_\theta(\tau,H^{1:t},X_{\tau}^{t+1})\big|^2+C\\
        &-2\mE_{H^{1:t}\sim \mP^{\tau_0}_{1:t}}\int_{\mR^d}s^{t+1}_\theta(\tau,H^{1:t},x)\cdot \nabla_x \int_{\mR^d}\phi(\tau,x|x_0)\Pi_{H^{1:t}}^{\tau_0,t+1}(\md x_0) \md x\\
        =&\mE_{X^{1:t+1}\sim \mP_{1:t+1}}\mE_{Z^{1:t}}\mE_{X_\tau^{t+1}\sim \phi(\tau,\cdot|X^{t+1})}\big|s^{t+1}_\theta(\tau,H^{1:t},X_{\tau}^{t+1})\big|^2+C\\
        &-2\mE_{X^{1:t+1}\sim \mP_{1:t+1}}\mE_{Z^{1:t}}\mE_{X_\tau^{t+1}\sim \phi(\tau,\cdot|X^{t+1})}\big[s^{t+1}_\theta(\tau,H^{1:t},X_\tau^{t+1})\cdot \nabla_x \log \phi(\tau,X_\tau^{t+1}|X^{t+1})\big]\\
        =& \mE_{X^{1:t+1}\sim \mP_{1:t+1}}\mE_{Z^{1:t}}\mE_{X_\tau^{t+1}\sim \phi(\tau,\cdot|X^{t+1})}\bigg|s^{t+1}_\theta(\tau,H^{1:t},X_{\tau}^{t+1})-\nabla _x \log \phi(\tau,X_\tau^{t+1}|X^{t+1})\bigg|^2+C.
\end{align}
Here, $C$ is a generic constant independent of $\theta$. Thus, the two objectives differ by a $\theta$-independent constant and have the same set of minimizers.
\end{proof}

\section{Proof of Theorem \ref{thm:AWbounds}}\label{app:proof-AWbounds}

\subsection{Preliminary lemmas}
To obtain Wasserstein bounds from total variation bounds, we first provide a uniform-in-time moment estimate for an SDE under a {\it strong dissipativity condition}; see Section 1.2 of \citet{Khasminskii2012}. Let $f:[0,\infty)\times \mR^d\to \mR^d$, and suppose that there exist $D,\delta>0$ such that, for any $t,x$,
\begin{equation}\label{fcondition}
x\cdot f(t,x) \leq -(1+\delta)|x|^2+D.
\end{equation}
We consider the following SDE:
\begin{align}
&\md X_t = (X_t+f(t,X_t))\md t +\sqrt{2}\md B_t,\\
&X_0\sim \N(0,I).
\end{align}


\begin{lemma}
	There exist constants $c_1,c_2, c_3>0$, which are independent of $t$ and $D$, such that
	\begin{equation}\label{tail:est}
	\sup_{t\geq 0}\mE[e^{c_1|X_t|^2}]\leq c_2e^{c_3 D}.	\end{equation}
\end{lemma}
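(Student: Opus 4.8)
The plan is to run a Foster--Lyapunov drift argument with the exponential test function $V(x)=e^{c_1|x|^2}$, choosing $c_1>0$ small enough that the strong dissipativity \eqref{fcondition} turns the generator into a genuine geometric drift toward the origin. Writing $b(t,x):=x+f(t,x)$, the generator of the diffusion (recall the $\sqrt2$ noise, so the diffusion matrix is $2I$) is $(\cL V)(t,x)=b(t,x)\cdot\nabla V(x)+\Delta V(x)$. First I would record the elementary identities
\[
\nabla V(x)=2c_1x\,V(x),\qquad \Delta V(x)=\big(2c_1 d+4c_1^2|x|^2\big)V(x),
\]
so that, using $b(t,x)\cdot x=|x|^2+x\cdot f(t,x)$ and the dissipativity $x\cdot f(t,x)\le-(1+\delta)|x|^2+C$ valid for $|x|>R_0$,
\[
(\cL V)(t,x)=\Big[2c_1\,b(t,x)\cdot x+2c_1 d+4c_1^2|x|^2\Big]V(x)\le\Big[\big(4c_1^2-2c_1\delta\big)|x|^2+2c_1(C+d)\Big]V(x).
\]

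Next I would fix $c_1\in\big(0,\min\{\delta/2,\,1/2\}\big)$. The constraint $c_1<\delta/2$ makes the coefficient $4c_1^2-2c_1\delta=:-\kappa$ strictly negative, so that for $|x|$ beyond a $t$-independent radius $R_1\ge R_0$ the bracket is bounded above by $-\lambda$ for some fixed $\lambda>0$; the constraint $c_1<1/2$ guarantees $\mE[V(X_0)]=(1-2c_1)^{-d/2}<\infty$ since $X_0\sim\N(0,I_{d\times d})$. On the complementary compact core $\{|x|\le R_1\}$ the coefficients of $\cL V$ are bounded uniformly in $t\ge0$ (using the local Lipschitz/linear-growth regularity of $f$ that already underlies well-posedness of the SDE, guaranteed here by Assumption \ref{assumption:stheta}), so $\cL V\le K_1$ there for a $t$-independent $K_1$. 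Combining the two regions yields the pointwise geometric drift inequality
\[
(\cL V)(t,x)\le -\lambda V(x)+K,\qquad (t,x)\in[0,\infty)\times\mR^d,
\]
with $\lambda>0$ and $K:=K_1+\lambda e^{c_1R_1^2}$ both independent of $t$.

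The remaining step passes from this pointwise inequality to the uniform moment bound. Introducing the exit times $\tau_n=\inf\{t\ge0:|X_t|\ge n\}$, Dynkin's formula for the stopped process gives $\mE[V(X_{t\wedge\tau_n})]=\mE[V(X_0)]+\mE\int_0^{t\wedge\tau_n}(\cL V)(s,X_s)\,\md s$. Dropping the nonpositive $-\lambda V$ contribution and invoking Fatou's lemma first yields non-explosion and the a priori finiteness $\mE[V(X_t)]<\infty$ for every fixed $t$; rerunning the same estimate with a slightly larger exponent $c_1'>c_1$ also makes $s\mapsto\mE[|(\cL V)(s,X_s)|]$ locally integrable, which legitimizes differentiating the map $h(t):=\mE[V(X_t)]$ under the expectation. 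The drift bound then gives the closed differential inequality $h'(t)\le-\lambda h(t)+K$; multiplying by the integrating factor $e^{\lambda t}$ and integrating produces
\[
h(t)\le e^{-\lambda t}h(0)+\frac{K}{\lambda}\big(1-e^{-\lambda t}\big)\le\max\Big\{h(0),\tfrac{K}{\lambda}\Big\},
\]
uniformly in $t\ge0$, which is exactly \eqref{tail:est} with $c_2:=\max\{(1-2c_1)^{-d/2},\,K/\lambda\}$.

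I expect the main obstacle to be the measure-theoretic bookkeeping in this last step rather than the drift computation: one must rule out explosion, verify that $h(t)$ is finite and absolutely continuous, and justify exchanging expectation with the time-integral of $\cL V$, all \emph{uniformly} in $t$. The cleanest route I see is to extract non-explosion and plain finiteness from the localized Dynkin identity via Fatou (retaining only the benign $+K$ term), and only afterward invoke the full inequality with the $-\lambda h$ feedback to close the Gr\"onwall estimate. Keeping every constant $(c_1,\lambda,K,R_1,K_1)$ manifestly free of $t$ is precisely what upgrades the pointwise drift into the desired \emph{uniform}-in-time bound.
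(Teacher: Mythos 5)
Your proposal is correct and follows essentially the same route as the paper: the Lyapunov function $e^{c_1|x|^2}$ with $c_1<\delta/2$, the drift computation turning the dissipativity \eqref{fcondition} into a negative quadratic coefficient $4c_1^2-2c_1\delta$, localization via stopping times, and a Gr\"onwall/integrating-factor step to get a $t$-uniform bound, followed by Fatou. Your explicit remark that one also needs $c_1<1/2$ so that $\mE[e^{c_1|X_0|^2}]<\infty$ for the Gaussian initial condition is a small point the paper leaves implicit, but otherwise the two arguments coincide.
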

\begin{proof}
Let $V(x)=e^{\theta |x|^2}$, where $\theta>0$ will be determined later. The generator calculation gives
\begin{align}
\mathcal L_tV(x)
=&2\theta V(x)x\cdot \big(x+f(t,x)\big)+2\theta dV(x)+4\theta^2|x|^2V(x)\notag\\
\leq&\Big((4\theta^2-2\theta\delta)|x|^2+2\theta(D+d)\Big)V(x).
\end{align}
Choose $\theta<\min\{\delta/2,1/2\}$, set $R^2=2(D+d)/(\delta-2\theta)$, and denote
\[
I(x):=\Big((4\theta^2-2\theta\delta)|x|^2+2\theta(D+d)\Big)V(x).
\]
With $\gamma:=2\theta(D+d)$, we have $I(x)\leq -\gamma V(x)$ when $|x|\geq R$. When $|x|\leq R$,
\begin{align}
I(x)\leq 2\theta(D+d)V(R)
=2\theta(D+d)e^{\frac{2\theta(D+d)}{\delta-2\theta}}
\leq c'_1e^{c'_2D},
\end{align}
where $c'_1,c'_2$ are independent of $t$ and $D$. Hence
\[
\mathcal L_tV(x)\leq -\gamma V(x)+c'_3e^{c'_4D}.
\]
By applying It\^o's formula up to the localization times $\tau_n:=\inf\{t\geq0:|X_t|\geq n\}$ and then letting $n\to\infty$ in the localized Dynkin formula, we obtain that $m(t):=\mE[V(X_t)]$ satisfies
\[
m(t)\leq m(0)+\int_0^t\big(-\gamma m(s)+c'_3e^{c'_4D}\big)\md s.
\]
By Gronwall's inequality,
\[
\mE[V(X_t)]\leq e^{-\gamma t}\mE[V(X_0)]+\frac{c'_3e^{c'_4D}}{\gamma}\leq c'_5e^{c'_6D},
\]
where the constants $c'_1,\ldots,c'_6$ are independent of $t$ and $D$. This proves \eqref{tail:est} after renaming constants.
\end{proof}

\begin{corollary}\label{coro:moment-X-R}
	    There exist $C,c_1,c_2>0$, independent of $t$, $R$ and $D$, such that, for all $t,R>0$,
    \begin{align}
\mE[|X_t|^2\ind_{\{|X_t|\geq R\}}]\leq C e^{-c_1R^2+c_2 D}.
    \end{align}
\end{corollary}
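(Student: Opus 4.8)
The plan is to deduce this tail estimate directly from the uniform exponential-moment bound \eqref{tail:est} established in the preceding lemma, which supplies constants $a,b>0$ with $\sup_{t\geq 0}\mE[e^{a|X_t|^2}]\leq b$ (I rename the lemma's constants $c_1,c_2$ to $a,b$ to avoid collision with the new constants claimed in the corollary, which we are free to choose). The key observation is that multiplying by the indicator $I_{\{|X_t|\geq R\}}$ lets us trade a small portion of the Gaussian-type exponential weight for an explicit factor $e^{-cR^2}$, while the remaining exponential weight is still integrable uniformly in $t$.

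Concretely, first I would record the elementary scalar inequality $u\leq \tfrac{1}{\lambda}e^{\lambda u}$, valid for all $u\geq 0$ and any $\lambda>0$ (it follows from $e^y\geq y$). Applying it with $u=|X_t|^2$ gives the pointwise bound $|X_t|^2\leq \tfrac{1}{\lambda}e^{\lambda|X_t|^2}$. Second, I would fix any $\lambda\in(0,a)$ and split the exponent as $e^{\lambda|X_t|^2}=e^{a|X_t|^2}e^{-(a-\lambda)|X_t|^2}$, so that on the event $\{|X_t|\geq R\}$ the negative factor is controlled by $e^{-(a-\lambda)R^2}$. This yields the pointwise estimate
\begin{align}
|X_t|^2 I_{\{|X_t|\geq R\}}\leq \frac{1}{\lambda}e^{-(a-\lambda)R^2}\,e^{a|X_t|^2}.
\end{align}

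Finally, taking expectations and invoking \eqref{tail:est} gives
\begin{align}
\mE\big[|X_t|^2 I_{\{|X_t|\geq R\}}\big]\leq \frac{1}{\lambda}e^{-(a-\lambda)R^2}\,\mE[e^{a|X_t|^2}]\leq \frac{b}{\lambda}\,e^{-(a-\lambda)R^2},
\end{align}
so the claim holds with $c_1:=a-\lambda$ and $c_2:=b/\lambda$, both manifestly independent of $t$ and $R$. I do not anticipate a genuine obstacle here: once \eqref{tail:est} is in hand the argument is a routine Chebyshev/Orlicz-type truncation, and the only point requiring a little care is to choose $\lambda$ strictly below $a$ so that a strictly positive exponential rate $a-\lambda$ survives in the final bound.
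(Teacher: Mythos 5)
Your proof is correct, and it is essentially the intended argument: the paper states this corollary immediately after the uniform exponential-moment lemma without supplying a proof, treating it as the routine consequence you have written out. Your derivation via $u\leq \tfrac{1}{\lambda}e^{\lambda u}$ and the split $e^{\lambda|X_t|^2}=e^{a|X_t|^2}e^{-(a-\lambda)|X_t|^2}$, with $\lambda$ chosen strictly below $a$, is exactly the standard way to extract the $e^{-c_1R^2}$ tail from \eqref{tail:est}, and the constants you obtain are independent of $t$ and $R$ as required.
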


\begin{proof}
	   Let $\theta>0$ be the exponential-moment constant from the preceding lemma. Since
	    \[
	    \sup_{u\geq0}u^2e^{-\theta u^2/2}<\infty,
	    \]
	    we have
	    \[
	    |x|^2\ind_{\{|x|\geq R\}}\leq C e^{-\theta R^2/2}e^{\theta |x|^2},\qquad x\in\mR^d.
	    \]
    Taking expectations and using \eqref{tail:est} gives the result after renaming constants.
\end{proof}

{  Due to the iterative structure of the adapted Wasserstein metric, we also need the following elementary lemma:

\begin{lemma}\label{lemma:iteration}
Let \((A_t)_{t\ge 0}\) be a nonnegative sequence with \(A_0=0\). Assume
that, for some integer \(m\ge 1\) and some \(C_1\ge 0\) and
\(C_{2,1},
\ldots,
C_{2,m}\in[0,1]\), all independent of $t$, we have
\[
A_{t+1}\le A_t+C_1\sqrt{A_t}+\sum_{i=1}^m C_{2,i},
\qquad t\ge 0.
\]
Then, for every integer \(t\ge 0\),
\[
A_t\le
\left(1+t\max\left\{\frac{C_1}{2},1\right\}\right)^2
\sum_{i=1}^m C_{2,i}^{2^{-(t-1)}}.
\]
\end{lemma}

\begin{proof}
Set \(E_0:=0\) and, for \(t\ge 1\),
\[
E_t:=\sum_{i=1}^m C_{2,i}^{2^{-(t-1)}}.
\]
Since \(0\le C_{2,i}\le 1\) for each \(i\), we have, for all \(t\ge 0\),
\[
E_t\le E_{t+1},
\qquad
E_1\le E_{t+1},
\qquad
\sqrt{E_t}\le E_{t+1}.
\]
Indeed, the last inequality is trivial when \(t=0\), while for \(t\ge 1\),
by subadditivity of the square root,
\[
\sqrt{E_t}
\le \sum_{i=1}^m C_{2,i}^{2^{-t}}
=E_{t+1}.
\]

Define \((M_t)_{t\ge 0}\) by \(M_0=0\) and
\[
M_{t+1}=M_t+C_1\sqrt{M_t}+1.
\]
We first claim that
\[
A_t\le M_tE_t,
\qquad t\ge 0.
\]
The case \(t=0\) follows from \(A_0=M_0E_0=0\). If the claim holds at time
\(t\), then
\[
\begin{aligned}
A_{t+1}
&\le M_tE_t+C_1\sqrt{M_t}\sqrt{E_t}+E_1 \\
&\le \bigl(M_t+C_1\sqrt{M_t}+1\bigr)E_{t+1}
 =M_{t+1}E_{t+1}.
\end{aligned}
\]
This proves the claim by induction.

It remains to estimate \(M_t\). For \(t\ge 1\), let \(K:=\max\{C_1/2,1\}\) and
\[
B_t:=\bigl(1+K(t-1)\bigr)^2.
\]
Then \(B_1=M_1=1\), and, for every \(t\ge 1\),
\[
B_{t+1}-B_t=2K\sqrt{B_t}+K^2\ge C_1\sqrt{B_t}+1.
\]
Since the map \(x\mapsto x+C_1\sqrt{x}+1\) is increasing on \([0,\infty)\),
induction gives \(M_t\le B_t\) for all \(t\ge 1\). Combining this with the
claim yields, for every \(t\ge 1\),
\[
A_t\le M_tE_t\le B_tE_t
=\bigl(1+K(t-1)\bigr)^2
\sum_{i=1}^m C_{2,i}^{2^{-(t-1)}}.
\]
This is the desired estimate.
\end{proof}

}

\subsection{The proof}

\begin{proof}[Proof of Theorem \ref{thm:AWbounds}]
	\ \\
	(1). We first prove the pointwise conditional estimate in \eqref{conditional-bound}. Fix $t\leq T-1$ and noisy/generated histories $h^{1:t},y^{1:t}$. With initial conditions $\bar X_0\sim p^{\tau_0}_{t+1}(\Ti,\cdot|h^{1:t})$ and $\bar Y_0\sim \N(0,I)$, consider the following two reverse SDEs:
\begin{align}
\md \bar X_\tau
&=\big[\bar X_\tau+2\nabla_x\log p^{\tau_0}_{t+1}(\Ti-\tau,\bar X_\tau|h^{1:t})\big]\md \tau+\sqrt{2}\md B_\tau,\\
\md \bar Y_\tau
&=\big[\bar Y_\tau+2s_\theta^{t+1}(\Ti-\tau,y^{1:t},\bar Y_\tau)\big]\md \tau+\sqrt{2}\md B_\tau.
\end{align}
{ Because of early stopping, we first estimate the distance between $\mP^{\tau_0}_{h^{1:t}}$ and $\mQ^{\Ti-\tau_0}_{y^{1:t}}$. In other words, we consider the two SDEs above only on the truncated time interval $[0,\Ti-\tau_0]$}. Then $\cL(\bar X_{\Ti-\tau_0})=\mP^{\tau_0}_{h^{1:t}}$ and $\cL(\bar Y_{\Ti-\tau_0})=\mQ^{\Ti-\tau_0}_{y^{1:t}}$. By the proof of total variation bounds for continuous-time DDPMs \citep[see, e.g.,][]{Song2021b,chen2023sampling}, we have
\begin{align}
\TV(\mP^{\tau_0}_{h^{1:t}},\mQ^{\Ti-\tau_0}_{y^{1:t}})^2
\leq& C\bigg(e^{-2\Ti}\big(1+M^{\tau_0}_2(h^{1:t})^2\big)\\
&\quad+\int_{\tau_0}^{\Ti}\mE_{X_\tau^{t+1}\sim p^{\tau_0}_{t+1}(\tau,\cdot|h^{1:t})}\Big[
\big|s^{t+1}_\theta(\tau,y^{1:t},X_\tau^{t+1})\notag\\
&\hspace{12em}{}-\nabla_x\log p^{\tau_0}_{t+1}(\tau,X_\tau^{t+1}\mid h^{1:t})\big|^2\Big]\md\tau\bigg)\\
\leq& C_{\tau_0}\bigg(e^{-2\Ti}\big(1+M^{\tau_0}_2(h^{1:t})^2\big)+\E^{\tau_0}(\tau_0,h^{1:t})
\\
&+|h^{1:t}-y^{1:t}|^2\int_{\tau_0}^{\Ti}\mE_{X_\tau^{t+1}\sim p^{\tau_0}_{t+1}(\tau,\cdot|h^{1:t})}[1+|X_\tau^{t+1}|^2]\md\tau\bigg)\\
\leq& C_{\tau_0}\Big(e^{-2\Ti}\big(1+M^{\tau_0}_2(h^{1:t})^2\big)+\E^{\tau_0}(\tau_0,h^{1:t})
+\Ti(1+M^{\tau_0}_2(h^{1:t})^2)|h^{1:t}-y^{1:t}|^2\Big).
\end{align}
Therefore, recalling that $\sqrt{a+b+c}\leq \sqrt a+\sqrt b+\sqrt c$ for $a,b,c\geq0$, we have
\begin{align}
\TV(\mP^{\tau_0}_{h^{1:t}},\mQ^{\Ti-\tau_0}_{y^{1:t}})
\leq C_{\tau_0}\big(e^{-\Ti}\big(1+M^{\tau_0}_2(h^{1:t})\big)+\E^{\tau_0}(\tau_0,h^{1:t})^{1/2}
\!+\Ti^{1/2}(1+M^{\tau_0}_2(h^{1:t}))|h^{1:t}-y^{1:t}|\big).
\end{align}
To bound $\cW_2(\mP^{\tau_0}_{h^{1:t}},\mQ^{\Ti-\tau_0}_{y^{1:t}})$ in terms of their total variation distance, we use the following classical result relating these two metrics \citep[see Theorem 6.15 of][]{villani2008}: for any two probability measures $\mu,\nu$ and any $R>0$,
\[
\cW_2^2(\mu,\nu)\leq R^2\TV(\mu,\nu)+\int_{|x|\geq R}|x|^2(\mu+\nu)(\md x).
\]
Taking $\mu=\mP^{\tau_0}_{h^{1:t}}$ and $\nu=\mQ^{\Ti-\tau_0}_{y^{1:t}}$, we have
\begin{align}
\cW_2^2(\mP^{\tau_0}_{h^{1:t}},\mQ^{\Ti-\tau_0}_{y^{1:t}})
\leq& R^2\TV(\mP^{\tau_0}_{h^{1:t}},\mQ^{\Ti-\tau_0}_{y^{1:t}})
+\mE_{\mP^{\tau_0}_{h^{1:t}}}\big[|H^{t+1}|^2\ind_{\{|H^{t+1}|\geq R\}}\big]\notag\\
&+\mE_{\mQ^{\Ti-\tau_0}_{y^{1:t}}}\big[|Y^{t+1}|^2\ind_{\{|Y^{t+1}|\geq R\}}\big].
\end{align}
We now estimate the three terms on the right-hand side. By Assumption \ref{assumption:P}-3 and the definition of the noisy-history kernel,
\begin{align}
\mE_{\mP^{\tau_0}_{h^{1:t}}}\big[|H^{t+1}|^2\ind_{\{|H^{t+1}|\geq R\}}\big]
\leq Ce^{-cR}\big(1+E_c^{\tau_0}(h^{1:t})\big).
\end{align}
By Assumption \ref{assumption:stheta}-2, the reverse SDE defining $\mQ^{\Ti-\tau_0}_{y^{1:t}}$ satisfies the dissipativity condition \eqref{fcondition} with $D=D_{\tep}$. Hence Corollary \ref{coro:moment-X-R} gives
\begin{align}
\mE_{\mQ^{\Ti-\tau_0}_{y^{1:t}}}\big[|Y^{t+1}|^2\ind_{\{|Y^{t+1}|\geq R\}}\big]
\leq C e^{-c_1R^2+c_2D_{\tep}}.
\end{align}
Plugging the preceding estimates into the total-variation-to-Wasserstein bound, we obtain
\begin{align}
\cW_2^2(\mP^{\tau_0}_{h^{1:t}},\mQ^{\Ti-\tau_0}_{y^{1:t}})
\leq C_{\tau_0}\Big(&R^2e^{-\Ti}\big(1+M^{\tau_0}_2(h^{1:t})\big)\\
&+R^2\E^{\tau_0}(\tau_0,h^{1:t})^{1/2}
+R^2\Ti^{1/2}(1+M^{\tau_0}_2(h^{1:t}))|h^{1:t}-y^{1:t}|\\
&+e^{-cR}\big(1+E_c^{\tau_0}(h^{1:t})\big)+e^{-c_1R^2+c_2D_{\tep}}\Big).
\end{align}
Taking $R=(2c_2D_{\tep}/c_1)^{1/2}$ and absorbing constants yields \eqref{conditional-bound}. The first marginal bound is the same argument with void history and the initial score condition \eqref{score-error-initial}.

\ \\
{ 
(2). We first record the smoothing estimate
\begin{equation}
\cA\cW_2^2(\mP,\mP^{\tau_0})\leq (K')^T\tau_0.
\end{equation}
Indeed, under the canonical observation coupling $\bar H^{1:T}=h_1(\tau_0)\bar X^{1:T}+\sqrt{h_2(\tau_0)}\bar Z^{1:T}$, Assumption \ref{assumption:P}-3 gives
\begin{align}
\mE|\bar X^{1:t}-\bar H^{1:t}|^2\leq Ct\tau_0,\qquad t=1,2,\cdots,T,
\end{align}
where $C$ is independent of $\tau_0$ and $T$\footnote{This estimate cannot be used to bound the adapted Wasserstein metric because $\bar X$ is not causal in $\bar H$.}. To construct a bicausal coupling recursively, suppose a coupling of $(X^{1:t},H^{1:t})$ has been constructed with marginals $\mP_{1:t}$ and $\mP^{\tau_0}_{1:t}$. Let $\Lambda_h^{\tau_0,t}$ be the regular conditional law of the clean history given the noisy history under the canonical observation, i.e.,
\[
\Lambda_h^{\tau_0,t}:=\cL(\bar X^{1:t}\in\cdot\mid \bar H^{1:t}=h),
\qquad
\bar H^{1:t}=h_1(\tau_0)\bar X^{1:t}+\sqrt{h_2(\tau_0)}\bar Z^{1:t}.
\]
Given $(X^{1:t},H^{1:t})=(x,h)$ in the recursively constructed coupling, sample $U^{1:t}\sim\Lambda_h^{\tau_0,t}$ conditionally independently of $X^{1:t}$ given $H^{1:t}=h$. Then $(U^{1:t},H^{1:t})$ has the same joint law as the canonical clean/noisy pair $(\bar X^{1:t},\bar H^{1:t})$. We next specify the coupling of the next clean coordinates. For each pair $(x,u)\in\mR^{dt}\times\mR^{dt}$, Assumption \ref{assumption:P}-4 gives
\[
\cW_2^2(\mP_{x^{1:t}},\mP_{u^{1:t}})\leq K^2|x^{1:t}-u^{1:t}|^2.
\]
By the standard measurable selection theorem for optimal transport kernels on Polish spaces, we may choose a measurable stochastic kernel
\[
\Gamma_t(x,u;\md v,\md \tilde v)\in \Pi(\mP_{x^{1:t}},\mP_{u^{1:t}})
\]
such that
\[
\int |v-\tilde v|^2\Gamma_t(x,u;\md v,\md \tilde v)
\leq K^2|x^{1:t}-u^{1:t}|^2.
\]
Equivalently, one may choose $\varepsilon$-optimal kernels and then let $\varepsilon\downarrow0$. Conditional on the histories taking values $(x,h,u)$, sample $(V,\tilde V)$ from $\Gamma_t(x,u;\cdot)$. Then $V\mid X^{1:t}\sim\mP_{X^{1:t}}$, $\tilde V\mid U^{1:t}\sim\mP_{U^{1:t}}$, and, by construction,
\[
\mE[|V-\tilde V|^2\mid X^{1:t},H^{1:t},U^{1:t}]
\leq K^2|X^{1:t}-U^{1:t}|^2.
\]
Finally set $X^{t+1}=V$ and $H^{t+1}=h_1(\tau_0)\tilde V+\sqrt{h_2(\tau_0)}\xi$ with an independent standard Gaussian $\xi$. The conditional law of $V$ given $(X^{1:t},H^{1:t})=(x,h)$ is $\mP_{x^{1:t}}$, and, by the disintegration formula defining $\Lambda_h^{\tau_0,t}$, the conditional law of $H^{t+1}$ given $H^{1:t}=h$ is $\mP^{\tau_0}_{h^{1:t}}$. Thus the recursive kernel has the correct one-step marginals and defines a bicausal coupling. If $A_t:=\mE|X^{1:t}-H^{1:t}|^2$ denotes the cost of the recursively constructed coupling, then
\begin{align}
A_{t+1}
&\leq A_t+C\mE|X^{1:t}-U^{1:t}|^2+C\tau_0\notag\\
&\leq A_t+C\mE|X^{1:t}-H^{1:t}|^2
   +C\mE|U^{1:t}-H^{1:t}|^2+C\tau_0\notag\\
&\leq (1+C)A_t+Ct\tau_0.
\end{align}
In the last step, $(U^{1:t},H^{1:t})$ has the canonical joint law of the clean and noisy histories, so the preceding canonical estimate applies. Since $A_1\leq C\tau_0$, induction yields $A_T\leq C T(1+C)^T\tau_0\leq (K')^T\tau_0$ after increasing $K'>1$.

It remains to compare $\mP^{\tau_0}$ with $\mQ^{\Ti-\tau_0}$. By the standard measurable selection theorem for optimal transport kernels on Polish spaces, we may choose a coupling kernel
\[
\gamma_0^\epsilon(\md h^1,\md y^1)\in \Pi(\mP^{\tau_0}_1,\mQ^{\Ti-\tau_0}_1)
\]
with cost at most $\cW_2^2(\mP^{\tau_0}_1,\mQ^{\Ti-\tau_0}_1)+\epsilon$, and, for each $t=1,\ldots,T-1$, measurable coupling kernels
\[
\gamma_t^\epsilon(h^{1:t},y^{1:t};\md h^{t+1},\md y^{t+1})
\in \Pi(\mP^{\tau_0}_{h^{1:t}},\mQ^{\Ti-\tau_0}_{y^{1:t}})
\]
with conditional cost at most
$\cW_2^2(\mP^{\tau_0}_{h^{1:t}},\mQ^{\Ti-\tau_0}_{y^{1:t}})+\epsilon$.
Define the joint law
\[
\pi^\epsilon(\md h^{1:T},\md y^{1:T})
:=\gamma_0^\epsilon(\md h^1,\md y^1)
\prod_{t=1}^{T-1}
\gamma_t^\epsilon(h^{1:t},y^{1:t};\md h^{t+1},\md y^{t+1}).
\]
By the characterization of bicausal couplings through regular conditional kernels,
\[
\pi^\epsilon\in\Pi_{\rm bc}(\mP^{\tau_0},\mQ^{\Ti-\tau_0}).
\]
Let
\[
A_t:=\mE_{\pi^\epsilon_{1:t}}|H^{1:t}-Y^{1:t}|^2.
\]
Integrating \eqref{conditional-bound} under $\pi^\epsilon_{1:t}$, using \eqref{score-error}, the moment bounds inherited by $\mP^{\tau_0}$ from Assumption \ref{assumption:P}-3, and Cauchy's inequality, gives
\begin{align}
A_{t+1}\leq A_t+C_{\tau_0}D_{\tep}\Ti^{1/2}\sqrt{A_t}
+\epsilon+\mathfrak b_{\tep}
+C_{\tau_0}D_{\tep}\Ti^{1/2}\epsilon_{\rm score}.
\end{align}
Set $A_0=0$ and regard the first-marginal estimate as the $t=0$ step; it gives the same recursion for $A_1$. Set $a_T=2^{-(T-1)}$. Lemma \ref{lemma:iteration} then yields, after letting $\epsilon\downarrow0$,
\begin{align}
\cA\cW_2^2(\mP^{\tau_0},\mQ^{\Ti-\tau_0})
\leq &C_{\tau_0}(1+T^2D_{\tep}^2\Ti)
\mathfrak b_{\tep}^{a_T}\notag\\
&+C_{\tau_0}(1+T^2D_{\tep}^2\Ti)
\Ti^{a_T/2}(C_{\tau_0}D_{\tep}\epsilon_{\rm score})^{a_T}.
\end{align}
Since $D_{\tep}>1$ and $e^{-c'D_{\tep}}\leq e^{-c'\sqrt{D_{\tep}}}$, we have
\[
\mathfrak b_{\tep}^{a_T}
\leq C_{\tau_0}\Big((D_{\tep}e^{-\Ti})^{a_T}+e^{-c'a_T\sqrt{D_{\tep}}}\Big),
\]
after decreasing $c'$ if necessary. Using also $\Ti>1$ and $1+T^2D_{\tep}^2\Ti\leq CT^2D_{\tep}^2\Ti$, we obtain
\begin{align}
\cA\cW_2^2(\mP^{\tau_0},\mQ^{\Ti-\tau_0})
\leq &C_{\tau_0}T^2D_{\tep}^2\Ti\big(D_{\tep}e^{-\Ti}\big)^{a_T}
+C_{\tau_0}T^2D_{\tep}^2\Ti e^{-c'a_T\sqrt{D_{\tep}}}\notag\\
&+C_{\tau_0}T^2D_{\tep}^2\Ti^{1+a_T/2}\big(D_{\tep}\epsilon_{\rm score}\big)^{a_T}.
\end{align}
Finally, the adapted Wasserstein triangle inequality and the preceding smoothing estimate imply
\[
\cA\cW_2^2(\mP,\mQ^{\Ti-\tau_0})
\leq (K')^T\tau_0+2\cA\cW_2^2(\mP^{\tau_0},\mQ^{\Ti-\tau_0}),
\]
and the factor $2$ is absorbed into $K'$ and $C_{\tau_0}$. This proves \eqref{eq:AWbounds}.
}
\end{proof}

{ 
\section{Proof of Proposition \ref{prop:exp-in-T}}\label{app:proof:sec:2}

\begin{proof}[Proof of Proposition \ref{prop:exp-in-T}]
We first improve the smoothing estimate between $\mP$ and $\mP^{\tau_0}$ under the $\mP$-contraction condition. To construct a bicausal coupling recursively, write
\begin{align}
    e_t:=\mE[|X^t-H^t|^2],\qquad t=1,2,\cdots,T.
\end{align}
For $t=1$, take the canonical coupling $H^1=h_1(\tau_0)X^1+\sqrt{h_2(\tau_0)}Z^1$. By Assumption \ref{assumption:P}-3, $e_1\leq C\tau_0$. Suppose now that the coupling has been constructed up to time $t$ with the correct marginals. Let $\Lambda_{h^{1:t}}^{\tau_0,t}$ be the regular conditional law of the clean history given the noisy history under the canonical observation. Given $(X^t,H^{1:t})=(x,h^{1:t})$, sample $U^{1:t}\sim\Lambda_{h^{1:t}}^{\tau_0,t}$ conditionally independently of the current clean coordinate $X^t$ given $H^{1:t}=h^{1:t}$. Then $(U^{1:t},H^{1:t})$ has the canonical clean/noisy joint law. In this Markovian setting, applying the measurable coupling-kernel argument from the smoothing estimate in the proof of Theorem \ref{thm:AWbounds} to the pair of kernels $\mP_x$ and $\mP_{u^t}$, and using $\cW_2(\mP_x,\mP_{u^t})\leq \kappa_P|x-u^t|$, we may conditionally sample $(V,\tilde V)$ so that $V|X^t\sim\mP_{X^t}$, $\tilde V|U^t\sim\mP_{U^t}$, and
\begin{align}
    \mE[|V-\tilde V|^2\mid X^t,H^{1:t},U^{1:t}]
    \leq \kappa_P^2 |X^t-U^t|^2.
\end{align}
Set $X^{t+1}=V$ and $H^{t+1}=h_1(\tau_0)\tilde V+\sqrt{h_2(\tau_0)}\xi$ with an independent standard Gaussian $\xi$. By the same disintegration argument as above, this preserves the $\mP$ and $\mP^{\tau_0}$ marginals and gives a bicausal coupling. Since $h_2(\tau_0)\lesssim \tau_0$, $(1-h_1(\tau_0))^2\lesssim\tau_0$, and $\mE|\tilde V|^2\leq C$ by Assumption \ref{assumption:P}-3, for any small $\rho>0$,
\begin{align}
    e_{t+1}\leq (1+\rho)\kappa_P^2\mE|X^t-U^t|^2+C_\rho\tau_0.
\end{align}
Moreover, because $(U^t,H^t)$ has the canonical clean/noisy joint law, $\mE|U^t-H^t|^2\leq C\tau_0$. Hence, for any small $\lambda>0$,
\begin{align}
    \mE|X^t-U^t|^2
    \leq (1+\lambda)e_t+C_\lambda\tau_0.
\end{align}
Choose $\rho,\lambda>0$ so that $\eta_P:=(1+\rho)(1+\lambda)\kappa_P^2<1$. Then
\begin{align}
    e_{t+1}\leq \eta_P e_t+C_{\kappa_P}\tau_0.
\end{align}
Induction gives $e_t\leq C_{\kappa_P}\tau_0$ for all $t$, and therefore
\begin{align}
    \cA\cW_2^2(\mP,\mP^{\tau_0})\leq \sum_{t=1}^T e_t
    \leq C_{\kappa_P}T\tau_0.
\end{align}

We next compare $\mP^{\tau_0}$ with $\mQ^{\Ti-\tau_0}$. Construct a bicausal coupling recursively between $\mP^{\tau_0}$ and $\mQ^{\Ti-\tau_0}$. Let $\pi_1$ be an optimal coupling of $\mP^{\tau_0}_1$ and $\mQ_1^{\Ti-\tau_0}$, and denote
\begin{align}
    B_t:=\mE_{\pi_t}[|H^t-Y^t|^2],\qquad t=1,2,\cdots,T.
\end{align}
Then $B_1\leq \Delta_{\tep}$. Suppose that $\pi_t$ has been constructed up to time $t$. For each pair of histories $(h^{1:t},y^{1:t})$, by the triangle inequality and the Markovian contraction of $\mQ^{\Ti-\tau_0}$ in Assumption \ref{ass:contraction},
\begin{align}
    \cW_2(\mP_{h^{1:t}}^{\tau_0},\mQ_{y^{1:t}}^{\Ti-\tau_0})
    &\leq \cW_2(\mP_{h^{1:t}}^{\tau_0},\mQ_{h^{1:t}}^{\Ti-\tau_0})
      +\cW_2(\mQ_{h^{1:t}}^{\Ti-\tau_0},\mQ_{y^{1:t}}^{\Ti-\tau_0})\notag\\
    &\leq \cW_2(\mP_{h^{1:t}}^{\tau_0},\mQ_{h^{1:t}}^{\Ti-\tau_0})+\kappa_Q |h^t-y^t|.
\end{align}
Choose $\lambda=2\kappa_Q^2/(1-\kappa_Q^2)$. Then
\begin{align}
    (a+b)^2\leq (1+\lambda)a^2+(1+\lambda^{-1})b^2,\qquad a,b\geq 0,
\end{align}
gives
\begin{align}
	    \cW_2^2(\mP_{h^{1:t}}^{\tau_0},\mQ_{y^{1:t}}^{\Ti-\tau_0})
	    \leq \bar C_{\kappa_Q} \cW_2^2(\mP_{h^{1:t}}^{\tau_0},\mQ_{h^{1:t}}^{\Ti-\tau_0})+\eta_{\kappa_Q} |h^t-y^t|^2,
    \label{contractive-one-step-bound}
\end{align}
where
\begin{align}
	    \bar C_{\kappa_Q}:=\frac{1+\kappa_Q^2}{1-\kappa_Q^2},\qquad
    \eta_{\kappa_Q}:=\frac{1+\kappa_Q^2}{2}<1.
\end{align}
As in the proof of Theorem \ref{thm:AWbounds}, using almost-optimal couplings for $\cW_2(\mP_{h^{1:t}}^{\tau_0},\mQ_{y^{1:t}}^{\Ti-\tau_0})$ and concatenating them with $\pi_t$ yields a bicausal coupling on the first $t+1$ coordinates. Integrating \eqref{contractive-one-step-bound} under $\pi_t$ and using the fact that the $H^{1:t}$-marginal is $\mP_{1:t}^{\tau_0}$, we get
\begin{align}
    B_{t+1}
	    &\leq \bar C_{\kappa_Q} \mE_{H^{1:t}\sim \mP_{1:t}^{\tau_0}}\big[\cW_2^2(\mP_{H^{1:t}}^{\tau_0},\mQ_{H^{1:t}}^{\Ti-\tau_0})\big]+\eta_{\kappa_Q} B_t \notag\\
	    &\leq \bar C_{\kappa_Q} \Delta_{\tep}+\eta_{\kappa_Q} B_t.
\end{align}
By induction,
\begin{align}
	    B_t\leq \left(1+\frac{\bar C_{\kappa_Q}}{1-\eta_{\kappa_Q}}\right)\Delta_{\tep},
    \qquad t=1,2,\cdots,T.
\end{align}
The adapted Wasserstein cost of the constructed bicausal coupling is bounded by
\begin{align}
    \mE_{\pi}[|H^{1:T}-Y^{1:T}|^2]
    =\sum_{t=1}^T B_t
    \leq C_{\kappa_Q} T\Delta_{\tep}.
\end{align}
Taking the infimum over bicausal couplings proves $\cA\cW_2^2(\mP^{\tau_0},\mQ^{\Ti-\tau_0})\leq C_{\kappa_Q} T\Delta_{\tep}$. Combining this estimate with the smoothing bound above and the adapted Wasserstein triangle inequality proves
\begin{align}
    \cA\cW_2^2(\mP,\mQ^{\Ti-\tau_0})
    \leq C_{\kappa_P}T\tau_0+C_{\kappa_Q}T\Delta_{\tep},
\end{align}
after absorbing the universal factor from the squared triangle inequality into the constants.
\end{proof}

}

\section{Proof of results in Section \ref{stability}}\label{app:proof:stability}

\begin{proof}[Proof of Lemma \ref{predictable-projection}]
			By taking successive differences, $(\vartheta\cdot S)_t=(\bar\vartheta \cdot S)_t$, $\mP$-a.s., for every $t=1,2,\cdots, T$ if and only if $\vartheta_t^\mt \triangle S^t = \bar\vartheta_t^\mt \triangle S^t$, $\mP$-a.s., for every $t=1,2,\cdots,T-1$, which in turn is equivalent to
		\begin{equation}
		\mE_\mP [|(\vartheta-\bar\vartheta)_t^\mt \triangle S^t|^2]=0.
		\end{equation}
			The proof is complete because
		\begin{align}
		0=& \mE_\mP [|(\vartheta_t-\bar\vartheta_t)^\mt \triangle S^t|^2]\\
		 =& \mE_\mP \big[(\vartheta_t-\bar\vartheta_t)^\mt \triangle S^t (\triangle S^t)^\mt(\vartheta_t-\bar\vartheta_t)     \big]\\
		 =& \mE_\mP \big[(\vartheta_t-\bar\vartheta_t)^\mt \mE_\mP[\triangle S^t (\triangle S^t)^\mt|\F_t](\vartheta_t-\bar\vartheta_t)     \big]\\
		 =& \mE_\mP \big[(\vartheta_t-\bar\vartheta_t)^\mt \Pi^S_t(\Pi^S_t)^\mt (\vartheta_t-\bar\vartheta_t)  \big]\\
		 =& \mE_\mP [|\Pi^S_t(\vartheta_t-\bar\vartheta_t)|^2 ]. \qedhere
		\end{align}
\end{proof}

\begin{proof}[Proof of Proposition \ref{prop:closedness}]
		Convexity follows from the convexity of $\mathcal K$ and the linearity of the stochastic integral, so we only need to prove the closedness. Suppose $(\vartheta^n\cdot S)_T\to Y$ in $L^2(\mP)$ for some $\vartheta ^n \in \Theta_{\mathcal K,\mP}$ and $Y\in L^2(\mP)$. Then $\{(\vartheta^n\cdot S)_T\}_{n\geq 1}$ is a Cauchy sequence in $L^2(\mP)$. Consider the semimartingale decomposition of $S$, $S^t=s^1+A_t+M_t$, where $A_1=M_1=0$, $\triangle M_t:=S^{t+1}-\mE_\mP[S^{t+1}|\F_{t}]$, and $\triangle A_t:=\mE_\mP[\triangle S^t|\F_{t}]$, $t=1,2,\cdots, T-1$. For any $\vartheta\in L^2_{\mP}(S)$ and $t=1,2,\cdots, T-1$, we have
		\begin{equation}\label{recursive-S}
			(\vartheta \cdot S)_{t+1} = (\vartheta \cdot S)_t+\vartheta_t^\mt \triangle A_t + \vartheta_t^\mt\triangle M_t.
		\end{equation}
	   For any $n,m\geq 1$, because the third term is orthogonal to the first two, we have
	   \begin{equation}
	   \mE_\mP\Big[\big|\big((\vartheta^n-\vartheta^m)\cdot S \big)_{t+1}	\big|^2\Big]=\mE_\mP\Big[\big| ((\vartheta^n-\vartheta^m) \cdot S)_{t}+(\vartheta^n_{t}-\vartheta^m_{t})^\mt \triangle A_t\big|^2 \Big]+\mE_\mP\Big[\big|(\vartheta^n_{t}-\vartheta^m_{t})^\mt \triangle M_t\big|^2 \Big].
	   \end{equation}
	   Starting from $t=T-1$, because $\{(\vartheta^n\cdot S)_T\}_{n\geq 1}$ is Cauchy in $L^2(\mP)$, we know $\mE_\mP\Big[\big|(\vartheta^n_{T-1}-\vartheta^m_{T-1})^\mt \triangle M_{T-1}\big|^2 \Big]\to 0$ as $n,m\to\infty$. From \eqref{ND}, it follows that
	   \begin{align}
	   \mE_\mP[|\Pi^S_{T-1}(\vartheta^n_{T-1}-\vartheta^m_{T-1})|^2]=&\mE_\mP\Big[(\vartheta^n_{T-1}-\vartheta^m_{T-1})^\mt \mE_\mP[\triangle S^{T-1}(\triangle S^{T-1})^\mt |\F_{T-1}](\vartheta^n_{T-1}-\vartheta^m_{T-1})\Big]	\\
	   &\leq \frac{1}{\delta}\mE_\mP\Big[(\vartheta^n_{T-1}-\vartheta^m_{T-1})^\mt \mE_\mP[\triangle M_{T-1}\triangle M_{T-1}^\mt |\F_{T-1}](\vartheta^n_{T-1}-\vartheta^m_{T-1})\Big]	\\
	   &=  \frac{1}{\delta}\mE_\mP\Big[\big|(\vartheta^n_{T-1}-\vartheta^m_{T-1})^\mt \triangle M_{T-1}\big|^2 \Big]\\
	   &\to 0.
	   \end{align}
	Thus $\{\Pi^S_{T-1}\vartheta ^n_{T-1}\}_{n\geq 1}$ is Cauchy in $L^2(\mP)$. Equivalently, $\{(\vartheta^n_{T-1} )^\mt\triangle S^{T-1}\}_{n\geq 1}$ is Cauchy in $L^2(\mP)$, and hence, by \eqref{recursive-S}, so is $\{(\vartheta^n\cdot S)_{T-1}\}_{n\geq 1}$. Repeating this argument backward, we conclude that $\{\Pi^S_t\vartheta^n_t\}_{n\geq 1}$ is Cauchy in $L^2(\mP)$ for every $t=1,2,\cdots,T-1$. Denote its $\F_t$-measurable $L^2(\mP)$-limit by $\xi_t$. Since there are finitely many $t$, up to a common subsequence, the convergence holds almost surely for every $t$. Since $\mathcal K$ is compact and $\vartheta^n_t\in\mathcal K$, we have $\xi_t\in\Pi^S_t\mathcal K$, $\mP$-almost surely. The multifunction $\Gamma_t(\omega):=\{\theta\in\mathcal K:\Pi^S_t(\omega)\theta=\xi_t(\omega)\}$ is $\F_t$-measurable with nonempty compact values, so the Kuratowski--Ryll-Nardzewski measurable selection theorem yields an $\F_t$-measurable selection $\widehat\vartheta_t\in\Gamma_t$. By the boundedness of $\mathcal K$ and Assumption \ref{assumption-S}, $\widehat\vartheta\in\Theta_{\mathcal K,\mP}$. Moreover, for every $t=1,2,\cdots,T-1$,
	\begin{align}
	\mE_\mP\big[|(\vartheta^n_t-\widehat\vartheta_t)^\mt\triangle S^t|^2\big]
	=\mE_\mP\big[|\Pi^S_t\vartheta^n_t-\xi_t|^2\big]\to 0.
	\end{align}
	Consequently,
	\begin{align}
	\big\|(\vartheta^n\cdot S)_T-(\widehat\vartheta\cdot S)_T\big\|_{L^2(\mP)}
	\leq\sum_{t=1}^{T-1}\big\|(\vartheta^n_t-\widehat\vartheta_t)^\mt\triangle S^t\big\|_{L^2(\mP)}
	\to 0.
	\end{align}
	By uniqueness of the $L^2(\mP)$-limit, $Y=(\widehat\vartheta\cdot S)_T\in G(\Theta_{\mathcal K,\mP})$.
\end{proof}

\begin{proof}[Proof of Theorem \ref{DPP:thm}] 
Choose full-$\mP_{1:t}$-measure Borel sets $H_t$ that are invariant under the conditional kernels and on which the conditional future second moments are finite. On the augmented state space $\{t\}\times\mR\times H_t$, with wealth transition $w'=w+\theta^\mt\triangle S^t$, the problem is a finite-horizon Borel model with compact action set $\mathcal K$, zero running costs, and nonnegative terminal cost $|c-w|^2$. Assumption \ref{assumption-S}, boundedness of $\mathcal K$, backward induction, and dominated convergence make each Bellman criterion finite, Borel measurable, and continuous in $(w,\theta)$ for each fixed history; hence the measurable maximum theorem gives Borel minimizing selectors. Propositions 8.2 and 8.5 of \citet{BertsekasShreve1978} then yield \eqref{DPP-1}--\eqref{DPP-2} and the optimal nonrandomized Markov policy; extending the selectors arbitrarily outside $\mR\times H_t$ gives \eqref{hattheta} and the stated recursive optimizer.
\end{proof}

\begin{proof}[Proof of Theorem \ref{DPP-stability}]
For simplicity, we use $V$ and $\tilde V$ to denote $V(\cdot;c,\mP)$ and $V(\cdot;c,\mQ)$, respectively. We prove \eqref{DPP-stability-eq} by backward induction. The terminal case is clear from the terminal condition. Suppose the estimate is known at $t+1$. Because $\pi\in \Pi_{\rm bc}(\mP,\mQ)$, for $\pi_{1:t}$-almost every $(s^{1:t},\tilde s^{1:t})$ we have $\pi_{s^{1:t},\tilde s^{1:t}}\in \Pi(\mP_{s^{1:t}}, \mQ_{\tilde s^{1:t}})$ \citep[cf. Proposition 5.1 of][]{BBLZ2017}. Therefore, considering a near-optimal $\vartheta$ in \eqref{DPP-2} gives
	\begin{align}
	|V(t,w,s^{1:t})-\tilde V(t,\tilde w,\tilde s^{1:t})|\leq & \sup_{\vartheta\in \mathcal K}\mE_{\pi_{s^{1:t},\tilde s^{1:t}}}\big|V(t+1,w+\vartheta^\mt(S^{t+1}-s^t),(s^{1:t},S^{t+1})) 	\\
	&-\tilde V(t+1,\tilde w+\vartheta^\mt (\tilde S^{t+1}-\tilde s^{t}),(\tilde s^{1:t},\tilde S^{t+1}))\big|\label{induction-t}
	\end{align}
Note that in this proof, we adopt the notation $S\sim \mP$, $\tilde S\sim \mQ$, and lowercase letters denote realized paths used in regular conditional kernels. Moreover, $\vartheta\in \mathcal K$ is uniformly bounded by assumption. We thus derive from \eqref{induction-t} and the induction hypothesis that
\begin{align}
		&|V(t,w,s^{1:t})-\tilde V(t,\tilde w,\tilde s^{1:t})|\\
		\leq{} &C\mE_{\pi_{s^{1:t},\tilde s^{1:t}}}\Bigg[
		\big(|w-\tilde w|+|s^t-\tilde s^t|+|S^{t+1}-\tilde S^{t+1}|\big)\notag\\
		&\qquad\times\big(1+c+|w|+|\tilde w|+|s^t|+|\tilde s^t|
		+|S^{t+1}|+|\tilde S^{t+1}|\notag\\
        &\qquad\qquad\quad +M_2^\mP(s^{1:t},S^{t+1})
        +M_2^{\mQ}(\tilde s^{1:t},\tilde S^{t+1})\big)\notag\\
		&\quad+\big(1+c+|w-\tilde w|+|s^t-\tilde s^t|
		+|S^{t+1}-\tilde S^{t+1}|+|w|+|\tilde w|\notag\\
        &\qquad\quad +|s^t|+|\tilde s^t|+|S^{t+1}|+|\tilde S^{t+1}|
        +M_2^\mP(s^{1:t},S^{t+1})\notag\\
        &\qquad\quad +M_2^{\mQ}(\tilde s^{1:t},\tilde S^{t+1})\big)
		\Big(\mE_{\pi_{S^{1:t+1},\tilde S^{1:t+1}}}
		|S^{t+2:T}-\tilde S^{t+2:T}|^2\Big)^{1/2}\notag\\
		&\quad+\mE_{\pi_{S^{1:t+1},\tilde S^{1:t+1}}}
		|S^{t+2:T}-\tilde S^{t+2:T}|^2\Bigg]\\
		=:{}&J_1+J_2+J_3.
\end{align}
Let $A_t:=|w-\tilde w|+|s^t-\tilde s^t|$ and $D_t:=D^\pi_t(s^{1:t},\tilde s^{1:t})$. We estimate $J_i$ as follows:
\begin{align}
 J_1&\leq C\Big\{A_tB_t+\big(B_t+A_t\big)D_t+D_t^2\Big\},\\
 J_2&\leq C\mE_{\pi_{s^{1:t},\tilde s^{1:t}}}\Bigg[\bigg(B_t+A_t+|S^{t+1}|+|\tilde S^{t+1}|+M_2^\mP(s^{1:t},S^{t+1})\\
 &\qquad\qquad\qquad\qquad +M_2^{\mQ}(\tilde s^{1:t},\tilde S^{t+1})+|S^{t+1}-\tilde S^{t+1}|\bigg)\\
 &\times \Big(\mE_{\pi_{S^{1:t+1},\tilde S^{1:t+1}}}|S^{t+2:T}-\tilde S^{t+2:T}|^2\Big)^{1/2}\Bigg]\\
 &\leq C\Big\{\big(B_t+A_t\big)D_t+D_t^2\Big\}.\\
 J_3&=C \mE_{\pi_{s^{1:t},\tilde s^{1:t}}}\Big|S^{t+2:T}-\tilde S^{t+2:T}\Big|^2\leq CD_t^2.
   \end{align}
Here, we used Cauchy-Schwarz, the tower property, and the bounds
\[
\mE_{\pi_{s^{1:t},\tilde s^{1:t}}}\big[\big(M_2^\mP(s^{1:t},S^{t+1})\big)^2\big]\leq \big(M_2^\mP(s^{1:t})\big)^2,\qquad
\mE_{\pi_{s^{1:t},\tilde s^{1:t}}}\big[\big(M_2^{\mQ}(\tilde s^{1:t},\tilde S^{t+1})\big)^2\big]\leq \big(M_2^{\mQ}(\tilde s^{1:t})\big)^2.
\]
Combining the bounds for $J_1$, $J_2$ and $J_3$, we establish \eqref{DPP-stability-eq}. Take $t=1$, $w=\tilde w=0$ and $\tilde s^1=s^1$ in \eqref{DPP-stability-eq}. For any such coupling,
\[
M_2^{\mQ}(s^1)\leq M_2^\mP(s^1)+D^\pi_1(s^1,s^1),
\]
and hence
\[
|V(1,0,s^1;c,\mP)-V(1,0,s^1;c,\mQ)|
\leq C\Big\{\big(1+|s^1|+c+M_2^\mP(s^1)\big)D^\pi_1(s^1,s^1)+\big(D^\pi_1(s^1,s^1)\big)^2\Big\}.
\]
Taking the infimum over all $\pi\in \Pi_{\rm bc}(\mP,\mQ)$ gives \eqref{stability-initial}.
\end{proof}

\begin{proof}[Proof of Corollary \ref{coro:MVstability}]
A standard verification argument gives
\[
V(1,0,s^1;c,\mQ)
=\inf_{\vartheta\in \Theta_{\mathcal K,\mQ}}
\mE_\mQ|c-(\vartheta\cdot S)_T|^2.
\]
Thus, for $\mu\in\{\mP,\mQ\}$,
\begin{equation}\label{vstar-V}
v^*(\mu)=\sup_{a\in \mR_+}\bigg\{-\frac{\gamma}{2}V\Big(1,0,s^1;\frac{1}{\gamma}+a,\mu\Big)+\frac{1}{2\gamma}+a\bigg\}.
\end{equation}
Note that for any $\vartheta\in \Theta_{\mathcal K,\mQ}$,
\begin{align}
|\mE_\mQ(\vartheta\cdot S)_T|&\leq \mE_\mQ|(\vartheta\cdot S)_T|\\
&\leq \mE_\mQ\sum_{t=1}^{T-1}|\vartheta_t||S^{t+1}-S^t|\\
&\leq C\mE_\mQ|S^{1:T}|\\
&\leq C\Big(\cW_2(\mP,\mQ)+\big(\mE_\mP|S^{1:T}|^2\big)^{1/2}\Big)\\
&\leq C.
\end{align}
Here and below, $C$ denotes a generic constant that depends only on $T$, $K$ and $M_S$, and it may vary from line to line. Importantly, it does not depend on $\mQ$ as long as $\cA\cW_2(\mP,\mQ)<1$. Consequently, for $\mu\in\{\mP,\mQ\}$,
\begin{equation}
V(1,0,s^1;c,\mu)\geq c^2-2Cc,\quad \forall c>0.
\end{equation}
 Therefore, the maximum in \eqref{vstar-V} is attained in a bounded interval $[0,\tilde L]$, and we conclude from \eqref{stability-initial} that
 \begin{align}
 |v^*(\mP)-v^*(\mQ)|&\leq \sup_{a\in [0,\tilde L]}\bigg\{\frac{\gamma}{2}\Big|V\Big(1,0,s^1;\frac{1}{\gamma}+a,\mP\Big)	-V\Big(1,0,s^1;\frac{1}{\gamma}+a,\mQ\Big)	\Big|\bigg\}\\
 &\leq C\cA\cW_2(\mP,\mQ),
 \end{align}
completing the proof.
\end{proof}

{ 
\begin{proof}[Proof of Lemma \ref{lemma:fixed-policy-stability}]
We first record two deterministic estimates. Since $\phi_t$ is $\mathcal K$-valued,
\begin{equation}\label{gain-linear-growth}
|(\phi\cdot s)_T|\leq C|s^{1:T}|.
\end{equation}
Moreover, \eqref{regular-feedback-primitive} implies the following local Lipschitz estimate:
\begin{equation}\label{gain-local-lip}
|(\phi\cdot s)_T-(\phi\cdot \tilde s)_T|
\leq C\big(1+|s^{1:T}|^{T-1}+|\tilde s^{1:T}|^{T-1}\big)|s^{1:T}-\tilde s^{1:T}|.
\end{equation}
Indeed, writing $d_t:=|(\phi\cdot s)_t-(\phi\cdot \tilde s)_t|$ and $\Delta s^t=s^{t+1}-s^t$, we have
\begin{align}
d_{t+1}
&\leq d_t+|\phi_t((\phi\cdot s)_t,s^{1:t})-\phi_t((\phi\cdot \tilde s)_t,\tilde s^{1:t})||\Delta s^t|\\
&\quad+K|(\Delta s^t)-(\Delta\tilde s^t)|\\
&\leq (1+C |s^{1:T}|)d_t+C(1+|s^{1:T}|)|s^{1:T}-\tilde s^{1:T}|.
\end{align}
Starting from $d_1=0$ and iterating over $t=1,\ldots,T-1$ gives
\[
d_T\leq C(1+|s^{1:T}|^{T-1})|s^{1:T}-\tilde s^{1:T}|.
\]
Interchanging $s$ and $\tilde s$ gives \eqref{gain-local-lip}. Combining \eqref{gain-linear-growth} and \eqref{gain-local-lip} also yields
\begin{equation}\label{gain-square-local-lip}
|(\phi\cdot s)_T^2-(\phi\cdot \tilde s)_T^2|
\leq C\big(1+|s^{1:T}|^{T}+|\tilde s^{1:T}|^{T}\big)|s^{1:T}-\tilde s^{1:T}|.
\end{equation}

Let $\eta>0$ be arbitrary. Choose $\pi\in\Pi(\mP,\mQ)$ such that
\[
D_\pi\leq \cW_2(\mP,\mQ)+\eta,
\quad
D_\pi:=\Big(\mE_\pi|S^{1:T}-\tilde S^{1:T}|^2\Big)^{1/2}.
\]
For any $m\in[0,T]$, Cauchy-Schwarz, Assumption \ref{assumption-S}, and \eqref{price-moment-fixed-policy} imply
\begin{equation}\label{holder-weighted-AW}
\mE_\pi\Big[\big(1+|S^{1:T}|^m+|\tilde S^{1:T}|^m\big)|S^{1:T}-\tilde S^{1:T}|\Big]
\leq CD_\pi.
\end{equation}
Indeed, since $2m\leq 2T$, the moment bounds under $\mP$ and $\mQ$ give
\[
\Big(\mE_\pi\big(1+|S^{1:T}|^m+|\tilde S^{1:T}|^m\big)^2\Big)^{1/2}\leq C,
\]
where $C$ depends only on $T$, $M_S$ and $M_S'$. Thus \eqref{holder-weighted-AW} follows.

Set
\[
X:=(\phi\cdot S)_T,\qquad \tilde X:=(\phi\cdot \tilde S)_T.
\]
By \eqref{gain-local-lip} and \eqref{holder-weighted-AW} with $m=T-1$,
\begin{equation}\label{mean-fixed-policy-bound}
|\mE_\mP X-\mE_\mQ\tilde X|
\leq \mE_\pi|X-\tilde X|
\leq CD_\pi.
\end{equation}
By \eqref{gain-square-local-lip} and \eqref{holder-weighted-AW} with $m=T$,
\begin{equation}\label{second-fixed-policy-bound}
|\mE_\mP X^2-\mE_\mQ\tilde X^2|
\leq CD_\pi.
\end{equation}
Finally, \eqref{gain-linear-growth}, Assumption \ref{assumption-S}, and \eqref{price-moment-fixed-policy} imply
\[
|\mE_\mP X|+|\mE_\mQ\tilde X|\leq C,
\]
and therefore, by \eqref{mean-fixed-policy-bound},
\begin{equation}\label{mean-square-fixed-policy-bound}
\big|(\mE_\mP X)^2-(\mE_\mQ\tilde X)^2\big|\leq CD_\pi.
\end{equation}
The bounds \eqref{mean-fixed-policy-bound}--\eqref{mean-square-fixed-policy-bound} give
\[
|v(\mP;\phi)-v(\mQ;\phi)|\leq CD_\pi.
\]
Letting $\eta\to 0$ completes the proof.
\end{proof}
}

{ 
\begin{proof}[Proof of Corollary \ref{coro:regular-policy-transfer}]
By Lemma \ref{lemma:fixed-policy-stability},
\[
v(\mP;\phi^\epsilon)\geq v(\mQ;\phi^\epsilon)-C\cW_2(\mP,\mQ).
\]
Using the $\epsilon$-suboptimality of $\phi^\epsilon$ under $\mQ$ and Corollary \ref{coro:MVstability}, we obtain
\begin{align}
v(\mP;\phi^\epsilon)
&\geq v^*(\mQ)-\epsilon-C\cA\cW_2(\mP,\mQ)-C\cW_2(\mP,\mQ)\\
&\geq v^*(\mP)-\epsilon-C\cA\cW_2(\mP,\mQ).
\end{align}
Here we use the bound $\cW_2(\mP,\mQ)\leq \cA\cW_2(\mP,\mQ)$.
\end{proof}

\begin{proof}[Proof of Lemma \ref{lemma:log-price-transfer}]
Fix $\pi\in\Pi_{\rm bc}(\mu,\nu)$ and let $(X,Y)\sim\pi$. Since $\Phi$ and $\Phi^{-1}$ are componentwise and preserve the time filtration, $(\Phi,\Phi)_\#\pi$ is a bicausal coupling between $\Phi_\#\mu$ and $\Phi_\#\nu$. For scalar $u,v$,
\[
|e^u-e^v|^2\leq e^{2R}|u-v|^2+4e^{2|u|}\ind_{\{|u|>R\}}+4e^{2|v|}\ind_{\{|v|>R\}}.
\]
Summing over time and coordinates, taking expectations, and using
\[
e^{2|z|}\ind_{\{|z|>R\}}\leq e^{-(a-2)R}e^{a|z|}
\]
gives
\[
\mE_{(\Phi,\Phi)_\#\pi}[|S^{1:T}-\tilde S^{1:T}|^2]
\leq e^{2R}\mE_\pi[|X^{1:T}-Y^{1:T}|^2]+C_{d,T}K_a e^{-(a-2)R}.
\]
Taking the infimum over $\pi\in\Pi_{\rm bc}(\mu,\nu)$ proves the first claim. The case $\delta:=\cA\cW_2(\mu,\nu)=0$ is immediate. For $0<\delta\leq1$, choose $R=(2/a)\log(\delta^{-1})$ to obtain the stated H\"older bound. Finally, if $a>2T$, the same exponential-moment bound implies $\mE[|\Phi(X)^{1:T}|^{2T}]<\infty$ and $\mE[|\Phi(Y)^{1:T}|^{2T}]<\infty$.
\end{proof}
}

\bibliographystyle{informs2014}
\bibliography{ref.bib}

\end{document}